\DeclareMathOperator*{\ext}{\,\triangleleft\,}
	\newtheorem{definition}{Definition}[]
	\newtheorem{lemma}{Lemma}[]
	\newtheorem{remark}{Remark}[]
\patchcmd\algocf@Vline{\vrule}{\vrule \kern-0.4pt}{}{}
\patchcmd\algocf@Vsline{\vrule}{\vrule \kern-0.4pt}{}{}
\definecolor{darkgrey}{gray}{0.3}
\definecolor{commentcolor}{gray}{0.5}
\crefname{algocf}{Algorithm}{Algorithms}
\newcommand{\defeq}{\mathrel{:\mkern-0.25mu=}}
\newcommand{\cX}{\mathcal{X}}
\newcommand{\cY}{\mathcal{Y}}
\newcommand{\cJ}{\mathcal{J}}
\newcommand{\cK}{\mathcal{K}}
\newcommand{\cS}{\mathcal{S}}
\newcommand{\cD}{\mathcal{D}}
\newcommand{\bbR}{\mathbb{R}}
\renewcommand{\vec}[1]{\bm{#1}}
\DeclareMathOperator{\img}{Im}
\DeclareMathOperator{\co}{co}
\DeclareMathOperator{\dep}{depth}
\newcommand{\infos}[1]{\mathcal{I}_{#1}}
\newcommand{\emptyseq}{\varnothing}
\newcommand\circled[1]{\tikz[baseline=(char.base)]{
            \node[shape=circle,draw,inner sep=.5pt] (char) {\fontsize{6.5pt}{7pt}\selectfont\normalfont{#1}};}}
\newcommand{\symp}[1]{\Delta^{\!#1}}
\newcommand{\cV}{\mathcal{V}}
\newcommand{\conn}{\rightleftharpoons}
\DeclareMathOperator{\rele}{\bowtie}
\newcommand{\bbone}{\mathds{1}}
\let\cref@old@stepcounter\stepcounter
\def\stepcounter#1{%
  \cref@old@stepcounter{#1}%
  \cref@constructprefix{#1}{\cref@result}%
  \@ifundefined{cref@#1@alias}%
    {\def\@tempa{#1}}%
    {\def\@tempa{\csname cref@#1@alias\endcsname}}%
  \protected@edef\cref@currentlabel{%
    [\@tempa][\arabic{#1}][\cref@result]%
    \csname p@#1\endcsname\csname the#1\endcsname}}
\title{Polynomial-Time Computation of Optimal Correlated Equilibria in Two-Player Extensive-Form Games with Public Chance Moves and Beyond}
\author{Gabriele Farina\\
Computer Science Department\\
Carnegie Mellon University\\
\texttt{gfarina@cs.cmu.edu}
\And Tuomas Sandholm\\
Computer Science Department, CMU\\
Strategy Robot, Inc.\\
Strategic Machine, Inc.\\
Optimized Markets, Inc.\\
\texttt{sandholm@cs.cmu.edu}
}
\g@addto@macro \normalsize {%
 \addtolength\abovedisplayskip{-3pt}%
 \addtolength\belowdisplayskip{-3pt}%
}
\tikzset{cross/.style={path picture={
  \draw[black]
(path picture bounding box.south east) -- (path picture bounding box.north west) (path picture bounding box.south west) -- (path picture bounding box.north east);
}}}
\tikzstyle{chanode}=[fill=white,draw=black,circle,cross,inner sep=.8mm]
\tikzstyle{pl1node}=[fill=black,draw=black,circle,inner sep=.5mm]
\tikzstyle{pl2node}=[fill=white,draw=black,circle,inner sep=.55mm]
\tikzstyle{termina}=[fill=white,draw=black,inner sep=.6mm]
\newcommand{\vsfp}{von Stengel-Forges polytope}
\newcommand{\vsfc}{von Stengel-Forges constraint}
\begin{document}
    \maketitle

    \begin{abstract}
        Unlike normal-form games, where correlated equilibria have been studied for more than 45 years, \emph{extensive-form} correlation is still generally not well understood. Part of the reason for this gap is that the sequential nature of extensive-form games allows for a richness of behaviors and incentives that are not possible in normal-form settings. This richness translates to a significantly different complexity landscape surrounding extensive-form correlated equilibria. As of today, it is known that finding an optimal \emph{extensive-form correlated equilibrium (EFCE)}, \emph{extensive-form coarse correlated equilibrium (EFCCE)}, or \emph{normal-form coarse correlated equilibrium (NFCCE)} in a two-player extensive-form game is computationally tractable when the game does not include chance moves, and intractable when the game involves chance moves. In this paper we significantly refine this complexity threshold by showing that, in two-player games, an optimal correlated equilibrium can be computed in polynomial time, provided that a certain condition is satisfied. We show that the condition holds, for example, when all chance moves are \emph{public}, that is, both players observe all chance moves. This implies that an optimal EFCE, EFCCE and NFCCE can be computed in polynomial time in the game size in two-player games with public chance moves, providing the biggest \emph{positive} complexity result surrounding extensive-form correlation in more than a decade.
    \end{abstract}

    \section{Introduction}
A vast body of literature in computational game theory has focused on computing Nash equilibria (NEs) in two-player zero-sum imperfect-information extensive-form games. Success stories from that endeavor include the creation of strong---in some cases superhuman---AIs for several complex games, including two-player limit Texas hold'em~\cite{Bowling15:Heads}, two-player no-limit Texas hold'em~\cite{Brown17:Safe,Brown17:Superhuman,Moravvcik17:DeepStack}, and multiplayer no-limit Texas hold'em~\cite{Brown19:Superhuman}.
NE captures strategic interactions in which each player maximizes her own utility. The interaction in NE is assumed to be fully decentralized: no communication between players is possible and the behavior of the players is not coordinated by any external orchestrator in any way.
While that assumption is natural in games such as poker, NE is too restrictive in other types of strategic interactions in which partial forms of communication or centralized control are possible~\citep{Ashlagi08:Value}. Therefore, there has been growing interest around less restrictive solution concepts than NE.

\emph{Correlated} and \emph{coarse correlated equilibria} are classic families of solution concepts that relax the assumptions of NE to allow forms of coordination of utility-maximizing agents~\citep{Aumann74:Subjectivity,Moulin78:Strategically}. In correlated and coarse correlated equilibria, a mediator that can recommend behavior---but not enforce it---complements the game. Before the interaction starts, the mediator samples a profile of recommended strategies (one for each player) from a publicly known correlated distribution. The mediator reveals the next recommended move (or sequence of moves, depending on the specific solution concept in the family) to each acting player. In correlated equilibrium, each agent must decide whether to commit to following the next recommended move (or sequence of moves) \emph{after} such move or sequence of moves is revealed by the mediator. In coarse correlated equilibrium, each agent must decide whether to commit to following the next recommended move (or sequence of moves) \emph{before} it is revealed by the mediator.
%Both in correlated and coarse correlated equilibria, each player is free to commit to following the recommendation or not.
If a player chooses not to follow the recommendation, the mediator stops issuing further recommendations to that player.
Since the selfish agents are free to not follow the recommendations, it is up to the mediator to come up with a correlated distribution of recommendations such that no agent has incentive to deviate from the recommendations, assuming no other player deviates. Despite the apparent weakness of a mediator that cannot enforce behavior but only suggest it, the maximum social welfare (that is, sum of the players' utilities) that can be induced by these families of solution concepts is greater than the social welfare obtainable by NE.
Examples of interactions where a mediator is natural include traffic control and load balancing~\cite{Ashlagi08:Value}.

These equilibrium concepts have typically been studied in normal-form (that is, matrix) games. The study of correlation in \emph{extensive-form} (that is, tree-form) games is recent, and was pioneered by \citet{Stengel08:Extensive}. Three correlated solution concepts are often used in extensive-form games: \emph{extensive-form correlated equilibrium (EFCE)}~\citep{Stengel08:Extensive}, \emph{extensive-form coarse correlated equilibrium (EFCCE)}~\citep{Farina20:Coarse}, and \emph{normal-form coarse correlated equilibrium (NFCCE)}~\citep{Moulin78:Strategically,Celli19:Computing,Celli19:Learning}.
Compared to normal-form (that is, one-shot) games, extensive-form correlation poses new and different challenges, especially in settings where the agents retain private information. This is unique to the sequential nature of extensive-form games, where, fundamentally,
players can adjust strategically as they make observations about their opponents and the environment~\cite{Farina19:Correlation}.
These challenges also translate to some negative complexity results for extensive-form correlation~\cite{Gilboa89:Nash,Stengel08:Extensive}.
While a landmark positive complexity result in game theory shows that \emph{one} EFCE, EFCCE, or NFCCE can be found in polynomial time~\citep{Papadimitriou05:Computing,Huang08:Computing,Jiang11:Polynomial},
the computation of an \emph{optimal} (that is, one that maximizes or minimizes a given linear objective, such as social welfare) EFCE, EFCCE, or NFCCE is computationally intractable in games with more than two players, as well as two-player games with chance moves, and tractable in two-player games without chance moves~\citep{Stengel08:Extensive}.

In this paper we significantly refine this complexity threshold by showing that, in two-player games, an optimal correlated equilibrium can be computed in polynomial time, provided that a certain \emph{triangle-freeness} condition---which can be checked in polynomial time---is satisfied. We prove that the condition holds, for example, when all chance moves are \emph{public}, that is, both players observe all chance moves. This includes, for example, games where the chance outcomes amount to public dice rolls or public revelations of cards. Specifically, we show that the set of \emph{correlation plans} $\Xi$ of a triangle-free game coincides with the \emph{\vsfp{}} $\cV$ of the game---a polytope that only requires a polynomial number of linear ``probability-mass-conserving'' constraints. Since $\cV$ can be represented using a polynomial number of constraints in the input game size, optimizing over this set can be efficiently done by means of, for example, linear programming methods. %Our results imply that an optimal EFCE, EFCCE and NFCCE can be computed in polynomial time in the game size in two-player games with public chance moves.

    \begin{figure}[t]\centering
        \scalebox{.98}{\begin{tikzpicture}
            \tikzstyle{cstep}=[draw] %,fill=black!10]
              \node[cstep,align=center,text width=2.6cm,minimum height=1cm] at (-.5, 0) {\small Triangle-free game};
              \draw[-implies,double equal sign distance] (-.5, -1.15) -- (-.5,-.65);
              \node at (.4,-.9) {\small\cref{thm:public chance implies triangle freeness}};
              \node[cstep,align=center,text width=2.2cm,minimum height=1cm] at (-.5, -1.8) {\small Game has public chance};
              \draw[-implies,double equal sign distance] (1.15, 0) -- (2.7,0);
              \node[] at (1.9,.3) {\small\cref{thm:decomposition}};
              \node[cstep,align=center,text width=3cm,minimum height=1cm] at (4.5, -1.8) {\small Efficient regret minimizer for $\cV$};
              \draw[implies-,double equal sign distance] (4.5, -1.15) -- (4.5,-.65);
              \node[rotate=0] at (4.95, -.9) {\small\citep{Farina19:Efficient}};
              \node[cstep,align=center,text width=3cm,minimum height=1cm] at (4.5, 0) {\small Scaled-extension-based decomposition of $\cV$};
              \draw[-implies,double equal sign distance] (6.35, 0) -- (7.9,0);
              \node[] at (7.1,.3) {\small\cref{thm:integrality}};
              \node[cstep,align=center,text width=3cm,minimum height=1cm] at (9.7, 0) {\small Integrality of the vertices of $\cV$};
              \draw[implies-implies,double equal sign distance] (9.7, -1.15) -- (9.7,-.65);
              \node at (10.6,-.9) {\small\cref{thm:xi equal vonS}};
              \node[cstep,align=center,text width=2cm,minimum height=1cm] at (9.7 , -1.8) {$\Xi = \cV$};
        \end{tikzpicture}}\vspace{-1mm}
        \caption{Overview of the connections among this paper's results.}
        \label{fig:overview}
    \end{figure}
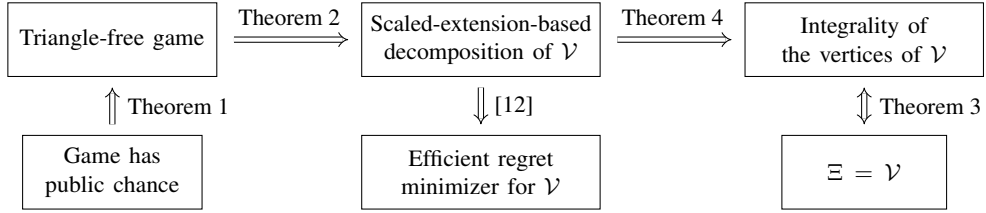

%\textbf{Overview of our results}
In \cref{fig:overview} we give an overview of the results in this paper and how
    they relate to each other. Our main result is that the polytope of correlation plans $\Xi$ coincides with the \vsfp{} $\cV$ when the game satisfies the \emph{triangle-freeness} condition that we introduce (\cref{def:triangle freeness}). As we show in \cref{thm:public chance implies triangle freeness}, every two-player game with public chance moves (which includes games with no chance moves at all) is triangle-free, but not all triangle-free games have public chance moves. So, our results also apply to some games where chance is not public.
The equality $\Xi = \cV$ in triangle-free games implies that an optimal EFCE, EFCCE and NFCCE can be computed in polynomial time. This is because $\cV$ has a polynomial (in the game size) description~\citep{Stengel08:Extensive} and the computation of an EFCE, EFCCE, NFCCE can be expressed as a linear program~\citep{Stengel08:Extensive,Farina20:Coarse}.

We prove $\Xi = \cV$ in several steps. First, we show that in triangle-free games, $\cV$ admits a structural decomposition in terms of \emph{scaled extension} operations. This type of decomposition of $\cV$ was introduced by \citet{Farina19:Efficient} as a way of ``unrolling'' the combinatorial structure of $\cV$ to construct an efficient regret minimization algorithm for $\Xi$ in two-player games without chance moves. We extend their construction to handle any triangle-free game. %This result can be combined with the \emph{scaled extension regret circuit} introduced by~\citet{Farina19:Efficient,Farina19:Regret} can be used to construct an efficient and scalable regret minimization algorithm for $\cV=\Xi$ that can be used to compute an EFCE, EFCCE, and NFCCE in large triangle-free games.
Then, we show a deep connection between the integrality of the vertices of the \vsfp{} $\cV$ and $\Xi$. Namely, in~\cref{thm:xi equal vonS}, we show that $\Xi = \cV$ holds if and only if all of $\cV$'s vertices have integer $\{0,1\}$ coordinates. Finally, in \cref{sec:integrality} we prove that $\cV$ has integral vertices by leveraging its structural decomposition.

    \section{Preliminaries}

\textbf{Extensive-form games}\quad \emph{Extensive-form games (EFGs)} are the standard model for games that are played on a game tree. EFGs can capture sequential and simultaneous moves as well as private information.
Each node in the EFG belongs to one player. One special player, called the \emph{chance player}, is used to model random stochastic events, such as rolling a die or drawing cards. In this paper, we only consider games that have two players in addition to potentially having a chance player.

Edges leaving from a node represent actions that a player can take at that node. To model private information, the game tree is supplemented with an \emph{information partition}, defined as a partition of nodes into sets called \emph{information sets}. Each node belongs to exactly one information set, and each information set is a nonempty set of tree nodes for the same Player $i$. An information set for Player $i$ denotes a collection of nodes that Player $i$ cannot distinguish among, given what she has observed so far. The symbols $\mathcal{I}_1$ and $\mathcal{I}_2$ denote the information partition of  Player 1 and 2, respectively.
%We will use the notion of \emph{connected information sets}.
Let $I_1$ and $I_2$ be information sets for Player 1 and 2, respectively. $I_1$ and $I_2$ are \emph{connected}, denoted $I_1 \conn I_2$, if there exist nodes $u \in I_1$ and $v \in I_2$ such that $u$ is on the path from the root to $v$, or vice versa.

We will only consider \emph{perfect-recall} games, that is, no player forgets what the player knew earlier. As a consequence, all nodes that belong to an information set $I$ share the same set of available actions (otherwise the player acting at those nodes would be able to distinguish among them), which we denote by $A_I$. We define the set of \emph{sequences} of Player $i$ as the set $\Sigma_i \defeq \{(I, a) : I \in \mathcal{I}_i, a \in A_I\} \cup \{\emptyseq\}$,
where the special element $\emptyseq$ is called \emph{empty sequence}. Given an information set $I \in \mathcal{I}_i$, we denote by $\sigma(I)$ the \emph{parent sequence} of $I$, defined as the last pair $(I', a') \in \Sigma_i$ encountered on the path from the root to any node $v \in I$; if no such pair exists %(that is, Player $i$ never acts before any node $v\in I$),
we let $\sigma(I) = \emptyseq$.

An important concept in extensive-form correlation is \emph{relevance} of sequence pairs. Intuitively, two sequences are relevant if they belong to connected information sets or if either of them is the empty sequence. Formally, a pair of sequences $(\sigma_1, \sigma_2) \in \Sigma_1\times\Sigma_2$ is \emph{relevant}, denoted $\sigma_1\rele\sigma_2$, if either $\sigma_1$ or $\sigma_2$ or both is the empty sequence, or if $\sigma_1 = (I_1, a_1)$ and $\sigma_2 = (I_2, a_2)$ and $I_1 \conn I_2$. The set of all relevant sequence pairs is denoted $\Sigma_1\rele\Sigma_2$. Given $\sigma_1 \in \Sigma_1$ and $I_2 \in \mathcal{I}_2$, we say that $\sigma_1$ is relevant for $I_2$, and write $\sigma_1 \rele I_2$, if $\sigma_1 = \emptyseq$ or if $\sigma_1 = (I_1, a_1)$ and $I_1 \conn I_2$ (an analogous statement holds for $I_1 \rele \sigma_2$).
We say that a sequence $\sigma=(I,a)\in\Sigma_i$ for Player~$i$ is \emph{descendent} of another sequence $\sigma'=(I',a')\in\Sigma_i$ for the same player, denoted by $\sigma \succeq \sigma'$, if $\sigma=\sigma'$ or if there is a path from the root of the game to a node $v \in I$ that passes through action $a'$ at some node $v' \in I'$. We use the notation $\tau \succ \tau'$ to mean $\tau\succeq\tau' \land \tau \neq \tau'$.

A \emph{reduced-normal-form plan} $\pi_i$ for Player~$i$ defines a choice of action for every information set $I\in\infos{i}$ that is still reachable as a
result of the other choices in $\pi$ itself. We denote the set of
reduced-normal-form plans of Player~$i$ by $\Pi_i$. We denote by $\Pi_i(I)$ the subset of reduced-normal-form plans that prescribe all actions for Player~$i$ on the path from the root to information set $I$.

\textbf{Polytope of correlation plans ($\Xi$)}\quad
A correlated distribution $\mu$ over combinations of plans $\Pi_1 \times \Pi_2$ of the players can be thought of as a point in probability simplex $\Delta^{|\Pi_1 \times \Pi_2|}$. Since the number of plans of each player is exponential in the game tree size, so is that representation of $\mu$. Therefore, \citet{Stengel08:Extensive} introduced a more compact representation of $\mu$, called the \emph{correlation plan representation}. %The transformation from a correlated distribution $\mu$ to its correlation plan representation is achieved using a linear function $
%    f : \Delta^{|\Pi_1 \times \Pi_2|} \to \bbR_{\ge 0}^{|\Sigma_1\rele\Sigma_2|}
%$.
%Specifically, $f$ takes a generic distribution $\mu$ over $\Pi_1\times\Pi_2$ and maps to the vector $\vec{\xi} = f(\mu)$, called a \emph{correlation plan}, whose components are
%\begin{equation}
%\xi[\sigma_1, \sigma_2] \defeq \sum_{\pi_1 \in \Pi_1(\sigma_1)} \sum_{\pi_2\in\Pi_2(\sigma_2)} \mu(\pi_1,\pi_2)\qquad \forall (\sigma_1,\sigma_2) \in \Sigma_1 \rele \Sigma_2.
%\end{equation}
The set of all legal correlation plans
%,that is, the image of the function  $f$ as $\mu$ spans across all possible distributions over $\Pi_1\times\Pi_2$,
is denoted $\Xi$ and called the \emph{polytope of correlation plans}. The set $\Xi$ is a convex polytope in $\bbR_{\ge 0}^{|\Sigma_1\rele\Sigma_2|}$, so the number of variables is at most quadratic in the game tree size. However, it might still require an exponential number of \emph{constraints}.

An optimal EFCE, EFCCE, or NFCCE is an optimal correlation plan subject to a set of linear \emph{incentive constraints}~\citep{Stengel08:Extensive,Farina19:Correlation,Farina20:Coarse}. These constraints encode the requirement that the set of corrrelated behavior be \emph{incentive compatible} for the player, that is, that no player would be better off not following the recommended behavior than to always follow it. Hence, an optimal EFCE, EFCCE, or NFCCE can be computed as the solution of a linear program. Furthermore, the linear program can be solved in polynomial time if and only if $\Xi$ can be described with a polynomial number of linear constraints. Thus the characterization of the constraints that define $\Xi$ in various classes of games is important.

\textbf{The \vsfp{} ($\cV$)}\quad The characterization of the constraints that define $\Xi$ was initiated by~\citet{Stengel08:Extensive} in their landmark paper on extensive-form correlation. In particular, they show that in two-player perfect-recall games without chance moves, $\Xi$ coincides with a particular polytope $\cV$---which we call the \emph{\vsfp{}}---whose description only uses a polynomial number of linear constraints, which are ``probability-mass-conserving'' constraints:\!\!
\begin{equation}\label{eq:vonS constraints}\small
  \cV \defeq \left\{\vec{v} \in \bbR_{\ge 0}^{|\Sigma_1\rele\Sigma_2|}\!:\! \begin{array}{ll}
    \bullet\ \ v[\emptyseq, \emptyseq] = 1 \\[.8mm]
    \bullet\ \ \!\sum_{a \in A_I}~\!v[(I_1, a),\hspace{.1mm} \sigma_2] = v[\sigma(I_1),\hspace{.4mm} \sigma_2] \quad \forall I_1 \in \mathcal{I}_1, \sigma_2 \in \Sigma_2 \ \hspace{.0mm} \text{ s.t. } I_1 \rele \sigma_2\\[.8mm]
    \bullet\ \ \!\sum_{a \in A_J} v[\sigma_1, (I_2, a)] = v[\sigma_1, \sigma(I_2)] \quad \forall I_2 \in \mathcal{I}_2, \sigma_1 \in \Sigma_1 \,\text{ s.t. } \sigma_1 \rele I_2
  \end{array}\!\!\!\right\}\!.
\end{equation}
The polytope $\cV$ is well defined in every game. However, the equality $\Xi = \cV$ was known to hold only in two-player games without chance moves. In more general games, it is only known that $\Xi \subseteq \cV$. The main contribution of our paper is to show that the equality $\Xi = \cV$ holds in significantly more general games than two-player games without chance moves. We will isolate a condition, which we coin \emph{triangle freeness}, that is sufficient for $\Xi = \cV$ to hold. We also show that all two-player games where all chance moves are public (including two-player games without chance moves) are triangle free. 
    \section{Scaled-Extension-Based Structural Decomposition for $\cV$}\label{sec:decomposition}

\citet{Farina19:Efficient} recently showed that in two-player games without chance moves, a particular structural decomposition theorem holds for the \vsfp{} $\cV$. At the core of their decomposition is a convexity-preserving operation, \emph{scaled extension}, defined as follows.
\begin{definition}[\citep{Farina19:Efficient}]\label{def:scaled extension}
Let $\cX$ and $\cY$ be nonempty, compact and convex sets, and let $h : \cX \to\bbR_{\ge 0}$ be a nonnegative affine real function. The \emph{scaled extension} of $\cX$ with $\cY$ via $h$ is defined as the set
\[
  \cX \ext^h \cY \defeq \{(\vec{x}, \vec{y}) : \vec{x} \in \cX,\ \vec{y} \in h(\vec{x}) \cY\}.
\]
\end{definition}
Specifically, they show that in two-player games without chance moves, $\cV$ admits a decomposition of the form $\cV = \{1\} \ext^{h_1} \cX_1 \ext^{h_2} \cX_2 \ext^{h_3} \cdots \ext^{h_n} \cX_n$, where each of the sets $\cX_i$ is either the singleton set $\{1\}$, or a probability simplex $\Delta^{s_i} \defeq \{\vec{x} \in \bbR^{s_i}_{\ge 0}: \|\vec{x}\|_1 = 1\}$ for some appropriate dimension $s_i$.

In this section, we significantly extend their result. As we will show, an analogous scaled-extension-based decomposition of $\cV$ exists in far more general games than those without chance moves. In particular, in \cref{sec:triangle freeness} we isolate a condition on the information structure of the game---which we coin \emph{triangle freeness}---that guarantees existence of a scaled-extension-based decomposition. Then, we will present an algorithm for computing such a decomposition, that is, finding the $h_i$ functions and sets $\cX_i$.
Since our full algorithm is rather intricate, we start by giving three examples of increasing complexity that capture the main intuition behind our structural decomposition routine.

%
%Our structural decomposition results for $\cV$ is the fundamental analytical result we build on in~\cref{sec:integrality}, where we leverage the scaled-extension-based decomposition to prove that in two-player triangle-free games, the vertices of $\cV$ have integer $\{0,1\}$ coordinates. However, the scaled-extension-based decomposition for triangle-free games is of independent interest, and has other applications too. For example, when paired with the saddle point formulation of EFCE, EFCCE, and NFCCE~\citep{Farina19:Correlation,Farina20:Coarse} and the \emph{scaled extension regret circuit}~\citep{Farina19:Regret,Farina19:Efficient}, it enables the construction of an efficient regret minimization algorithm that can scale up the computation of EFCEs,EFCCEs, and NFCCEs to dramatically larger games than linear-programming-based approaches can~\citep{Farina19:Efficient}.

        %\subsection{Three Examples with Increasingly Complex Information Partitions}\label{sec:examples}

        \begin{figure}[t]\centering
          \begin{tabular}{l|c|c|c}
            \toprule
            \ \rotatebox{90}{~Game tree}&
            \scalebox{.9}{\begin{tikzpicture}[baseline=-2.3cm]
              \def\dcha{1.0}
              \def\done{.45}
              \def\dtwo{.25}
              \node[chanode] (A) at (0, -.2) {};
              \node[pl1node] (B) at (-\dcha,-.8) {};
              \node[pl1node] (C) at (\dcha,-.8) {};
              \node[pl2node] (D) at (-\dcha-\done,-1.6) {};
              \node[pl2node] (E) at (-\dcha+\done,-1.6) {};
              \node[pl2node] (F) at (\dcha-\done,-1.6) {};
              \node[pl2node] (G) at (\dcha+\done,-1.6) {};
              \node[termina] (l1) at (-\dcha-\done-\dtwo,-2.4) {};
              \node[termina] (l2) at (-\dcha-\done+\dtwo,-2.4) {};
              \node[termina] (l3) at (-\dcha+\done-\dtwo,-2.4) {};
              \node[termina] (l4) at (-\dcha+\done+\dtwo,-2.4) {};
              \node[termina] (l5) at (\dcha-\done-\dtwo,-2.4) {};
              \node[termina] (l6) at (\dcha-\done+\dtwo,-2.4) {};
              \node[termina] (l7) at (\dcha+\done-\dtwo,-2.4) {};
              \node[termina] (l8) at (\dcha+\done+\dtwo,-2.4) {};

              \draw[semithick] (A) -- (B)
                                   --node[fill=white,inner sep=.9] {\scriptsize$1$} (D)
                                   --node[fill=white,inner sep=.9] {\scriptsize$1$} (l1);
              \draw[semithick] (B) --node[fill=white,inner sep=.9] {\scriptsize$2$} (E)
                                   --node[fill=white,inner sep=.9] {\scriptsize$1$} (l3);
              \draw[semithick] (D) --node[fill=white,inner sep=.9] {\scriptsize$2$} (l2);
              \draw[semithick] (E) --node[fill=white,inner sep=.9] {\scriptsize$2$} (l4);
              \draw[semithick] (A) -- (C)
                                   --node[fill=white,inner sep=.9] {\scriptsize$3$} (F)
                                   --node[fill=white,inner sep=.9] {\scriptsize$1$} (l5);
              \draw[semithick] (C) --node[fill=white,inner sep=.9] {\scriptsize$4$} (G)
                                   --node[fill=white,inner sep=.9] {\scriptsize$1$} (l7);
              \draw[semithick] (F) --node[fill=white,inner sep=.9] {\scriptsize$2$} (l6);
              \draw[semithick] (G) --node[fill=white,inner sep=.9] {\scriptsize$2$} (l8);

              \draw[black!60!white] (B) circle (.2);
              \node[black!60!white]  at ($(B) + (-.4, 0)$) {\textsc{a}};

              \draw[black!60!white] (C) circle (.2);
              \node[black!60!white]  at ($(C) + (.4, 0)$) {\textsc{b}};

              \draw[black!60!white] ($(D) + (0, .2)$) arc (90:270:.2);
              \draw[black!60!white] ($(D) + (0, .2)$) -- ($(G) + (0, .2)$);
              \draw[black!60!white] ($(D) + (0, -.2)$) -- ($(G) + (0, -.2)$);
              \draw[black!60!white] ($(G) + (0, -.2)$) arc (-90:90:.2);

              \node[black!60!white]  at ($(G) + (.4, 0)$) {\textsc{c}};
            \end{tikzpicture}}
             &
            \scalebox{.9}{\begin{tikzpicture}[baseline=-2.3cm]
              \def\dcha{1.0}
              \def\done{.45}
              \def\dtwo{.25}
              \node[chanode] (A) at (0, -.2) {};
              \node[pl1node] (B) at (-\dcha,-.8) {};
              \node[pl1node] (C) at (\dcha,-.8) {};
              \node[pl2node] (D) at (-\dcha-\done,-1.6) {};
              \node[pl2node] (E) at (-\dcha+\done,-1.6) {};
              \node[pl2node] (F) at (\dcha-\done,-1.6) {};
              \node[pl2node] (G) at (\dcha+\done,-1.6) {};
              \node[termina] (l1) at (-\dcha-\done-\dtwo,-2.4) {};
              \node[termina] (l2) at (-\dcha-\done+\dtwo,-2.4) {};
              \node[termina] (l3) at (-\dcha+\done-\dtwo,-2.4) {};
              \node[termina] (l4) at (-\dcha+\done+\dtwo,-2.4) {};
              \node[termina] (l5) at (\dcha-\done-\dtwo,-2.4) {};
              \node[termina] (l6) at (\dcha-\done+\dtwo,-2.4) {};
              \node[termina] (l7) at (\dcha+\done-\dtwo,-2.4) {};
              \node[termina] (l8) at (\dcha+\done+\dtwo,-2.4) {};

              \draw[semithick] (A) -- (B)
                                   --node[fill=white,inner sep=.9] {\scriptsize$1$} (D)
                                   --node[fill=white,inner sep=.9] {\scriptsize$1$} (l1);
              \draw[semithick] (B) --node[fill=white,inner sep=.9] {\scriptsize$2$} (E)
                                   --node[fill=white,inner sep=.9] {\scriptsize$1$} (l3);
              \draw[semithick] (D) --node[fill=white,inner sep=.9] {\scriptsize$2$} (l2);
              \draw[semithick] (E) --node[fill=white,inner sep=.9] {\scriptsize$2$} (l4);
              \draw[semithick] (A) -- (C)
                                   --node[fill=white,inner sep=.9] {\scriptsize$3$} (F)
                                   --node[fill=white,inner sep=.9] {\scriptsize$3$} (l5);
              \draw[semithick] (C) --node[fill=white,inner sep=.9] {\scriptsize$4$} (G)
                                   --node[fill=white,inner sep=.9] {\scriptsize$3$} (l7);
              \draw[semithick] (F) --node[fill=white,inner sep=.9] {\scriptsize$4$} (l6);
              \draw[semithick] (G) --node[fill=white,inner sep=.9] {\scriptsize$4$} (l8);

              \draw[black!60!white] (B) circle (.2);
              \node[black!60!white]  at ($(B) + (-.4, 0)$) {\textsc{a}};

              \draw[black!60!white] (C) circle (.2);
              \node[black!60!white]  at ($(C) + (.4, 0)$) {\textsc{b}};

              \draw[black!60!white] ($(D) + (0, .2)$) arc (90:270:.2);
              \draw[black!60!white] ($(D) + (0, .2)$) -- ($(E) + (0, .2)$);
              \draw[black!60!white] ($(D) + (0, -.2)$) -- ($(E) + (0, -.2)$);
              \draw[black!60!white] ($(E) + (0, -.2)$) arc (-90:90:.2);
              \node[black!60!white]  at ($(D) + (-.4, 0)$) {\textsc{c}};

              \draw[black!60!white] ($(F) + (0, .2)$) arc (90:270:.2);
              \draw[black!60!white] ($(F) + (0, .2)$) -- ($(G) + (0, .2)$);
              \draw[black!60!white] ($(F) + (0, -.2)$) -- ($(G) + (0, -.2)$);
              \draw[black!60!white] ($(G) + (0, -.2)$) arc (-90:90:.2);
              \node[black!60!white]  at ($(G) + (.4, 0)$) {\textsc{d}};
            \end{tikzpicture}}
            &
            \scalebox{.9}{\begin{tikzpicture}[baseline=-2.3cm]
              \def\dcha{1.0}
              \def\done{.45}
              \def\dtwo{.25}
              \node[chanode] (A) at (0, -.2) {};
              \node[pl1node] (B) at (-\dcha,-.8) {};
              \node[pl1node] (C) at (\dcha,-.8) {};
              \node[pl2node] (D) at (-\dcha-\done,-1.4) {};
              \node[pl2node] (E) at (-\dcha+\done,-1.8) {};
              \node[pl2node] (F) at (\dcha-\done,-1.4) {};
              \node[pl2node] (G) at (\dcha+\done,-1.8) {};
              \node[termina] (l1) at (-\dcha-\done-\dtwo,-2.4) {};
              \node[termina] (l2) at (-\dcha-\done+\dtwo,-2.4) {};
              \node[termina] (l3) at (-\dcha+\done-\dtwo,-2.4) {};
              \node[termina] (l4) at (-\dcha+\done+\dtwo,-2.4) {};
              \node[termina] (l5) at (\dcha-\done-\dtwo,-2.4) {};
              \node[termina] (l6) at (\dcha-\done+\dtwo,-2.4) {};
              \node[termina] (l7) at (\dcha+\done-\dtwo,-2.4) {};
              \node[termina] (l8) at (\dcha+\done+\dtwo,-2.4) {};

              \draw[black!60!white] ($(D) + (0, .15)$) arc (90:270:.15);
              \draw[black!60!white] ($(D) + (0, .15)$) -- ($(F) + (0, .15)$);
              \draw[black!60!white] ($(D) + (0, -.15)$) -- ($(F) + (0, -.15)$);
              \draw[black!60!white] ($(F) + (0, -.15)$) arc (-90:90:.15);
              \node[black!60!white]  at ($(D) + (-.4, 0)$) {\textsc{c}};

              \draw[black!60!white] ($(E) + (0, .15)$) arc (90:270:.15);
              \draw[black!60!white] ($(E) + (0, .15)$) -- ($(G) + (0, .15)$);
              \draw[black!60!white] ($(E) + (0, -.15)$) -- ($(G) + (0, -.15)$);
              \draw[black!60!white] ($(G) + (0, -.15)$) arc (-90:90:.15);
              \node[black!60!white]  at ($(G) + (.4, 0)$) {\textsc{d}};

              \draw[semithick] (A) -- (B)
                                   --node[fill=white,inner sep=.9] {\scriptsize$1$} (D)
                                   --node[fill=white,inner sep=.9,pos=.55] {\scriptsize$1$} (l1);
              \draw[semithick] (B) --node[fill=white,inner sep=.9] {\scriptsize$2$} (E)
                                   --node[fill=white,inner sep=.9] {\scriptsize$3$} (l3);
              \draw[semithick] (D) --node[fill=white,inner sep=.9,pos=.55] {\scriptsize$2$} (l2);
              \draw[semithick] (E) --node[fill=white,inner sep=.9] {\scriptsize$4$} (l4);
              \draw[semithick] (A) -- (C)
                                   --node[fill=white,inner sep=.9] {\scriptsize$3$} (F)
                                   --node[fill=white,inner sep=.9,pos=.55] {\scriptsize$1$} (l5);
              \draw[semithick] (C) --node[fill=white,inner sep=.9] {\scriptsize$4$} (G)
                                   --node[fill=white,inner sep=.9] {\scriptsize$3$} (l7);
              \draw[semithick] (F) --node[fill=white,inner sep=.9,pos=.55] {\scriptsize$2$} (l6);
              \draw[semithick] (G) --node[fill=white,inner sep=.9] {\scriptsize$4$} (l8);

              \draw[black!60!white] (B) circle (.2);
              \node[black!60!white]  at ($(B) + (-.4, 0)$) {\textsc{a}};

              \draw[black!60!white] (C) circle (.2);
              \node[black!60!white]  at ($(C) + (.4, 0)$) {\textsc{b}};
            \end{tikzpicture}}
            \\[3mm]
            \midrule
            \rotatebox{90}{\small Correlation plan}
            \rotatebox{90}{\small\quad fill-in order}
            &
            \scalebox{.9}{\begin{tikzpicture}[baseline=-2.3cm]
              \def\sep{.1}%
              \def\side{.42}%
              \def\dd{.07}%
              \tikzstyle{outer}=[semithick];
              \begin{scope}
                %\clip(-.29,.29) rectangle (5*\side + 2*\sep+0.05,-3*\side-\sep-0.05);
                \draw[outer] (0, 0) rectangle (\side, -\side);
                \draw[outer] ($(\side + \sep,0)$) rectangle ($(3 * \side + \sep, -\side)$);
                \draw[outer] ($(0, -\side - \sep)$) rectangle ($(\side, -3 * \side - \sep)$);
                \draw[outer] ($(\side + \sep, -\side - \sep)$) rectangle ($(3 * \side + \sep, -3 * \side - \sep)$);

                \draw[outer] ($(0, -3*\side - 2*\sep)$) rectangle ($(\side, -5 * \side - 2*\sep)$);
                \draw[outer] ($(\side + \sep, -3*\side - 2*\sep)$) rectangle ($(3 * \side + \sep, -5 * \side - 2 * \sep)$);

                \draw ($(0, -2*\side -\sep)$) -- ($(\side,-2*\side-\sep)$);
                \draw ($(\side+\sep, -2*\side -\sep)$) -- ($(3*\side+\sep,-2*\side-\sep)$);
                \draw ($(2*\side+\sep, 0)$) -- ($(2*\side+\sep,-\side)$);
                \draw ($(2*\side+\sep, -\side-\sep)$) -- ($(2*\side+\sep,-3*\side-\sep)$);

                \draw ($(0, -4*\side -2*\sep)$) -- ($(\side,-4*\side-2*\sep)$);
                \draw ($(\side+\sep, -4*\side -2*\sep)$) -- ($(3*\side+\sep,-4*\side-2*\sep)$);

                \draw ($(2*\side+\sep, -3*\side - 2*\sep)$) -- ($(2*\side+\sep,-5*\side - 2*\sep)$);

                \node at ($(.5*\side,.2)$) {\scriptsize$\emptyseq$};
                \node at ($(\sep+1.5*\side,.2)$) {\scriptsize$1$};
                \node at ($(\sep+2.5*\side,.2)$) {\scriptsize$2$};

                \node at ($(-.2,-.5*\side)$) {\scriptsize$\emptyseq$};
                \node at ($(-.2,-\sep-1.5*\side)$) {\scriptsize$1$};
                \node at ($(-.2,-\sep-2.5*\side)$) {\scriptsize$2$};
                \node at ($(-.2,-2*\sep-3.5*\side)$) {\scriptsize$3$};
                \node at ($(-.2,-2*\sep-4.5*\side)$) {\scriptsize$4$};

                \fill[black!10!white,opacity=.9] ($(\dd,-\dd)$) rectangle node[black!60!white,opacity=1]{\circled{1}} ($(\side-\dd,-\side+\dd)$);

                \fill[black!10!white,opacity=.9] ($(\dd,-\side-\sep-\dd)$) rectangle node[black!60!white,opacity=1]{\circled{4}} ($(\side-\dd,-3*\side-\sep+\dd)$);
                \fill[black!10!white,opacity=.9] ($(\dd,-3*\side-2*\sep-\dd)$) rectangle node[black!60!white,opacity=1]{\circled{4}} ($(\side-\dd,-5*\side-2*\sep+\dd)$);

                \fill[black!10!white,opacity=.9] ($(\side+\sep+\dd,-\side-\sep-\dd)$) rectangle node[black!60!white,opacity=1]{\circled{3}} ($(3*\side+\sep-\dd,-3*\side-\sep+\dd)$);

                \fill[black!10!white,opacity=.9] ($(\side+\sep+\dd,-3*\side-2*\sep-\dd)$) rectangle node[black!60!white,opacity=1]{\circled{3}} ($(3*\side+\sep-\dd,-5*\side-2*\sep+\dd)$);

                \fill[black!10!white,opacity=.9] ($(\side+\sep+\dd,-\dd)$) rectangle node[black!60!white,opacity=1]{\circled{2}} ($(3*\side+\sep-\dd,-\side+\dd)$);
              \end{scope}
            \end{tikzpicture}}
            &
            \scalebox{.9}{\begin{tikzpicture}[baseline=-2.3cm]
              \def\sep{.1}%
              \def\side{.42}%
              \def\dd{.07}%
              \tikzstyle{outer}=[semithick];
              \begin{scope}
                %\clip(-.29,.29) rectangle (5*\side + 2*\sep+0.05,-3*\side-\sep-0.05);
                \draw[outer] (0, 0) rectangle (\side, -\side);
                \draw[outer] ($(\side + \sep,0)$) rectangle ($(3 * \side + \sep, -\side)$);
                \draw[outer] ($(3 * \side + 2 * \sep, 0)$) rectangle ($(5 * \side + 2 * \sep, -\side)$);
                \draw[outer] ($(0, -\side - \sep)$) rectangle ($(\side, -3 * \side - \sep)$);
                \draw[outer] ($(\side + \sep, -\side - \sep)$) rectangle ($(3 * \side + \sep, -3 * \side - \sep)$);

                \draw[outer] ($(0, -3*\side - 2*\sep)$) rectangle ($(\side, -5 * \side - 2*\sep)$);
                \draw[outer] ($(3 * \side + 2 * \sep, -3*\side - 2*\sep)$) rectangle ($(5 * \side + 2 * \sep, -5 * \side - 2 * \sep)$);

                \draw ($(0, -2*\side -\sep)$) -- ($(\side,-2*\side-\sep)$);
                \draw ($(\side+\sep, -2*\side -\sep)$) -- ($(3*\side+\sep,-2*\side-\sep)$);

                \draw ($(2*\side+\sep, 0)$) -- ($(2*\side+\sep,-\side)$);
                \draw ($(2*\side+\sep, -\side-\sep)$) -- ($(2*\side+\sep,-3*\side-\sep)$);
                \draw ($(4*\side+2*\sep, 0)$) -- ($(4*\side+2*\sep,-\side)$);

                \draw ($(0, -4*\side -2*\sep)$) -- ($(\side,-4*\side-2*\sep)$);
                \draw ($(3*\side+2*\sep, -4*\side - 2*\sep)$) -- ($(5*\side+2*\sep,-4*\side-2*\sep)$);

                \draw ($(4*\side+2*\sep, -3*\side - 2*\sep)$) -- ($(4*\side+2*\sep,-5*\side - 2*\sep)$);

                \node at ($(.5*\side,.2)$) {\scriptsize$\emptyseq$};
                \node at ($(\sep+1.5*\side,.2)$) {\scriptsize$1$};
                \node at ($(\sep+2.5*\side,.2)$) {\scriptsize$2$};
                \node at ($(2*\sep+3.5*\side,.2)$) {\scriptsize$3$};
                \node at ($(2*\sep+4.5*\side,.2)$) {\scriptsize$4$};

                \node at ($(-.2,-.5*\side)$) {\scriptsize$\emptyseq$};
                \node at ($(-.2,-\sep-1.5*\side)$) {\scriptsize$1$};
                \node at ($(-.2,-\sep-2.5*\side)$) {\scriptsize$2$};
                \node at ($(-.2,-2*\sep-3.5*\side)$) {\scriptsize$3$};
                \node at ($(-.2,-2*\sep-4.5*\side)$) {\scriptsize$4$};

                \fill[black!10!white,opacity=.9] ($(\dd,-\dd)$) rectangle node[black!60!white,opacity=1]{\circled{1}} ($(\side-\dd,-\side+\dd)$);

                \fill[black!10!white,opacity=.9] ($(\dd,-\side-\sep-\dd)$) rectangle node[black!60!white,opacity=1]{\circled{2}} ($(\side-\dd,-3*\side-\sep+\dd)$);
                \fill[black!10!white,opacity=.9] ($(\dd,-3*\side-2*\sep-\dd)$) rectangle node[black!60!white,opacity=1]{\circled{2}} ($(\side-\dd,-5*\side-2*\sep+\dd)$);

                \fill[black!10!white,opacity=.9] ($(\side+\sep+\dd,-\side-\sep-\dd)$) rectangle node[black!60!white,opacity=1]{\circled{3}} ($(3*\side+\sep-\dd,-3*\side-\sep+\dd)$);
                %\fill[black!10!white,opacity=.9] ($(3*\side+2*\sep+\dd,-\side-\sep-\dd)$) rectangle node[black!60!white,opacity=1]{\circled{3}} ($(5*\side+2*\sep-\dd,-3*\side-\sep+\dd)$);

                %\fill[black!10!white,opacity=.9] ($(\side+\sep+\dd,-3*\side-2*\sep-\dd)$) rectangle node[black!60!white,opacity=1]{\circled{3}} ($(3*\side+\sep-\dd,-5*\side-2*\sep+\dd)$);
                \fill[black!10!white,opacity=.9] ($(3*\side+2*\sep+\dd,-3*\side-2*\sep-\dd)$) rectangle node[black!60!white,opacity=1]{\circled{3}} ($(5*\side+2*\sep-\dd,-5*\side-2*\sep+\dd)$);

                \fill[black!10!white,opacity=.9] ($(\side+\sep+\dd,-\dd)$) rectangle node[black!60!white,opacity=1]{\circled{4}} ($(3*\side+\sep-\dd,-\side+\dd)$);
                \fill[black!10!white,opacity=.9] ($(3*\side+2*\sep+\dd,-\dd)$) rectangle node[black!60!white,opacity=1]{\circled{4}} ($(5*\side+2*\sep-\dd,-\side+\dd)$);
              \end{scope}
            \end{tikzpicture}}
            &
            \scalebox{.9}{\begin{tikzpicture}[baseline=-2.3cm]
              \def\sep{.1}%
              \def\side{.42}%
              \def\dd{.07}%
              \tikzstyle{outer}=[semithick];
              \begin{scope}
                %\clip(-.29,.29) rectangle (5*\side + 2*\sep+0.05,-3*\side-\sep-0.05);
                \draw[outer] (0, 0) rectangle (\side, -\side);
                \draw[outer] ($(\side + \sep,0)$) rectangle ($(3 * \side + \sep, -\side)$);
                \draw[outer] ($(3 * \side + 2 * \sep, 0)$) rectangle ($(5 * \side + 2 * \sep, -\side)$);
                \draw[outer] ($(0, -\side - \sep)$) rectangle ($(\side, -3 * \side - \sep)$);
                \draw[outer] ($(\side + \sep, -\side - \sep)$) rectangle ($(3 * \side + \sep, -3 * \side - \sep)$);
                \draw[outer] ($(3 * \side + 2 * \sep, -\side - \sep)$) rectangle ($(5 * \side + 2 * \sep, -3 * \side - \sep)$);

                \draw[outer] ($(0, -3*\side - 2*\sep)$) rectangle ($(\side, -5 * \side - 2*\sep)$);
                \draw[outer] ($(\side + \sep, -3*\side - 2*\sep)$) rectangle ($(3 * \side + \sep, -5 * \side - 2 * \sep)$);
                \draw[outer] ($(3 * \side + 2 * \sep, -3*\side - 2*\sep)$) rectangle ($(5 * \side + 2 * \sep, -5 * \side - 2 * \sep)$);

                \draw ($(0, -2*\side -\sep)$) -- ($(\side,-2*\side-\sep)$);
                \draw ($(\side+\sep, -2*\side -\sep)$) -- ($(3*\side+\sep,-2*\side-\sep)$);
                \draw ($(3*\side+2*\sep, -2*\side -\sep)$) -- ($(5*\side+2*\sep,-2*\side-\sep)$);
                \draw ($(2*\side+\sep, 0)$) -- ($(2*\side+\sep,-\side)$);
                \draw ($(2*\side+\sep, -\side-\sep)$) -- ($(2*\side+\sep,-3*\side-\sep)$);
                \draw ($(4*\side+2*\sep, 0)$) -- ($(4*\side+2*\sep,-\side)$);
                \draw ($(4*\side+2*\sep, -\side-\sep)$) -- ($(4*\side+2*\sep,-3*\side-\sep)$);

                \draw ($(0, -4*\side -2*\sep)$) -- ($(\side,-4*\side-2*\sep)$);
                \draw ($(\side+\sep, -4*\side -2*\sep)$) -- ($(3*\side+\sep,-4*\side-2*\sep)$);
                \draw ($(3*\side+2*\sep, -4*\side - 2*\sep)$) -- ($(5*\side+2*\sep,-4*\side-2*\sep)$);

                \draw ($(2*\side+\sep, -3*\side - 2*\sep)$) -- ($(2*\side+\sep,-5*\side - 2*\sep)$);
                \draw ($(4*\side+2*\sep, -3*\side - 2*\sep)$) -- ($(4*\side+2*\sep,-5*\side - 2*\sep)$);

                \node at ($(.5*\side,.2)$) {\scriptsize$\emptyseq$};
                \node at ($(\sep+1.5*\side,.2)$) {\scriptsize$1$};
                \node at ($(\sep+2.5*\side,.2)$) {\scriptsize$2$};
                \node at ($(2*\sep+3.5*\side,.2)$) {\scriptsize$3$};
                \node at ($(2*\sep+4.5*\side,.2)$) {\scriptsize$4$};

                \node at ($(-.2,-.5*\side)$) {\scriptsize$\emptyseq$};
                \node at ($(-.2,-\sep-1.5*\side)$) {\scriptsize$1$};
                \node at ($(-.2,-\sep-2.5*\side)$) {\scriptsize$2$};
                \node at ($(-.2,-2*\sep-3.5*\side)$) {\scriptsize$3$};
                \node at ($(-.2,-2*\sep-4.5*\side)$) {\scriptsize$4$};

                \fill[black!10!white,opacity=.9] ($(\dd,-\dd)$) rectangle node[black!60!white,opacity=1]{} ($(\side-\dd,-\side+\dd)$);

                \fill[black!10!white,opacity=.9] ($(\dd,-\side-\sep-\dd)$) rectangle node[black!60!white,opacity=1]{} ($(\side-\dd,-3*\side-\sep+\dd)$);
                \fill[black!10!white,opacity=.9] ($(\dd,-3*\side-2*\sep-\dd)$) rectangle node[black!60!white,opacity=1]{} ($(\side-\dd,-5*\side-2*\sep+\dd)$);

                \fill[black!10!white,opacity=.9] ($(\side+\sep+\dd,-\side-\sep-\dd)$) rectangle node[black!60!white,opacity=1]{} ($(3*\side+\sep-\dd,-3*\side-\sep+\dd)$);
                \fill[black!10!white,opacity=.9] ($(3*\side+2*\sep+\dd,-\side-\sep-\dd)$) rectangle node[black!60!white,opacity=1]{} ($(5*\side+2*\sep-\dd,-3*\side-\sep+\dd)$);

                \fill[black!10!white,opacity=.9] ($(\side+\sep+\dd,-3*\side-2*\sep-\dd)$) rectangle node[black!60!white,opacity=1]{} ($(3*\side+\sep-\dd,-5*\side-2*\sep+\dd)$);
                \fill[black!10!white,opacity=.9] ($(3*\side+2*\sep+\dd,-3*\side-2*\sep-\dd)$) rectangle node[black!60!white,opacity=1]{} ($(5*\side+2*\sep-\dd,-5*\side-2*\sep+\dd)$);

                \fill[black!10!white,opacity=.9] ($(\side+\sep+\dd,-\dd)$) rectangle node[black!60!white,opacity=1]{} ($(3*\side+\sep-\dd,-\side+\dd)$);
                \fill[black!10!white,opacity=.9] ($(3*\side+2*\sep+\dd,-\dd)$) rectangle node[black!60!white,opacity=1]{} ($(5*\side+2*\sep-\dd,-\side+\dd)$);
              \end{scope}
            \end{tikzpicture}}
            \\
            \bottomrule
          \end{tabular}
          \caption{Three examples of extensive-form games with increasingly complex information partitions. Crossed nodes belong the chance player, black round nodes belong to Player~1, white round nodes belong to Player~2, gray round sets define information sets, white squares denote terminal states. The numbers along the edges define concise names for sequences.}
          \label{fig:examples}
        \end{figure}
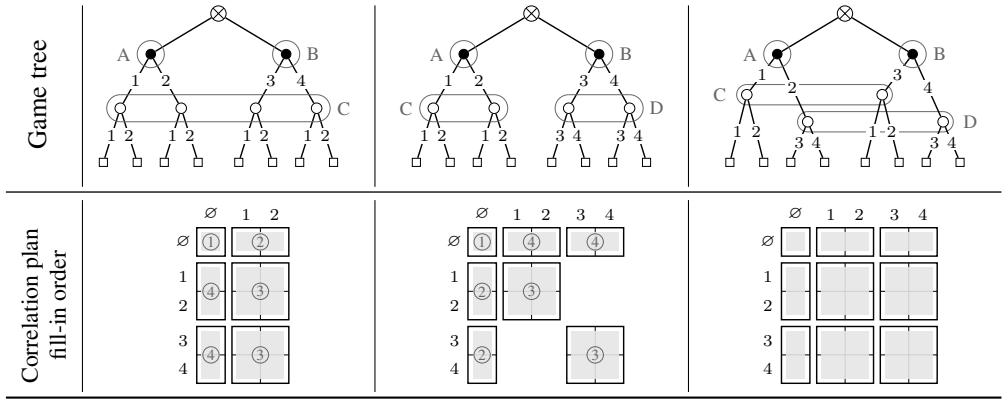

        \textbf{First example}\quad The first example is shown in the first column of \cref{fig:examples}. The game starts with a chance node, where two outcomes (say, heads or tails) are possible. After observing the outcome of the chance node, Player 1 chooses between two actions (say, the ``left'' and the ``right'' action). The choice as to whether to play the left or the right action can be different based on the observed chance outcome. After Player 1 has played their action, Player 2 has to pick whether to play their left or right action---however, Player 2 does not observe the chance outcome nor Player 1's action. The chance outcome is not observed by Player 2, so, this is not a public-chance game.

        The only information set \textsc{c} for Player 2 is connected to both information sets (denoted \textsc{a} and \textsc{b} in \cref{fig:examples}) of Player 1, so, all sequence pairs $(\sigma_1, \sigma_2) \in \Sigma_1\times\Sigma_2$ are relevant. Since Player 2 only has one information set, it is easy to incrementally generate the \vsfp{}. First, the fixed value $1$ is assigned to $v[\emptyseq,\emptyseq]$ (step \circled{1} in the fill-in order). Then, this value is split arbitrarily into the two (non-negative) entries $v[\emptyseq,1],v[\emptyseq,2]$ so that $v[\emptyseq,1]+v[\emptyseq,2]=v[\emptyseq,\emptyseq]$ in accordance with the \vsfc{}s. This operation can be expressed using scaled extension as $\{(v[\emptyseq,\emptyseq],v[\emptyseq,1],v[\emptyseq,2])\} = \{1\} \ext^h \Delta^2$, where $h$ is the identity function (step \circled{2} in the fill-in order). Then, $v[\emptyseq,1]$ is further split into $v[1,1]+v[2,1]=v[\emptyseq,1]$ and $v[3,1]+v[4,1]=v[\emptyseq,1]$, while $v[\emptyseq,2]$ is split into $v[1,2]+v[2,2]=v[\emptyseq,2]$ and $v[3,2]+v[4,2]=v[\emptyseq,2]$ (step \circled{3} of the fill-in order). These operations can be expressed as scaled extensions with $\Delta^2$. Now that the eight entries $v[\sigma_1,\sigma_2]$ for $\sigma_1\in\{1,2,3,4\}, \sigma_2\in\{1,2\}$ have been filled out, we fill in $v[\sigma_1, \emptyseq]$ for all $\sigma_1 \in \{1,2,3,4\}$ in accordance with the \vsfc{} $v[\sigma_1, \emptyseq] = v[\sigma_1,1]+v[\sigma_1,2]$ (step \circled{4}). In this step, we are not splitting any values, but rather we are summing already-filled-in entries in $v$ to form new entries. Specifically, we can extend the set of partially-filled-in vectors $\vec{v} = (v[\emptyseq,\emptyseq],v[\emptyseq,1],v[\emptyseq,2],v[1,1],v[2,1],v[3,1],v[4,1],v[1,2],v[2,2],v[3,2],v[4,2])$ with the new entry $v[1,\emptyseq]$ by using the scaled extension operation $\{\vec{v}\}\ext^h \{1\}$ where $h$ is the (linear) function that extracts the sum $v[\sigma_1,1]+v[\sigma_1,2]$ from $\vec{v}$. By doing so, we have incrementally filled in all entries in $\vec{v}$. Furthermore, by construction, we have that all \vsfc{}s $v[\sigma_1,\emptyseq]=v[\sigma_1,1]+v[\sigma_1,2]$ ($\sigma_1 \in \{\emptyseq,1,2,3,4\}$) and $v[\emptyseq,\sigma_2] = v[1,\sigma_2] + v[2,\sigma_2] = v[3,\sigma_2] + v[4,\sigma_2]$ ($\sigma_2 \in \{1,2\}$) must hold. So, the only two \vsfc{}s that we have ignored and might potentially be violated are $v[\emptyseq,\emptyseq]= v[1,\emptyseq] + v[2,\emptyseq]$ and $v[\emptyseq,\emptyseq]=v[3,\emptyseq]+v[4,\emptyseq]$. This concern is quickly resolved by noting that those constraints are implied by the other ones that we satisfy. In particular, by construction we have $v[1,\emptyseq] + v[2,\emptyseq] = (v[1,1]+v[1,2])+(v[2,1]+v[2,2]) = (v[1,1] + v[2,1]) + (v[1,2] + v[2,2]) = v[\emptyseq,1] + v[\emptyseq,2] = v[\emptyseq,\emptyseq]$, and an analogous statement holds for $v[3,\emptyseq] + v[4,\emptyseq]$. So, all constraints hold and the scaled-extension-based decomposition is finished.

   \begin{remark}\label{rem:critical}
        An approach that would start by splitting $v[\emptyseq,\emptyseq]$ into $v[1,\emptyseq]+v[2,\emptyseq]=v[\emptyseq,\emptyseq]$ and $v[3,\emptyseq]+v[4,\emptyseq]=v[\emptyseq,\emptyseq]$, thereby inverting the order of fill-in steps \circled{4} and \circled{2}, would fail. Indeed, after filling $v[\sigma_1,\sigma_2]$ for all $\sigma_1 \in \{1,2,3,4\}, \sigma_2 \in \{1,2\}$), there would be no clear way of guaranteeing that $v[1,1]+v[2,1] = v[3,1]+v[4,1] \ (= v[\emptyseq,1])$.
    \end{remark}

    \textbf{Second example}\quad We now consider a variation of the game from the first example, where Player~2 observes the chance outcome but not the actions selected by Player~1. This game, shown in the middle column of \cref{fig:examples}, has \emph{public} chance moves, because the chance outcome is observed by all players. In this game, not all pairs of information sets are connected. In fact, only $(\textsc{a},\textsc{c})$ and $(\textsc{b},\textsc{d})$ are connected information set pairs. Correspondingly, entries such as $v[1,3]$, $v[4,2]$, and $v[2,4]$ are not defined in the correlation plans for the game. This observation is crucial, and will set apart this example from the next one. To fill in any correlation plan, we can start by splitting $v[\emptyseq,\emptyseq]$ into $v[1,\emptyseq]+v[2,\emptyseq]=v[\emptyseq,\emptyseq]$ and $v[3,\emptyseq]+v[4,\emptyseq]=v[\emptyseq,\emptyseq]$ (fill-in step \circled{2} in the figure). Both operations can be expressed as a scaled extension of partially-filled-in vectors with $\Delta^2$, scaled by the affine function that extracts $v[\emptyseq,\emptyseq]=1$ from the partially-filled-in correlation plans. Then, we further split those values into entries $v[\sigma_1,1]+v[\sigma_1,2] = v[\sigma_1,\emptyseq]$ for $\sigma_1\in\{1,2\}$ in accordance with the \vsfc{}. Similarly, we will in $v[\sigma_1,3],v[\sigma_1,4]$ for $\sigma_1\in\{3,4\}$ in accordance with the constraint $v[\sigma_1,3]+v[\sigma_1,4] = v[\sigma_1,\emptyseq]$ for $\sigma_1\in\{1,2\}$ (fill-in step \circled{3}). Finally, we recover the values of $v[\emptyseq, \sigma_2]$ for $\sigma_2 \in\{1,2,3,4\}$ with a scaled extension with the singleton set $\{1\}$ as discussed in the previous example. Again, it can be checked that despite the fact that we ignored the constraints $v[\emptyseq,1]+v[\emptyseq,2] = v[\emptyseq,\emptyseq]$ and $v[\emptyseq,3]+v[\emptyseq,4] = v[\emptyseq,\emptyseq]$, those constraints are automatically satisfied by constuction. In this case, we were able to sidestep the issue raised in \cref{rem:critical} because of the particular connection between the information sets.

    \textbf{Third example}\quad Finally, we propose a third example in the third column of~\cref{fig:examples}. It is a variation of the first example, where Player~2 now observes Player~1's action but \emph{not} the chance outcome. The most significant difference with the second example is that the information structure of the game is now such that all pairs of information sets of the players are connected. Hence, the problem raised in \cref{rem:critical} cannot be avoided. Our decomposition algorithm cannot handle this example.

        \subsection{A Sufficient Condition for the Existence of a Scaled-Extension-Based Decomposition}\label{sec:triangle freeness}

        The third example in the previous section highlights an unfavorable situation in which our decomposition attempt based on incremental generation of the correlation plan. In order to codify all situations in which that issue does not arise, we introduce the concept of rank of an information set.

\begin{definition}
    Let $i \in \{1,2\}$ be one player, and let $-i$ denote the other player. Furthermore, let $I \in \infos{i}$ and $\sigma \in \Sigma_{-i}$. The $\sigma$-rank of $I$ is the cardinality of the set $\{J \in \infos{-i} : J \conn I, \sigma(J) = \sigma \}$.
\end{definition}
\vspace{-2mm}

The issue in \cref{rem:critical} can be stated in terms of the ranks.
Consider a relevant sequence pair $(\sigma_1,\sigma_2)\in\Sigma_1\rele\Sigma_2$ and two connected information sets $I_1 \conn I_2$ such that $\sigma(I_1) = \sigma_1, \sigma(I_2) = \sigma_2$. If the $\sigma_1$-rank of $I_2$ and the $\sigma_2$-rank of $I_1$ are both greater than 1, the issue cannot be avoided and the decomposition will fail. For example, in the third example, where our decomposition fails, all information sets have  $\emptyseq$-rank 2.
We prove that such situations cannot occur, provided the game satisfies the following condition, which can be verified in polynomial time in the size of the EFG.
        \begin{definition}[Triangle-freeness]\label{def:triangle freeness}
            A two-player extensive-form game is \emph{triangle-free} if, for any choice of two distinct information sets $I_1, I_2 \in \infos{1}$ such that $\sigma(I_1) = \sigma(I_2) = \sigma_1$ and two distinct information sets $J_1, J_2 \in \infos{2}$ such that $\sigma(J_1) = \sigma(J_2) = \sigma_2$,
            it is never the case that $I_1\conn J_1 \land I_2 \conn J_2 \land I_1 \conn J_2$.
        \end{definition}\vspace{-2mm}

        In \cref{thm:public chance implies triangle freeness} we show that games with public chance (which includes games with no chance moves at all) always satisfy the triangle-freeness condition of \cref{def:triangle freeness}.

        \begin{restatable}{theorem}{thmpublicchance}\label{thm:public chance implies triangle freeness}
          A two-player extensive-form game with public chance moves is triangle-free.
        \end{restatable}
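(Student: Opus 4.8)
I would prove the statement directly by contradiction: suppose there is a ``triangle,'' i.e.\ information sets $I_1 \neq I_2 \in \mathcal{I}_1$ with $\sigma(I_1) = \sigma(I_2)$ and $J_1 \neq J_2 \in \mathcal{I}_2$ with $\sigma(J_1) = \sigma(J_2)$ such that $I_1 \conn J_1$, $I_1 \conn J_2$, and $I_2 \conn J_2$, and derive a contradiction. Two features of the setting do the work. \emph{Public chance}: writing $c(h)$ for the sequence of chance outcomes on the path from the root to a node $h$, public chance means $c$ is constant on every information set (otherwise its owner could tell two of its nodes apart), so we may write $c(I)$. \emph{Perfect recall}: each information set $I$ of Player~$i$ has a well-defined history (the sequence of Player-$i$ information-set/action pairs on the path to $I$), whose last element is $\sigma(I)$, and each information set occurs at most once on any root-to-leaf path. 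From this I would first record: $\sigma(I_1)=\sigma(I_2)$ forces $I_1,I_2$ to have \emph{identical} Player-1 histories; distinct nodes of one information set are mutually incomparable in the ancestor order; nodes of two distinct sibling information sets of the same player are mutually incomparable; and the least common ancestor of two distinct nodes of one information set of Player~$i$ is neither a Player-$i$ node (the two outgoing actions would both lie in the common history, which contains that node's information set only once) nor --- by public chance --- a chance node (the two nodes would get different $c$-values), hence is a Player-$(-i)$ node.

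The crux is a structural lemma (with a player-swapped dual obtained by relabeling, which preserves public chance and perfect recall): \emph{if $I_1 \neq I_2$ are sibling information sets of Player~1 and $J$ is any information set of Player~2, one cannot simultaneously have a node of $J$ that is an ancestor of a node of $I_1$ and a node of $I_2$ that is an ancestor of a node of $J$.} To prove it, let $q_1, q_2 \in J$ be the two witnesses. If $q_1 = q_2$, chaining the two ancestor relations puts a node of $I_2$ strictly below a node of $I_1$, contradicting incomparability of distinct siblings. If $q_1 \neq q_2$, then $z := \mathrm{lca}(q_1,q_2)$ is a Player-1 node and, tracing the path on each side, is a strict ancestor of a node of $I_1$ and of a node of $I_2$ via two \emph{distinct} actions at $z$ (distinct because $z = \mathrm{lca}(q_1,q_2)$); but both of these $(\text{info set of }z, \text{action})$ pairs then lie in the common Player-1 history of $I_1$ and $I_2$, which contains the information set of $z$ only once --- contradiction.

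Now apply the lemma to $(I_1,I_2,J_2)$ (and its $I_1\leftrightarrow I_2$ version): it forces the two connections $I_1\conn J_2$ and $I_2\conn J_2$ to be co-oriented --- either both witnessed only by a Player-1 node above a Player-2 node, or both only by a Player-2 node above a Player-1 node. The first case is killed directly by taking the least common ancestor of the two $J_2$-witness nodes (same pattern as in the lemma). In the second case, the dual lemma applied to $(J_1,J_2,I_1)$ similarly forces $I_1\conn J_1$ to be witnessed only by a Player-2 node above a Player-1 node. Now all three connections have a fixed orientation. If the two $J_2$-witness nodes (for $I_1\conn J_2$ and $I_2\conn J_2$) can be chosen distinct, their least common ancestor is a Player-1 node above a node of $I_1$ and of $I_2$, and the ``two distinct actions in one common history'' contradiction applies. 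Otherwise there is a single node $y^\star\in J_2$ that is the only $J_2$-node above any node of $I_1$ or of $I_2$; here I invoke $I_1\conn J_1$, comparing the $I_1$-node beneath the $J_1$-witness with an $I_1$-node beneath $y^\star$ (using that the least common ancestor of two distinct $I_1$-nodes is a Player-2 node), and every sub-case yields either two comparable nodes in the distinct siblings $J_1, J_2$ or two distinct actions forced into the common Player-2 history of $J_1,J_2$.

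The main obstacle is this orientation/degeneracy bookkeeping: ``$\conn$'' asserts only the \emph{existence} of some comparable pair of nodes, so each step must be robust to the choice of witnesses, and the degenerate cases (where natural witness nodes coincide) are exactly where the third edge $I_1\conn J_1$ and the Player-2 sibling structure must enter rather than the triple $(I_1,I_2,J_2)$ alone. Setting up the case split so that every branch collapses to one of the two canonical contradictions --- ``comparable nodes in distinct siblings'' or ``an information set repeated in a perfect-recall history'' --- is what keeps the argument from ballooning.
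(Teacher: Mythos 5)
Your proposal is correct, but it takes a genuinely different route from the paper's. The paper picks witnesses $u\in I_1$, $u'\in I_2$ for the edges $I_1\conn J_1$ and $I_2\conn J_2$, forms $w=\mathrm{lca}(u,u')$, rules out $w$ being a Player-1 node via $\sigma(I_1)=\sigma(I_2)$, and then splits on whether $w$ belongs to Player 2 (immediate contradiction with $\sigma(J_1)=\sigma(J_2)$) or to chance; only in the chance case does it invoke the third edge $I_1\conn J_2$ and publicness, arguing that every node of $I_1$ must have observed the same chance action at $w$, so the two $J_2$-witnesses would have observed different chance outcomes and could not share an information set. You instead package the entire use of public chance into one reusable fact --- the least common ancestor of two distinct nodes of a single information set is always an \emph{opponent} node --- and then run a purely order-theoretic argument: a mixed-orientation lemma for two sibling information sets against one opponent information set, a forced co-orientation of $I_1\conn J_2$ and $I_2\conn J_2$, and an lca / repeated-information-set-in-a-common-history contradiction in each branch, bringing in the third edge (for you, $I_1\conn J_1$ rather than $I_1\conn J_2$) only in the ``Player 2 above Player 1'' branch, where the two players' roles dualize. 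Your modularization cleanly separates where publicness versus perfect recall is used, at the price of more orientation bookkeeping; the paper's version is shorter but interleaves the two. One detail to add when writing it up: in the $q_1\neq q_2$ branch of your lemma you assert that $z=\mathrm{lca}(q_1,q_2)$ is a strict ancestor of a node of $I_2$ via the $q_2$-action, but the $I_2$-witness could a priori lie at or above $z$; that sub-case is absorbed immediately by sibling incomparability (it would make the $I_2$-witness an ancestor of the $I_1$-witness), so nothing breaks. Also note that your degenerate ``single $y^\star$'' sub-case does not actually need uniqueness of $y^\star$: once both $J_1$ and $J_2$ have nodes above nodes of $I_1$, the dual of your first case finishes directly.
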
\vspace{-2mm}

        %\cref{thm:public chance implies triangle freeness} immediately implies that extensive-form interactions such as Goofspiel and Backgammon, as well as games in which, for example, the chance outcomes amount to a public dice roll, or public reveal of a card, are automatically triangle-free.
However, not all triangle-free games must have public chance nodes. For example, the topmost game in \cref{fig:examples} is triangle-free, but in that game the chance outcome is not public to Player 2. So, our results apply more broadly than games with public chance moves.

        \subsection{Computation of the Decomposition}\label{sec:computation}

We present our algorithm following the same structure as~\cite{Farina19:Efficient}. Like theirs, our algorithm consists of a recursive function, $\textsc{Decompose}$. It takes  three arguments: (i) a sequence pair $(\sigma_1, \sigma_2) \in \Sigma_1 \rele \Sigma_2$, (ii) a subset $\mathcal{S}$ of the set of all relevant sequence pairs, and (iii) a set $\mathcal{D}$ where only the entries indexed by the elements in $\mathcal{S}$ have been filled in. The decomposition for the whole \vsfp{} $\cV$ is computed by calling $\textsc{Decompose}((\emptyseq,\emptyseq), \{(\emptyseq,\emptyseq)\}, \{ (1) \})$---this corresponds to the starting situation in which only the entry $v[\emptyseq,\emptyseq]$ has been filled in (denoted as fill-in step \circled{1} in \cref{fig:examples}). Each call to \textsc{Decompose} returns a pair $(\mathcal{S}',\mathcal{D}')$ of updated indices and partial vectors, to reflect the new entries that were filled in during the call.

$\textsc{Decompose}((\sigma_1, \sigma_2), \mathcal{S}, \mathcal{D})$ operates as follows (we denote with $-i$ the opponent for Player~$i$):\vspace{-1mm}
\begin{enumerate}[leftmargin=*,nolistsep,itemsep=0mm]
  \item Let $\cJ_i \defeq \{I \in \infos{i} : I \rele \sigma_{-i}, \sigma(I) = \sigma_i\}$ for all $i \in \{1,2\}$, and $\cJ^* \gets \emptyset$.
  \item For each $(I_1, I_2) \in \cJ_1\times\cJ_2$ such that $I_1 \conn I_2$, if the $\sigma_{2}$-rank of $I_{1}$ is greater than or equal to the $\sigma_1$-rank of $I_2$, we \emph{branch} on Player~1, update $\cJ^* \gets \cJ^* \cup \{I_1\}$. Else, update $\cJ^* \gets \cJ^* \cup \{I_2\}$.
  \item\label{pt:branch 1} For each $i\in \{1,2\}$ and $I \in \cJ_i$ such that the $\sigma_{-i}$-rank of $I$ is 0, do $\cJ^* \gets \cJ^* \cup \{I\}$.
  \item\label{pt:branch 2} For each $I \in \cJ^*$: (Below we assume that $I \in \infos{1}$, the other case is symmetrical)
     \begin{enumerate}[label={(\alph*)},ref={\ref*{pt:branch 2}(\alph*)}]
       \item\label{pt:step2} Fill in all entries $\{v[(I,a),\sigma_2]:a\in A_{I}\}$ by splitting $v[\sigma_1,\sigma_2]$. This can be expressed using a scaled extension operation as $\mathcal{D} \gets \mathcal{D} \ext^h \symp{|A_{I}|}$ where $h$
                       %is the linear function that
                    extracts $\xi[\sigma_1,\sigma_2]$ from any partially-filled-in vector.
             \item Update $\mathcal{S} \gets \mathcal{S} \cup \{((I,a),\sigma_2)\}$ to reflect that the entries corresponding to $(I,a) \rele \sigma_2$ have been filled in.
             \item\label{pt:step3} For each $a\in A_{I}$ we assign $(\mathcal{S}, \mathcal{D}) \gets \textsc{Decompose}(((I, a), \sigma_2), \mathcal{S}, \mathcal{D})$.
             \item\label{pt:step4} Let $\cK \defeq \{J \in \infos{2}\!:\! I \conn J\}$. For all $J \in \infos{2}$ such that $\sigma(J) \succeq (J'\!, a')$ for some $J'\in\cK, a'\in A_{J'}$:
                        \begin{itemize}[nolistsep,itemsep=0mm]
                            \item If $I \conn J$, then for all $a \in A_J$ we fill in the sequence pair $\xi[\sigma_1, (J,a)]$ by assigning its value in accordance with the \vsfc{} $\xi[\sigma_1, (J,a)] = \sum_{a^* \in A_{I^*}} \xi[(I^*, a^*), (J,a)]$ via the scaled extension $\mathcal{D} \gets \mathcal{D} \ext^h \{ 1\}$ where the linear function $h$ maps a partially-filled-in vector to the value of $\sum_{a^* \in A_{I^*}} \xi[(I^*, a^*), (J,a)]$. %Since this is done for all $a\in A_J$, automatically $\sum_{a \in A_J} \xi[\sigma_1, (J,a)] = \xi[\sigma_1,\sigma_2]$, and we can safely remove the latter constraint.
                            \item Otherwise, we fill in the entries $\{\xi[\sigma_1, (J,a)] : a\in A_J\}$, by splitting the value $\xi[\sigma_1, \sigma(J)]$. In this case, we let
                            $\mathcal{D} \gets \mathcal{D} \ext^h \symp{|A_J|}$ where $h$ extracts the entry $\xi[\sigma_1, \sigma(J)]$ from a partially-filled-in vector in $\mathcal{D}$.
            \end{itemize}
     \end{enumerate}
     \item At this point, all the entries corresponding to indices $\tilde{\mathcal{S}} = \{(\sigma'_1, \sigma'_2): \sigma'_1 \succeq \sigma_1, \sigma'_2 \succeq \sigma_2\}$ have been filled in, and we return $(\mathcal{S} \cup \tilde{\mathcal{S}}, \mathcal{D})$.
\end{enumerate}
The above algorithm formalizes and generalizes the first two examples of \cref{fig:examples}. For example, step \circled{2} of the fill-in order in either example is captured in Step~\ref{pt:step2}, while fill-in step \circled{3} corresponds to Step~\ref{pt:step3}. Finally, fill-in step~\circled{4} corresponds to Step~\ref{pt:step4}.

Compared to the decomposition algorithm by \citet{Farina19:Efficient}, our branching steps (Step~\ref{pt:branch 2}) are significantly more intricate. This is because, compared to their setting (that is, two-player games without chance moves) where at least one player has at most one information set with rank strictly greater than one, we have to account for multiple information sets with rank greater than one. Since two-player games without chance moves are a special case of two-player games with public chance moves, our algorithm completely subsumes that of~\citet{Farina19:Efficient}.

A proof of correctness for the algorithm is in \cref{app:decomposition}. In particular, the following holds.
        \begin{restatable}{theorem}{thmdecomposition}\label{thm:decomposition}
          The \vsfp{} $\cV$ of a two-player perfect-recall triangle-free EFG can be expressed via a sequence of scaled extensions with simplexes and singleton sets:
\begin{align}
  \cV = \{1\} \ext^{h_1} \cX_1 \ext^{h_2} \cX_2 \ext^{h_3} \cdots \ext^{h_n} \cX_n, \text{ where, for } i = 1,\dots, n,\text{ either } \cX_i = \Delta^{s_i}\text{ or } \cX_i = \{1\},\label{eq:vonS decomposition}
\end{align}
and $h_i$ is a linear function.
 Furthermore, an exact algorithm exists to compute such expression in linear time in the dimensionality of $\cV$, and so, in time at most quadratic in the size of the game.
        \end{restatable}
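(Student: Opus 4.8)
The plan is as follows. Since \textsc{Decompose} is fully specified, proving the theorem reduces to showing that the top-level call $\textsc{Decompose}((\emptyseq,\emptyseq),\{(\emptyseq,\emptyseq)\},\{(1)\})$ terminates, fills every coordinate indexed by $\Sigma_1\rele\Sigma_2$ exactly once, emits a well-formed scaled-extension expression, and that this expression equals $\cV$; the running-time bound then follows by bookkeeping. I would prove all of this at once through a loop invariant for \textsc{Decompose}, established by induction on recursion depth.

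For a relevant pair $(\sigma_1,\sigma_2)$, let $B(\sigma_1,\sigma_2)\defeq\{(\sigma_1',\sigma_2')\in\Sigma_1\rele\Sigma_2 : \sigma_1'\succeq\sigma_1,\ \sigma_2'\succeq\sigma_2\}$ denote the block of coordinates below it. The invariant I would establish is: if $\textsc{Decompose}((\sigma_1,\sigma_2),\cS,\cD)$ is called with $\cD$ a valid scaled-extension expression over exactly the index set $\cS$, with $(\sigma_1,\sigma_2)\in\cS$ and $B(\sigma_1,\sigma_2)\cap\cS=\{(\sigma_1,\sigma_2)\}$, then it returns $(\cS',\cD')$ with $\cS'=\cS\cup B(\sigma_1,\sigma_2)$, with $\cD'$ a valid scaled-extension expression over $\cS'$ whose projection onto the $\cS$-coordinates equals $\cD$, and such that the $B(\sigma_1,\sigma_2)$-coordinates of every point of $\cD'$ satisfy every \vsfc{} that is supported on $\cS'$ and mentions some coordinate of $B(\sigma_1,\sigma_2)\setminus\{(\sigma_1,\sigma_2)\}$. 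Instantiating the invariant at the top level ($\cS=\{(\emptyseq,\emptyseq)\}$, hence $B=\Sigma_1\rele\Sigma_2$) yields $\cD'=\cV$, which is exactly \eqref{eq:vonS decomposition}.

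The induction step breaks into three parts. \emph{(a) Coverage and disjointness:} I would check that the set $\cJ^*$ assembled in Steps~\ref{pt:branch 1}--\ref{pt:branch 2} and processed in Step~\ref{pt:branch 2} induces a partition of $B(\sigma_1,\sigma_2)\setminus\{(\sigma_1,\sigma_2)\}$ into the sub-blocks $B((I,a),\sigma_2)$ or $B(\sigma_1,(I,a))$ recursed on in Step~\ref{pt:step3} together with the singleton ``summation'' coordinates filled in Step~\ref{pt:step4}---a combinatorial argument on the sequence trees $\Sigma_1,\Sigma_2$ using that each information set contributes precisely its child sequences and that an infoset of $\sigma_{-i}$-rank $0$ is handled on its own side; disjointness then gives termination and the ``exactly once'' property. \emph{(b) Well-formedness:} every $h_i$ the algorithm creates is either the extraction of one coordinate $v[\cdot,\cdot]$ or a sum $\sum_{a^*}v[(I,a^*),\cdot]$ of coordinates---hence affine and nonnegative on $\bbR_{\ge0}^{\cS}$---and every $\cX_i$ is a nonempty compact convex simplex or $\{1\}$, so each extension satisfies the hypotheses of \cref{def:scaled extension} and the running product stays nonempty, compact, and convex. \emph{(c) Exactness:} the inclusion $\cV\subseteq\cD'$ is immediate, because each structural step enforces a genuine \vsfc{}---a split imposes $\sum_{a\in A_I}v[(I,a),\sigma]=v[\sigma(I),\sigma]$, and a summation step (case $I\conn J$ of Step~\ref{pt:step4}) imposes $v[\sigma_1,(J,a)]=\sum_{a^*\in A_I}v[(I,a^*),(J,a)]$, the Player-1 constraint for $I$ at sequence $(J,a)$---whereas the reverse inclusion $\cD'\subseteq\cV$, i.e.\ that the \vsfc{}s not explicitly enforced are automatically implied (generalizing the hand computation in the first example of \cref{fig:examples}, which \cref{rem:critical} warns can otherwise fail), is where the real work lies and where triangle-freeness enters.

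The crux---and the step I expect to be the main obstacle---is the lemma that \emph{in a triangle-free game, for every relevant $(\sigma_1,\sigma_2)$ and every $I_1\conn I_2$ with $\sigma(I_1)=\sigma_1$ and $\sigma(I_2)=\sigma_2$, the $\sigma_2$-rank of $I_1$ and the $\sigma_1$-rank of $I_2$ are not both at least $2$}. I would prove this by contraposition: assuming both ranks are $\ge 2$, choose distinct $J_a,J_b\in\infos{2}$ with parent $\sigma_2$ each connected to $I_1$, and distinct $I_a,I_b\in\infos{1}$ with parent $\sigma_1$ each connected to $I_2$, and then invoke perfect recall---specifically, that two distinct information sets of one player sharing a parent sequence have pairwise incomparable node sets, since a node of one lying on the root-path of a node of the other would force an extra action into the latter's parent sequence---to pin down enough ancestry relations among $\{I_1,I_a,I_b\}$ and $\{I_2,J_a,J_b\}$ to exhibit a configuration $I\conn J\land I'\conn J'\land I\conn J'$ of the kind forbidden by \cref{def:triangle freeness}. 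Given this lemma, the rank comparison always selects a side whose rank is $1$, so each summation coordinate $v[\sigma_1,(J,a)]$ in Step~\ref{pt:step4} has a unique responsible information set, every un-enforced \vsfc{} can be rewritten as an iterated sum over child actions whose summation order may be swapped, and part~(c) closes; triangle-freeness is used nowhere else. For the complexity claim, each of the $|\Sigma_1\rele\Sigma_2|$ coordinates is filled exactly once and, with rank tables and connectivity lists precomputed, each fill-in plus its bookkeeping costs $O(1)$ amortized, so the decomposition is produced in $O(|\Sigma_1\rele\Sigma_2|)$ time---linear in $\dim\cV$ and at most quadratic in the game-tree size.
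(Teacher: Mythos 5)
Your proposal follows essentially the same route as the paper: an inductive contract for \textsc{Decompose} (and its row/column fill-in subroutines) establishing coverage and disjointness of the filled blocks, well-formedness of each scaled extension, and that the un-enforced \vsfc{}s are implied by an exchange of the double sum over child actions, with triangle-freeness entering only through the rank lemma. One correction of emphasis: that lemma is not the main obstacle---it follows in two lines directly from \cref{def:triangle freeness}, since the two rank hypotheses immediately supply witnesses $I'_1 \neq I_1$ with $I'_1 \conn I_2$ and $I'_2 \neq I_2$ with $I_1 \conn I'_2$, which together with $I_1 \conn I_2$ form exactly the forbidden triangle (no perfect-recall or ancestry analysis is needed); the bulk of the work is instead in verifying the pre- and postconditions at each recursive call site, which your invariant correctly identifies but does not carry out.
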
\vspace{-1mm} 
    \section{Bridging the Gap Between $\cV$ and $\Xi$}\label{sec:integrality}

As noted by~\citet{Stengel08:Extensive}, the inclusion $\Xi \subseteq \cV$ holds trivially in any game. The reverse inclusion, $\Xi \supseteq \cV$, was shown for two-player games without chance moves, but no complete characterization as to when that reverse inclusion holds was known before our paper. In \cref{thm:xi equal vonS}, we contribute a new connection between the reverse inclusion $\Xi \supseteq \cV$ and the integrality of the vertices of the \vsfp{} (all proofs are in \cref{app:xi equal vonS}).

\begin{restatable}{theorem}{thmvinxi}\label{thm:xi equal vonS}
  Let $\Gamma$ be a two-player perfect-recall extensive-form game, let $\cV$ be its \vsfp{}, and let $\Xi$ be its polytope of correlation plans. Then, $\Xi = \cV$ if and only if all vertices of $\cV$ have integer $\{0,1\}$ coordinates.
\end{restatable}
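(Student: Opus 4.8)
The plan is to prove the two directions separately, with the nontrivial content concentrated in showing that integrality of the vertices of $\cV$ implies $\Xi \supseteq \cV$ (the inclusion $\Xi \subseteq \cV$ is noted to hold in every game, so only the reverse needs work).

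First I would handle the easy direction: if $\Xi = \cV$, then every vertex of $\cV$ is a vertex of $\Xi$, and I claim every vertex of $\Xi$ has $\{0,1\}$ coordinates. The idea is that $\Xi$ is, by definition, the convex hull of the ``product'' correlation plans arising from deterministic plan pairs $(\pi_1,\pi_2) \in \Pi_1\times\Pi_2$: for such a pair the correlation-plan vector has entry $1$ at $[\sigma_1,\sigma_2]$ exactly when $\pi_1$ plays $\sigma_1$ and $\pi_2$ plays $\sigma_2$, and $0$ otherwise. Since every point of $\Xi$ is a convex combination of these integral generators, the vertices of $\Xi$ are among them, hence integral; so the vertices of $\cV=\Xi$ are integral.

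For the hard direction, suppose every vertex of $\cV$ has integer $\{0,1\}$ coordinates. Since $\cV$ is a polytope, it is the convex hull of its vertices, so it suffices to show that each $\{0,1\}$-valued vertex $\vec v$ of $\cV$ lies in $\Xi$, i.e., that $\vec v$ is the correlation plan induced by some deterministic plan pair $(\pi_1,\pi_2)$. The ``probability-mass-conserving'' constraints in \eqref{eq:vonS constraints} applied to a $\{0,1\}$-vector $\vec v$ force a combinatorial structure: $v[\emptyseq,\emptyseq]=1$; for each information set $I_1$ with $\sigma(I_1) \rele \sigma_2$ and $v[\sigma(I_1),\sigma_2]=1$, the splitting constraint $\sum_{a} v[(I_1,a),\sigma_2] = v[\sigma(I_1),\sigma_2]$ forces exactly one action $a$ at $I_1$ to carry the mass, and likewise for Player 2. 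I would use this to extract a candidate $\pi_1$ (the action chosen at each reachable $I_1$, read off from the entries $v[(I_1,a),\emptyseq]$ or, where those are undefined because $(I_1,a)\not\rele\emptyseq$ — which cannot happen since $\emptyseq$ is relevant to everything — from an appropriate relevant column) and similarly $\pi_2$, then verify that the induced correlation plan equals $\vec v$ on every relevant sequence pair. The verification amounts to checking that $v[\sigma_1,\sigma_2]=1$ iff $\pi_1$ plays $\sigma_1$ and $\pi_2$ plays $\sigma_2$; one direction is immediate from how $\pi_1,\pi_2$ were read off, and the other follows by inducting down the tree using the conservation constraints together with perfect recall (so that the action choices are consistent across the nodes of an information set).

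The main obstacle I anticipate is the reconstruction step: showing that a $\{0,1\}$-valued point satisfying only the \emph{polynomially many} von Stengel--Forges constraints of $\cV$ really does come from a single coherent pair of reduced-normal-form plans, rather than from some ``inconsistent'' assignment that the weaker constraint system fails to rule out. The delicate part is that $\cV$ only constrains \emph{relevant} sequence pairs, so one must argue that the action chosen at an information set $I_1$ is well-defined and independent of which relevant column $\sigma_2$ one reads it from — this is exactly where connectedness/relevance and perfect recall must be invoked — and that the resulting $\pi_1,\pi_2$ are themselves valid reduced-normal-form plans. I would expect this to be carried out by a careful top-down induction on the game tree, and it is plausible the authors package part of it by appealing to the scaled-extension decomposition of \cref{thm:decomposition} in the triangle-free case, though the statement here is for general two-player games and so presumably is proved directly from \eqref{eq:vonS constraints} and the definition of $\Xi$.
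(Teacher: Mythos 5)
Your proposal is correct and follows essentially the same route as the paper: the forward direction via $\Xi$ being the convex hull of the integral correlation plans $f(\vec{1}_{(\pi_1,\pi_2)})$, and the reverse direction by reading off pure sequence-form strategies from the $\emptyseq$ row and column of an integral vertex and then proving $v[\sigma_1,\sigma_2]=v[\sigma_1,\emptyseq]\cdot v[\emptyseq,\sigma_2]$ by a depth induction on the \vsfc{}s (the paper's \cref{lem:v zero,lem:v product}). The ``which column do I read the action from'' worry you raise is resolved exactly as you suggest---the paper only ever reads from the empty-sequence row/column, which is relevant to every information set, and the product lemma then forces consistency everywhere else.
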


%\cref{thm:xi equal vonS} connects a continuous, analytical statement (the inclusion of the polytope $\cV$ into the polytope $\Xi$) with a discrete, combinatorial statement about the integrality of the vertices of $\cV$. This will be recurring theme in this paper: the analytical point of view of $\cV$ and the combinatorial aspect of it are tightly interconnected.

%In the rest of this paper, we set out to study under what conditions the integrality of the vertices of $\cV$ can be shown. Our proof fundamentally relies on a particular structural decomposition theorem for the \vsfp{}, that we introduce in the next section.

%As uncovered in \cref{thm:xi equal vonS}, the identification of $\Xi$ with the \vsfp{} $\cV$ holds iff the vertices of $\cV$ are integral. We show that the latter is always the case in triangle-free games.

As it turns out, the scaled-based decomposition of $\cV$ can be used to conclude the integrality of the vertices of $\cV$, by leveraging the following analytical result about the scaled extension operation.

\begin{restatable}{lemma}{lemvertscext}\label{lem:vertices of scext}
    Let $\cX, \cY$, and $h$ be as in \cref{def:scaled extension}. If $\cX$ is a convex polytope with vertices $\{\vec{x}_1, \dots, \vec{x}_n\}$, and $\cY$ is a convex polytope with vertices $\{\vec{y}_1, \dots, \vec{y}_m\}$, then $\cX \ext^h \cY$ is a convex polytope whose vertices are a nonempty subset of $\{(\vec{x}_i, h(\vec{x}_i)\vec{y}_j) : i \in \{1,\dots, n\}, j \in \{1,\dots, m\}\}$.
\end{restatable}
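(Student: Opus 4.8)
The plan is to show two things: first, that $\cX \ext^h \cY$ is a convex polytope, and second, that every one of its vertices lies in the finite candidate set $S \defeq \{(\vec{x}_i, h(\vec{x}_i)\vec{y}_j)\}$. For the first part, I would note that $\cX \ext^h \cY = \{(\vec{x},\vec{y}) : \vec{x}\in\cX,\ \vec{y}\in h(\vec{x})\cY\}$ can be written as the intersection of finitely many closed halfspaces: writing $\cY = \{\vec{y} : \mat{A}\vec{y}\le\vec{b}\}$ (with $\vec{b}\ge\vec{0}$, since $\vec{0}$ may be assumed to be dominated or handled separately — more carefully, each facet inequality of $\cY$ of the form $\vec{a}^\top\vec{y}\le\beta$ translates to $\vec{a}^\top\vec{y}\le\beta\, h(\vec{x})$, which is linear in $(\vec{x},\vec{y})$ because $h$ is affine), together with the constraints defining $\cX$. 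Hence $\cX\ext^h\cY$ is a (bounded, by compactness) polyhedron, i.e.\ a polytope.

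For the second part, let $(\vec{x}^\star,\vec{y}^\star)$ be a vertex of $\cX\ext^h\cY$. I first argue $\vec{x}^\star$ is a vertex of $\cX$: if $\vec{x}^\star = \tfrac12(\vec{x}' + \vec{x}'')$ with $\vec{x}',\vec{x}''\in\cX$ distinct, then since $\vec{y}^\star \in h(\vec{x}^\star)\cY$ we can write $\vec{y}^\star = h(\vec{x}^\star)\vec{w}$ for some $\vec{w}\in\cY$ (treating the degenerate case $h(\vec{x}^\star)=0$ separately, where $\vec{y}^\star = \vec 0$ and one perturbs only in the $\vec{x}$-coordinate), and then $(\vec{x}',h(\vec{x}')\vec{w})$ and $(\vec{x}'',h(\vec{x}'')\vec{w})$ both lie in $\cX\ext^h\cY$ and average (using affinity of $h$) to $(\vec{x}^\star,\vec{y}^\star)$, contradicting vertexhood unless they coincide. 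So $\vec{x}^\star = \vec{x}_i$ for some $i$. Then $\vec{y}^\star \in h(\vec{x}_i)\cY$, and I claim $\vec{y}^\star = h(\vec{x}_i)\vec{y}_j$ for some vertex $\vec{y}_j$ of $\cY$: if $h(\vec{x}_i)=0$ this is the zero vector, $=h(\vec{x}_i)\vec{y}_j$ for any $j$; otherwise $\vec{y}^\star/h(\vec{x}_i)\in\cY$, and if it were a proper convex combination of points of $\cY$, scaling back by $h(\vec{x}_i)$ and pairing with $\vec{x}_i$ would exhibit $(\vec{x}^\star,\vec{y}^\star)$ as a proper convex combination in $\cX\ext^h\cY$, again a contradiction; hence $\vec{y}^\star/h(\vec{x}_i)$ is a vertex $\vec{y}_j$ of $\cY$. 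Finally, non-emptiness of the vertex set follows because $\cX\ext^h\cY$ is a nonempty compact polytope (nonempty since $\cX,\cY$ are, and $\cY$ nonempty guarantees $h(\vec x)\cY\neq\emptyset$), so it has at least one vertex.

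The main obstacle, I expect, is the careful handling of the degenerate case $h(\vec{x}^\star)=0$: there the ``fiber'' $h(\vec{x}^\star)\cY = \{\vec{0}\}$ collapses, so one cannot freely perturb in the $\vec{y}$-direction, and the argument that $\vec{x}^\star$ must be a vertex of $\cX$ needs the observation that any $\cX$-perturbation of $\vec{x}^\star$ keeps $\vec{y}^\star=\vec0$ feasible (since $\vec 0 \in h(\vec x)\cY$ for every $\vec x$ with $h(\vec x)\ge 0$, because $\cY$ is convex and compact but need not contain $\vec 0$ — here one uses $h(\vec x)\cY \ni h(\vec x)\vec y_1$, and when $h(\vec x)$ is small this is close to but not equal to $\vec 0$, so one must instead perturb within the set $\{\vec x : h(\vec x) = 0\}\cap\cX$ or argue directly that $(\vec x^\star, \vec 0)$ decomposes accordingly). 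I would isolate this as a short case analysis rather than trying to unify it with the generic case.
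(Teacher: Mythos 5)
Your proof is correct, but it takes a genuinely different route from the paper's. The paper argues ``bottom-up'': given any $\vec{z}=(\vec{x},h(\vec{x})\vec{y})\in\cX\ext^h\cY$, it writes $\vec{x}=\sum_i\lambda_i\vec{x}_i$ and $\vec{y}=\sum_j\mu_j\vec{y}_j$ and uses affinity of $h$ to obtain $\vec{z}=\sum_{i,j}\lambda_i\mu_j\,(\vec{x}_i,h(\vec{x}_i)\vec{y}_j)$, so that $\cX\ext^h\cY$ equals the convex hull of the candidate set $S$; polyhedrality, nonemptiness of the vertex set, and the containment of the vertices in $S$ all follow in one stroke, with no case distinction on $h(\vec{x})=0$. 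You instead work ``top-down'': you first establish polyhedrality via an H-representation (the constraints for $\cX$ together with $\mat{A}\vec{w}\le h(\vec{x})\vec{b}$ — this is fine; the only observation needed is that when $h(\vec{x})=0$ the fiber described by $\mat{A}\vec{w}\le\vec{0}$ is the recession cone of $\cY$, which is $\{\vec{0}\}$ by boundedness, so your aside about assuming $\vec{b}\ge\vec{0}$ is unnecessary), and then show by a perturbation argument that every extreme point lies in $S$. Both arguments are valid, and your concluding worry about the degenerate case is unfounded: the decomposition into $(\vec{x}',h(\vec{x}')\vec{w})$ and $(\vec{x}'',h(\vec{x}'')\vec{w})$ for a fixed $\vec{w}\in\cY$ works uniformly, since when $h(\vec{x}^\star)=0$ one simply has $\vec{y}^\star=\vec{0}=h(\vec{x}^\star)\vec{w}$, so no separate handling of $\{\vec{x}:h(\vec{x})=0\}$ is required. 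The trade-off: your approach characterizes the vertices more tightly (each vertex is exactly of the form $(\vec{x}_i,h(\vec{x}_i)\vec{y}_j)$ with $\vec{x}_i$ a vertex of $\cX$), whereas the paper's convex-hull computation is shorter and delivers exactly the superset statement that is reused later in the integrality argument.
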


In particular, by applying \cref{lem:vertices of scext} inductively on the structure of the scaled-extension-based structural decomposition of $\cV$, we obtain the following theorem.

\begin{restatable}{theorem}{thmintegrality}\label{thm:integrality}
    Let $\cV$ be the \vsfp{} of a two-player triangle-free game (\cref{def:triangle freeness}). All vertices of $\cV$ have integer $\{0,1\}$ coordinates.
\end{restatable}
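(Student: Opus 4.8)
\textbf{The plan} is to read off the vertices of $\cV$ directly from its scaled-extension-based decomposition, using \cref{thm:decomposition} together with the vertex-tracking result of \cref{lem:vertices of scext}. Since the game is triangle-free, \cref{thm:decomposition} gives $\cV = \{1\} \ext^{h_1} \cX_1 \ext^{h_2} \cX_2 \ext^{h_3} \cdots \ext^{h_n} \cX_n$, where each $\cX_i$ is either a simplex $\Delta^{s_i}$ or the singleton $\{1\}$ (the point $1 \in \bbR_{\ge 0}$) and each $h_i$ is linear. I would set $\cV_0 \defeq \{1\}$ and, for $k = 1, \dots, n$, $\cV_k \defeq \cV_{k-1} \ext^{h_k} \cX_k$, so that $\cV_n = \cV$ and, by repeated application of \cref{lem:vertices of scext}, each $\cV_k$ is a convex polytope. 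The claim to prove by induction on $k$ is: \emph{every vertex of $\cV_k$ has all coordinates in $\{0,1\}$}. The base case $k=0$ is trivial, since $\cV_0$ is the single point $(1)$.

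For the \textbf{inductive step}, assume every vertex of $\cV_{k-1}$ is a $\{0,1\}$-vector. Two ingredients are needed: (i) $h_k$ maps every vertex of $\cV_{k-1}$ to a value in $\{0,1\}$; and (ii) the vertices of $\cX_k$ are $\{0,1\}$-vectors. Granting these, \cref{lem:vertices of scext} says every vertex of $\cV_k = \cV_{k-1} \ext^{h_k} \cX_k$ has the form $(\vec x, h_k(\vec x)\,\vec y)$ with $\vec x$ a vertex of $\cV_{k-1}$ and $\vec y$ a vertex of $\cX_k$; then $\vec x \in \{0,1\}^{\dim \cV_{k-1}}$, $h_k(\vec x) \in \{0,1\}$, and $\vec y \in \{0,1\}^{s_k}$, so $h_k(\vec x)\vec y \in \{0,1\}^{s_k}$ (it is either $\vec 0$ or $\vec y$) and the whole vertex lies in $\{0,1\}^{\dim \cV_k}$, completing the induction. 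Ingredient (ii) is immediate: the vertices of $\Delta^{s_k}$ are the standard basis vectors, and the only vertex of $\{1\}$ is $(1)$.

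\textbf{The main obstacle} is ingredient (i): that each $h_k$ is $\{0,1\}$-valued on the vertices of $\cV_{k-1}$. I would handle this by cases according to how $h_k$ is produced by \textsc{Decompose}. In Step~\ref{pt:step2}, and in the ``splitting'' bullet of Step~\ref{pt:step4}, $h_k$ merely extracts a single already-filled-in coordinate $\xi[\cdot,\cdot]$ of the partial vector, so its value at a $\{0,1\}$ vertex is trivially in $\{0,1\}$. In the ``summing'' bullet of Step~\ref{pt:step4}, $h_k$ sends a partial vector to a sum $\sum_{a^* \in A_{I^*}} \xi[(I^*,a^*),(J,a)]$; here the key observation is that the corresponding \vsfc{} $\sum_{a^*} \xi[(I^*,a^*),(J,a)] = \xi[\sigma(I^*),(J,a)]$ has, by an earlier scaled-extension step, already been enforced as an \emph{identity} on all of $\cV_{k-1}$ (not merely at its vertices). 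This is exactly the structural feature that fails in the situation of \cref{rem:critical} and that triangle-freeness guarantees (via \cref{thm:decomposition}) can always be arranged through the careful ordering of fill-in steps. Consequently $h_k$ agrees on $\cV_{k-1}$ with extraction of the single coordinate $\xi[\sigma(I^*),(J,a)]$, reducing this case to the previous one. Making this reduction rigorous---i.e., the bookkeeping of which \vsfc{}s have been built in by the time \textsc{Decompose} emits a given summing-type $h_k$---is where the bulk of the work lies.

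Finally, carrying the induction up to $k = n$ yields that all vertices of $\cV = \cV_n$ have $\{0,1\}$ coordinates, which is the theorem; combined with \cref{thm:xi equal vonS}, it also gives $\Xi = \cV$ in every triangle-free game.
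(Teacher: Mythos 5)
Your overall strategy---inducting over the scaled-extension decomposition of \cref{thm:decomposition} and propagating vertex structure with \cref{lem:vertices of scext}---is exactly the paper's. The difference is that you carry the \emph{stronger} inductive invariant ``all vertex coordinates lie in $\{0,1\}$,'' which forces you to show that every summing-type $h_k$ is $\{0,1\}$-valued on the vertices of $\cV_{k-1}$. The reduction you propose for this step does not work as stated: you claim that the \vsfc{} $\sum_{a^*}\xi[(I^*,a^*),(J,a)] = \xi[\sigma(I^*),(J,a)]$ has already been enforced on $\cV_{k-1}$, so that $h_k$ agrees with extraction of the single coordinate $\xi[\sigma(I^*),(J,a)]$. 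But $\sigma(I^*)=\sigma_1$, so $\xi[\sigma(I^*),(J,a)]=\xi[\sigma_1,(J,a)]$ is precisely the \emph{new} entry that this scaled-extension step is introducing; it is not a coordinate of $\cV_{k-1}$, and there is nothing already filled in for $h_k$ to agree with. A sum of several $\{0,1\}$ entries can a priori exceed $1$, so the gap you flag as ``where the bulk of the work lies'' is real and your proposed fix does not close it. (One could close it by separately proving, via a chain of mass-conservation constraints, that each summing $h_k$ is bounded by $1$ on all of $\cV_{k-1}$, but that is additional work you have not done.)

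The paper avoids this entirely by weakening the inductive invariant: it proves only that all vertices of each $\cV_k$ have \emph{integer} coordinates. Then the summing case is trivial---a sum of integers is an integer---and the extraction/simplex case is handled exactly as you describe. The conclusion is upgraded from ``integer'' to ``$\{0,1\}$'' in a single final step using the containment $\cV \subseteq [0,1]^{|\Sigma_1\rele\Sigma_2|}$ (every coordinate of a point of $\cV$ is nonnegative and telescopes up to $v[\emptyseq,\emptyseq]=1$ through the mass-conservation constraints, hence lies in $[0,1]$; an integer in $[0,1]$ is $0$ or $1$). I recommend you adopt this two-stage argument: it makes the problematic case disappear rather than requiring the delicate bookkeeping you anticipate.
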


Finally, combining \cref{thm:integrality} and \cref{thm:xi equal vonS}, we obtain the central theorem of this paper.
\begin{restatable}{theorem}{thmmain}\label{thm:main}
  In a two-player perfect-recall extensive-form game that satisfies the triangle-freeness condition (\cref{def:triangle freeness}), the polytope of correlation plans coincides with the \vsfp{}. Consequently, an optimal EFCE, EFCCE, or NFCCE can be computed in polynomial time (in the size of the input extensive-form game) in two-player triangle-free games.
\end{restatable}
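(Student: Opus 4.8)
The plan is to obtain \cref{thm:main} purely by assembling the structural and analytical results already established, so at this point the argument is bookkeeping rather than new combinatorics. There are two things to prove: the geometric equality $\Xi = \cV$ in every triangle-free game, and the computational consequence about optimal EFCE/EFCCE/NFCCE. First I would handle the equality. By \cref{thm:decomposition}, in a triangle-free game $\cV$ can be written as an iterated scaled extension $\{1\} \ext^{h_1} \cX_1 \ext^{h_2} \cdots \ext^{h_n} \cX_n$ with each $\cX_i$ equal to a simplex $\Delta^{s_i}$ or to the singleton $\{1\}$ and each $h_i$ linear. Then \cref{thm:integrality} gives that all vertices of $\cV$ have $\{0,1\}$ coordinates; I would recover it by induction along the decomposition, starting from the integral vertex $(1)$ of $\{1\}$ and using \cref{lem:vertices of scext}, which says the vertices of $\cZ \ext^h \cX$ lie among $\{(\vec z, h(\vec z)\vec y)\}$ for vertices $\vec z$ of $\cZ$ and $\vec y$ of $\cX$. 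Since the relevant $\cX$ has $\{0,1\}$ vertices and each $h_i$ produced by \textsc{Decompose} either reads off a single already-filled coordinate $\xi[\sigma_1,\sigma_2]$ (which the inductive hypothesis makes $0$ or $1$, so $h(\vec z)\vec y$ is $\vec 0$ or $\vec y$) or computes a probability-mass-conserving sum that the omitted \vsfc{}s pin to such a coordinate, every coordinate of $h(\vec z)\vec y$ stays in $\{0,1\}$; hence all vertices of $\cV$ are integral.

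Next I would invoke \cref{thm:xi equal vonS}: since $\Xi \subseteq \cV$ holds in any game and $\cV$ has only $\{0,1\}$-integral vertices, the ``if'' direction of that theorem yields $\Xi = \cV$ for every triangle-free game, which is the first assertion of \cref{thm:main}.

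For the computational consequence I would recall two facts, both from prior work and restated in the preliminaries: (i) $\cV$ is cut out by the ``probability-mass-conserving'' \vsfc{}s of \eqref{eq:vonS constraints}, whose number is polynomial (at most quadratic) in the size of the game; and (ii) an optimal EFCE, EFCCE, or NFCCE is the optimum of a linear objective over $\Xi$ subject to a polynomial number of linear incentive constraints in the entries of the correlation plan. Substituting $\Xi = \cV$, the program becomes a linear program with polynomially many variables and polynomially many constraints, explicitly constructible from the game description, hence solvable in polynomial time by any standard LP method; and since triangle-freeness itself is checkable in polynomial time (\cref{def:triangle freeness}), the whole pipeline runs in time polynomial in the input.

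I do not expect a serious obstacle at this stage, because the heavy lifting lives in \cref{thm:decomposition} (building the decomposition for arbitrary triangle-free games) and in \cref{thm:xi equal vonS} (the integrality-versus-equality characterization). The one place a reader may want detail is the inductive step for \cref{thm:integrality}: one must verify that each $h_i$ evaluates to a value in $\{0,1\}$ -- not merely a nonnegative integer -- on the integral vertices constructed so far, which is exactly why the two cases above (coordinate read-off vs.\ constraint-pinned sum) must be spelled out. Everything else is a direct citation, and no further analysis of the game tree or of $\Xi$ is required beyond what \cref{thm:decomposition} and \cref{thm:xi equal vonS} already provide.
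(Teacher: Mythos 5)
Your proposal is correct and follows essentially the same route as the paper: combine \cref{thm:integrality} (itself obtained by induction along the decomposition of \cref{thm:decomposition} via \cref{lem:vertices of scext}) with the characterization in \cref{thm:xi equal vonS} to get $\Xi = \cV$, and then observe that the equilibrium computation reduces to a linear program with polynomially many variables and constraints. The only cosmetic difference is in the inductive step for integrality, where you argue directly that each $h_i$ evaluates in $\{0,1\}$ while the paper shows integrality and then clips to $\{0,1\}$ using $\cV \subseteq [0,1]^{|\Sigma_1\rele\Sigma_2|}$; both work.
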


%We conclude this section with two remarks about how our result can be used in practice to compute an EFCE, EFCCE, or NFCCE of a two-player triangle-free game.
%
%\begin{remark}
A consequence of $\cV = \Xi$ is that the linear programs for EFCE~\citep{Stengel08:Extensive}, EFCCE~\citep{Farina20:Coarse} and NFCCE~\citep{Farina20:Coarse}---originally formulated for two-player games without chance moves only---hold verbatim for any triangle-free game. So, an optimal EFCE, EFCCE, and NFCCE can be computed in polynomial time as the solution of those linear programs.
%\end{remark}
%
%\begin{remark}
Furthermore, the scaled-extension-based decomposition for triangle-free games (\cref{sec:decomposition}) can be combined with the \emph{scaled extension regret circuit} introduced by~\citet{Farina19:Efficient,Farina19:Regret} to construct a scalable regret minimization algorithm for $\cV=\Xi$. That, in turn, can be used to compute an EFCE, EFCCE, and NFCCE in large triangle-free games that are too large for traditional linear programming methods.
%\end{remark} 
    \section{Experimental Evaluation}
\vspace{-1mm}

We implemented the scaled-extension-based decomposition routine of \cref{sec:decomposition}. We test our decomposition algorithm for triangle-free games on Goofspiel~\citep{Ross71:Goofspiel}, a popular benchmark game in computational game theory. In Goofspiel, each player has a personal deck of cards made of $k$ different ranks (from $1$ to $k$). A third deck (the ``prize'' deck) is shuffled and put face down on the board at the beginning of the game. In each turn, the topmost card from the prize deck is \emph{publicly} revealed. Then, each player privately picks a card from their hand---this card acts as a bid to win the card that was just revealed from the prize deck. The player that bids the highest wins the prize card. We use an established tie-breaking rule: the prize card is discarded if the players' bids are equal. Furthermore, we adopt the convention that only the winner is revealed, but not the bids, in accordance with prior computational game theory literature~\cite{Lanctot09:Monte,Lanctot13:Monte}. The players’ scores are computed as the sum of the values of the prize cards they have won. Because of the tie-breaking rule, Goofspiel is a general-sum game. Furthermore, since all chance outcomes are public, it is a triangle-free game.

In \cref{fig:experiments}(left) we report the performance of our decomposition routine for $k=3,4,5$, both in terms of number of scaled extension operations required in the decomposition (\cref{thm:decomposition}) and of runtime of our single-threaded implementation, as well as the dimensions of the games. The runtime was averaged over 100 independent runs. Our decomposition algorithm performs well, and is able to scale to the largest game (Goofspiel with $k=5$ ranks, which has $3.6\times 10^7$ relevant sequence pairs). In \cref{fig:experiments}(right) we used the characterization $\Xi = \cV$ to compute the set of all payoffs that can be reached by an EFCE, EFCCE, or NFCCE in $3$-rank Goofspiel. The sets are highly non-trivial, and reinforce the observation that the behaviors and incentives that can be induced through extensive-form correlation are subtle and complex~\citep{Farina19:Correlation}. The sets of
reachable payoff vectors was computed using \emph{Polymake}, a library for
computational polyhedral geometry~\citep{Gawrilow00:Polymake,Assarf17:Computing}.

\begin{figure}[H]
    \vspace{-2mm}
    \begin{minipage}[b]{9.6cm}
        \begin{figure}[H]\small\centering
        \setlength{\tabcolsep}{.6mm}%
        \renewcommand{\arraystretch}{1.3}%
        \sisetup{
            scientific-notation=true,
            round-precision=1,
            round-mode=places,
            exponent-product={\mkern-4mu\times\mkern-4.5mu}
        }%
    \scalebox{.94}{\begin{tabular}{c|cc|cc|cc}
        \toprule
        \multirow{2}{*}{\begin{minipage}{1cm}\centering\bf Deck\\size\end{minipage}} & \multicolumn{2}{c|}{\bf Information sets} & \multicolumn{2}{c|}{\bf Sequences} & \multicolumn{2}{c}{\bf Decomposition} \\
        & $|\infos{1} \cup \infos{2}|$ & $|\infos{1}\conn\infos{2}|$ & $|\Sigma_1\cup\Sigma_2|$ & $|\Sigma_1\rele\Sigma_2|$ & Num $\ext^h$ & Runtime\\
        \midrule
        $3$ ranks & \num{426} & \num{1077} & \num{524} & \num{3262} & \num{2931} & 2ms  \\
        $4$ ranks & \num{17432} & \num{80884} & \num{21298} & \num{265393} & \num{235956} & 1.1s\\
        $5$ ranks & \num{1175330} & \num{10505585} & \num{1428452} & \num{36102736} & \num{31901355} & 43.8s  \\
        \bottomrule
    \end{tabular}}\vspace{-1mm}
    \caption{ (Left) Dimensions of games and runtime of decomposition algorithm (\cref{thm:decomposition}). (Right) Payoffs that can be reached using an EFCE, EFCCE, or NFCCE in $3$-rank Goofspiel.}\label{fig:experiments}
    \end{figure}
    \end{minipage}%
    \begin{minipage}[b]{3cm}
      \includegraphics[scale=.77]{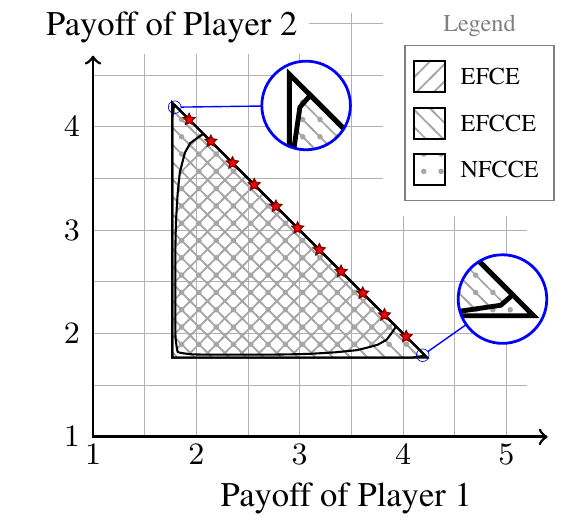}
    \end{minipage}
    \vspace{-2mm}
\end{figure}

We also implemented the linear programming formulation of EFCE described by~\citet{Stengel08:Extensive}, and the scalable regret minimization algorithm of~\citet{Farina19:Efficient}. We use the Gurobi commercial linear programming solver to solve the linear program formulation. As the game size increases, the barrier algorithm is the only algorithm that can solve the linear program. However, even that quickly becomes impractical. In the largest game, Gurobi uses roughly 200GB of memory, spends approximately 90 minutes to precondition the linear program, and requires slightly more than 20 minutes to perform each iteration of the barrier method using 30 threads. The regret minimization scales significantly more favorable in the large game. It requires roughly 6 seconds per iteration, and reaches $10^{-2}$ infeasibility in 4 minutes, $10^{-3}$ infeasibility in 12 minutes, and $10^{-4}$ infeasibility in 110 minutes. Additional data about the experiment is available in \cref{app:experiments}.

    \section{Conclusions}
\vspace{-1mm}

%We gave a positive complexity result regarding extensive-form correlation. Specifically, w

We showed that an optimal extensive-form correlated equilibrium, extensive-form coarse correlated equilibrium, and normal-form coarse correlated equilibrium can be computed in polynomial time in two-player perfect-recall games that satisfy a certain \emph{triangle-freeness} condition that we introduced and that can be checked in polynomial time. To show that such equilibria can be found in polynomial time, we gave and combined several results that may be of independent interest: (1) the existence of a scaled-extension-based structural decomposition for the \vsfp{} of the game, (2) a characterization of when the \vsfp{} coincides with the polytope of correlation plans, and (3) a result about the integrality of the vertices of the \vsfp{} in triangle-free games. 

    \section*{Broader Impact}

     In this paper we give a \emph{positive} complexity result, showing that optimal equilibrium according to three important extensive-form imperfect-information game correlated solution concepts can be computed efficiently in settings---two-player games with public chance moves---where it was generally believed to be impossible. In fact, we showed that this can be done more broadly: in all games where a certain triangle-freeness condition holds.

    Correlated solution concepts have many advantages. First, they enable incentive-compatible coordination of agents. Such coordination is achieved via incentives, rather than forcing: mediators in correlated solution concepts are only able to recommend behavior, but not force it. So, it is up to the mediator to come up with a correlated distribution of recommendations such that no agent has incentive to deviate from the recommendations. Second, in some general-sum interactions these solution concepts are known to enable significantly higher social welfare than Nash equilibrium, while at the same time sidestepping some of the other shortcomings of Nash equilibrium (for example, some equilibrium selection issues).

    In this paper, we are particularly interested in \emph{optimal} correlated equilibria. In other words, our technology can empower the system designer (mediator) to select, among the infinite number of correlated equilibria of the game, one that maximizes a given objective. For example, this technology could be used to find correlated equilibria than maximize social welfare, leading to highest societal good. However, like most technology, our technology has potential for abuse. If used maliciously, the ability to select particular correlated equilibria could be used to \emph{minimize} social welfare, maximize only one of the agent's utility, or minimize all others' utilities---thereby furthering existing inequality or creating new inequality.

    \bibliographystyle{plainnat}
    \bibliography{dairefs}

\iftrue
    \clearpage
    \onecolumn
    \appendix

    \section{Scaled-Extension-Based Structural Decomposition for $\cV$}\label{app:decomposition}

\subsection{Triangle-Freeness}

\begin{lemma}\label{lem:ranks}
    Consider a triangle-free game, let $(\sigma_1, \sigma_2) \in \Sigma_1 \rele \Sigma_2$, and let $I_1 \conn I_2$ be such that $\sigma(I_1) = \sigma_1, \sigma(I_2) = \sigma_2$. Then, at most one between the $\sigma_1$-rank of $I_2$ and the $\sigma_2$-rank of $I_1$ is strictly larger than $1$.
\end{lemma}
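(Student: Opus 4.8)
The plan is a short argument by contradiction straight from the definitions. Suppose that both the $\sigma_1$-rank of $I_2$ and the $\sigma_2$-rank of $I_1$ are at least $2$; I will produce a configuration forbidden by the triangle-freeness condition of \cref{def:triangle freeness}.

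First I would unpack the two rank hypotheses. The set $\{I \in \infos{1} : I \conn I_2,\ \sigma(I) = \sigma_1\}$ that defines the $\sigma_1$-rank of $I_2$ contains $I_1$ (because $I_1 \conn I_2$ and $\sigma(I_1) = \sigma_1$), so if this rank is at least $2$ there is some $\tilde I_1 \in \infos{1}$, distinct from $I_1$, with $\tilde I_1 \conn I_2$ and $\sigma(\tilde I_1) = \sigma_1$. Symmetrically, the set $\{J \in \infos{2} : J \conn I_1,\ \sigma(J) = \sigma_2\}$ defining the $\sigma_2$-rank of $I_1$ contains $I_2$, so there is some $\tilde J_2 \in \infos{2}$, distinct from $I_2$, with $\tilde J_2 \conn I_1$ and $\sigma(\tilde J_2) = \sigma_2$.

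Next I would apply \cref{def:triangle freeness} to the distinct Player-1 information sets $I_1, \tilde I_1$ (both with parent sequence $\sigma_1$) and the distinct Player-2 information sets $\tilde J_2, I_2$ (both with parent sequence $\sigma_2$). Using the symmetry of $\conn$, we have $I_1 \conn \tilde J_2$ (from the $\sigma_2$-rank hypothesis), $I_1 \conn I_2$ (by assumption of the lemma), and $\tilde I_1 \conn I_2$ (from the $\sigma_1$-rank hypothesis) --- that is, $I_1$ is connected to both Player-2 sets and $\tilde I_1$ is connected to one of them. This is precisely the pattern that triangle-freeness forbids, a contradiction; hence at most one of the two ranks can exceed $1$.

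The only step requiring care --- and the closest thing to an obstacle here --- is the asymmetry built into \cref{def:triangle freeness}: one of the two Player-1 sets must be connected to \emph{both} Player-2 sets, while the other is connected to only one. So when mapping our objects onto the statement one must verify that the doubly-connected set is $I_1$ (connected to $I_2$ by the lemma's hypothesis and to $\tilde J_2$ by the $\sigma_2$-rank hypothesis), not $\tilde I_1$. Beyond getting this role assignment right, the argument is a direct substitution into the definitions of rank and of $\conn$, so I do not anticipate any further difficulty.
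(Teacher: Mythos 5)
Your proof is correct and is essentially identical to the paper's own argument: both extract a second Player-1 set connected to $I_2$ and a second Player-2 set connected to $I_1$ from the two rank hypotheses, and then observe that together with $I_1 \conn I_2$ these three connections form exactly the pattern forbidden by \cref{def:triangle freeness}. Your added remark about which Player-1 set plays the doubly-connected role in the definition is the right detail to check, and you resolve it the same way the paper does.
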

\begin{proof}
  The results follows almost immediately from the definition of triangle-freeness. We prove the statement by contradiction. Let $(\sigma_1, \sigma_2) \in \Sigma_1\rele\Sigma_2$ be a relevant sequence pair, and let information sets $I_1 \in \infos{1},I_2 \in \infos{2}$ be such that $\sigma(I_1)=\sigma_1, \sigma(I_2) = \sigma_2$. Furthermore, assume that the $\sigma_1$-rank of $I_2$ is greater than $1$, and at the same time the $\sigma_2$-rank of $I_1$ is greater than $1$. Since the $\sigma_2$-rank of $I_1$ is greater than $1$, there exists an information set $I'_2 \in \infos{2}, \sigma(I'_2) = \sigma_2$, distinct from $I_2$, such that $I_1 \conn I'_2$. Similarly, because the $\sigma_1$-rank of $I_2$ is greater than $1$, there exists an information set $I'_1 \in \infos{1},\sigma(I'_1) = \sigma_1$, distinct from $I_1$, such that $I'_1\conn I_2$. But then, we have found $I_1,I'_1\in\infos{1}$ and $I'_2,I_2\in\infos{2}$ such that $\sigma(I_1)=\sigma(I_2)=\sigma_1, \sigma(I'_2)=\sigma(I_2)=\sigma_2$ such that $I_1 \conn I'_2, I'_1 \conn I_2$, and $I_1\conn I_2$. So, the game is \emph{not} triangle-free, contradiction.
\end{proof}

\thmpublicchance*
\begin{proof}
  For contradiction, let $I_1, I_2$ be two distinct information sets for Player~1 such that $\sigma(I_1) = \sigma(I_2)$, let $J_1,J_2$ be two distinct information sets for Player~2 such that $\sigma(J_1) = \sigma(J_2)$, and assume that $I_1 \conn J_1, I_2 \conn J_2, I_1 \conn J_2$. By definition of connectedness, there exist nodes $u\in I_1, v \in J_1$ such that $v$ is on the path from the root to $u$, or \emph{vice versa}. Similarly, there exist nodes $u' \in I_2, v' \in J_2$ such that $u'$ is on the path from the root to $v'$, or \emph{vice versa}. Let $w$ be the lowest common ancestor of $u$ and $u'$. It is not possible that $w=u$ or $w = u'$, because otherwise the parent sequences of $I_1$ and $I_2$ would be different. So $w$ must be a strict ancestor of both $u$ and $u'$, and $u$ and $u'$ must be reached using \emph{different} edges at $w$. Therefore, node $w$ cannot belongs to Player~1, or otherwise it again would not be true that $\sigma(I_1) = \sigma(I_2)$. So, there are only two possible cases: either $w$ belongs to Player~2, or it belongs to the chance player. We break the analysis accordingly.
\begin{itemize}[nolistsep,itemsep=1mm,leftmargin=5mm]
  \item \textbf{First case:} $w$ belongs to Player~2. From above, we know that $u$ and $u'$ are reached by following different branches at $w$. So, if both $v$ and $v'$ were strict descendants of $w$, they would need to be on two different branches of $w$ (because they are connected to $u$ and $u'$ respectively), violating the condition $\sigma(J_1) = \sigma(J_2)$. So, at least one between $v$ and $v'$ is on the path from the root to $w$ (inclusive). But then either $v$ is an ancestor of $v'$, or \emph{vice versa}. Either case violates the hypothesis that $\sigma(J_1) = \sigma(J_2)$.
  \item \textbf{Second case:} $w$ belongs to the chance player. If any between $v$ and $v'$ is an ancestor of $w$, then necessarily either $v$ is an ancestor of $v'$, or $v'$ is an ancestor of $v$. Either case violates the condition $\sigma(J_1) = \sigma(J_2)$. So, both $v$ and $v'$ must be descendants of $w$. Because $v$ is on the path from the root to $u$ (or \emph{vice versa}), and $v'$ is on the path from the root to $u'$ (or \emph{vice versa}), then necessarily $u, v$ and $u',v'$ are on two different branches of the chance node $w$. To fix names, call $a$ the action at $w$ that must be taken to (eventually) reach $u$ and $v$, and let $b$ be the action that must be taken to (eventually) reach $u'$ and $v'$. Now, we use the hypothesis that $I_1 \conn J_2$, that is, there exists $u'' \in I_1, v'' \in J_2$ such that $u''$ is on the path from the root to $v''$ or \emph{vice versa}. Assume that $u''$ is on the path from the root to $v''$. Since $u''$ belongs to the same information set as $u$ (that is, $I_1$), and since chance is public by hypothesis, then Player~1, when acting at $u$ and $u''$, must have observed action $a$ at $w$. In other words, the path from the root to $u''$ must pass through action $a$ at $w$. But then, using the fact that $u''$ is on the path from the root to $v''$, this means that the path from the root to $v''$ passes through action $a$. However, the path from the root to $v'$ passes through action $b$. Since chance is public, nodes $v'$ and $v''$ cannot be in the same information set, because Player~2 is able to distinguish them by means of the observed chance outcome. We reached a contradiction. The symmetric case where $v''$ is on the path from the root to $u''$ is analogous.\qedhere
\end{itemize}
\end{proof}

\subsection{Decomposition Algorithm}

    In this section, we provide pseudocode for the algorithm presented in \cref{sec:computation}. We will use the following conventions:
\begin{itemize}[leftmargin=*,nolistsep,itemsep=0mm]
  \item Given a player $i\in\{1,2\}$, we let $-i$ denote the opponent.
  \item We use the symbol $\sqcup$ to denote disjoint union.
  \item Given two infosets $I,I' \in \mathcal{I}_i$, we write $I \preceq I'$ if $\sigma(I') \succeq \sigma(I)$. We say that we iterate over a set $\mathcal{I} \subseteq \mathcal{I}_i$ \emph{in top-down order} if, given any two $I,I'\in\mathcal{I}$ such that $I\preceq I'$, $I$ appears before $I'$ in the iteration.
  \item We use the observation that for all $I\in\mathcal{I}_1$ and $\sigma_2\in\Sigma_2$, $I \rele \sigma_2$ if and only if $(I, a) \rele \sigma_2 \ \forall a\in A_I$. (A symmetric statement holds for $I \in \mathcal{I}_2$ and $\sigma_1\in\Sigma_1$.)
\end{itemize}

    \subsubsection{Two Useful Subroutines}
        We start by presenting two simple subroutines that capture fill-in step \circled{4} of \cref{fig:examples} or equivalently Step~\ref{pt:step4} of \cref{sec:computation}. The two subroutines are symmetric and have the role of filling rows and columns of the correlation plans.

        %\SetInd{0.2em}{0.4em}%
        \begin{algorithm}[H]\small
          \caption{\normalfont$\textsc{FillOutRow}((\sigma_1,\sigma_2),I_1, \cS, \cD)$\hspace*{-.8cm}}
            \label{algo:fill out row}\DontPrintSemicolon
            \SetKwInOut{Pre}{Preconditions}
            \Pre{$(\sigma_1,\sigma_2)\in\Sigma_1\rele\Sigma_2, I_1\in \infos{1}, \sigma(I_1) = \sigma_1, (\sigma_1,\sigma_2)\in\cS$}
            \For{$I_2$ such that $\sigma(I_2) = \sigma_2$ and $\sigma_1\rele I_2$}{
                \uIf{$I_1 \conn I_2$}{
                    \For{$\sigma'_2 \in \{ (I_2, a)  : a \in A_{I_2}\}$}{
                        \Comment{Fill $(\sigma_1,\sigma'_2)$ by summing up all entries $\{v[(I_1, a'),\sigma'_2] : a' \in A_{I_1}\}$ in accordance with the \vsfc{}s}
                        $\cS \gets \cS \sqcup \{(\sigma_1,\sigma'_2)\}$\;
                        $\cD \gets \cD \ext^h \{1\}$ where $h : \vec{v} \mapsto \sum_{a' \in A_{I_1}} v[(I_1,a'),\sigma'_2]$\;
                    }
                }\Else{
                    \Comment{Fill all $\{v[\sigma_1,(I_2,a)] : a\in A_{I_2}\}$ by splitting $v[\sigma_1,\sigma_2]$ accordance with the \vsfc{}s}
                    $\cS \gets \cS \sqcup \{(\sigma_1,(I_2,a)) : a \in A_{I_2}\}$\;
                    $\cD \gets \cD \ext^h \Delta^{|A_{I_2}|}$ where $h : \vec{v} \mapsto v[\sigma_1,\sigma_2]$\;
                }
                \For{$\sigma'_2 \in \{ (I_2, a): a \in A_{I_2}\}$}{
                    $\textsc{FillOutRow}((\sigma_1, \sigma'_2), I_1)$\;\label{line:filloutrow recursive}
                }
            }
            \textbf{return} $(\cS,\cD)$\;
        \end{algorithm}
        \begin{algorithm}[H]\small
          \caption{\normalfont$\textsc{FillOutColumn}((\sigma_1,\sigma_2),I_2, \cS, \cD)$\hspace*{-.8cm}}
            \label{algo:fill out col}\DontPrintSemicolon
            \SetKwInOut{Pre}{Preconditions}
            \Pre{$(\sigma_1,\sigma_2)\in\Sigma_1\rele\Sigma_2, I_2\in \infos{2}, \sigma(I_2) = \sigma_2, (\sigma_1,\sigma_2)\in\cS$}
            \For{$I_1$ such that $\sigma(I_1) = \sigma_1$ and $\sigma_2 \rele I_1$}{
                \uIf{$I_1 \conn I_2$}{
                    \For{$\sigma' \in \{ (I_1, a)  : a \in A_{I_1}\}$}{
                        \Comment{Fill $(\sigma'_1,\sigma_2)$ by summing up all entries $\{v[\sigma'_1,(I_2,a')] : a' \in A_{I_2}\}$ in accordance with the \vsfc{}s}
                        $\cS \gets \cS \sqcup \{(\sigma'_1,\sigma_2)\}$\;
                        $\cD \gets \cD \ext^h \{1\}$ where $h : \vec{v} \mapsto \sum_{a' \in A_{I_2}} v[\sigma'_1,(I_2,a')]$\;
                    }
                }\Else{
                    \Comment{Fill all $\{v[(I_1,a),\sigma_2] : a\in A_{I_1}\}$ by splitting $v[\sigma_1,\sigma_2]$ accordance with the \vsfc{}s}
                    $\cS \gets \cS \sqcup \{((I_1,a), \sigma_2) : a \in A_{I_1}\}$\;
                    $\cD \gets \cD \ext^h \Delta^{|A_{I_1}|}$ where $h : \vec{v} \mapsto v[\sigma_1,\sigma_2]$\;
                }
                \For{$\sigma' \in \{ (I_1, a): a \in A_{I_1}\}$}{
                    $\textsc{FillOutColumn}((\sigma'_1, \sigma_2), I_2)$\;
                }
            }
            \textbf{return} $(\cS,\cD)$\;
        \end{algorithm}

The following inductive contract will be important for the full algortihm.

\begin{lemma}[Inductive contract for \textsc{FillOutRow}]
    Suppose that the following preconditions hold when $\textsc{FillOutRow}((\sigma_1,\sigma_2),I_1,\cS, \cD))$ is called:
    \begin{enumerate}[leftmargin=1.5cm,nolistsep,itemsep=1mm,label=(Pre\arabic*)]
      \item[(Pre1)] $(\sigma_1,\sigma_2)\in\Sigma_1\rele\Sigma_2$;
      \item[(Pre2)] $I_1 \in \infos{1}$ is such that $\sigma(I_1) = \sigma$;
      \item[(Pre3)] $\cS$ contains only relevant sequence pairs and $\cD$ consists of vectors indexed by exactly the indices in $\cS$;
      \item[(Pre4)] $(\sigma_1,\sigma_2) \in \cS$, but $(\sigma_1,\sigma'_2) \notin \cS$ for all $\sigma'_2 \succ \sigma_2$;
      \item[(Pre5)] For all $a \in I_1$ and $\sigma'_2 \succeq \sigma_2$ such that $I_1 \rele \sigma'_2$, $((I_1,a), \sigma'_2) \in \cS$;
      \item[(Pre6)] If $I_1 \rele \sigma_2$, all $\vec{v}\in\cD$ satisfy the \vsfc{} $v[\sigma_1,\sigma_2] = \sum_{a \in I_1} v[(I_1,a), \sigma_2]$;
      \item[(Pre7)] All $\vec{v}\in\cD$ satisfy the \vsfc{}s
        \[
            v[(I,a), \sigma(I_2)] = \sum_{a'\in A_{I_2}} v[(I,a), (I_2,a')]\quad\forall a\in I_1, \text{ and } I_2 \in \infos{2}: I_1 \rele I_2, \sigma(I_2) \succeq \sigma_2.
        \]
    \end{enumerate}
    Then, the sets $(\cS', \cD')$ returned by the call are such that
    \begin{enumerate}[leftmargin=1.5cm,nolistsep,itemsep=1mm,label=(Post\arabic*)]
      \item[(Post1)] $\cS'$ contains only relevant sequence pairs and $\cD'$ consists of vectors indexed by exactly the indices in $\cS'$;
      \item[(Post2)] $\cS' = \cS \sqcup \{(\sigma_1,\sigma'_2): \sigma'_2 \succ \sigma_2, \sigma\rele\sigma'_2\}$;
      \item[(Post3)] All $\vec{v}\in\cD'$ satisfy the \vsfc{}s
        \[
            v[\sigma_1, \sigma(I_2)] = \sum_{a'\in A_{I_2}} v[\sigma_1, (I_2,a')]\quad\forall I_2 \in \infos{2}: \sigma \rele I_2, \sigma(I_2) \succeq \sigma_2
        \]
        and all \vsfc{}s
        \[
            v[\sigma_1, \sigma'_2] = \sum_{a\in A_{I_1}} v[(I,a), \sigma'_2]\quad\forall \sigma'_2 \in \Sigma_2: \sigma'_2 \rele I_1, \sigma'_2 \succeq \sigma_2.
        \]
    \end{enumerate}
\end{lemma}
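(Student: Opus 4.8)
The plan is to argue by strong induction on the number of recursive \textsc{FillOutRow} calls triggered by the top call --- equivalently, by well-founded induction on $\succeq$ over Player~2's sequences, mirroring the subtree of Player~2's sequence tree rooted at $\sigma_2$. Before the case analysis I would record three invariants. (i)~By \cref{def:scaled extension}, an operation $\cD \gets \cD \ext^h \cY$ only \emph{appends} coordinates and leaves the existing ones unchanged, so any linear identity among already-filled coordinates, once true, stays true under all later operations. (ii)~\textsc{FillOutRow} writes only entries whose first index is exactly $\sigma_1$, so entries of the form $((I_1,a),\cdot)$ --- about which (Pre5) and (Pre7) speak --- are never modified, and those hypotheses persist verbatim. (iii)~Perfect recall gives the ``relevance propagates upward'' facts: $I_1 \conn I_2$ implies $\sigma(I_1)\rele I_2$ and $I_1 \rele \sigma(I_2)$, and if $\sigma'_2 \succeq \sigma_2$ descends through the unique Player-2 information set $I_2$ with $\sigma(I_2)=\sigma_2$, then $\sigma_1 \rele \sigma'_2$ implies $\sigma_1 \rele I_2$ and $\sigma'_2 \rele I_1$ implies $I_1\conn I_2$; it also gives the decomposition $\{\sigma'_2 : \sigma'_2 \succ \sigma_2\} = \bigsqcup_{I_2:\sigma(I_2)=\sigma_2}\bigsqcup_{a\in A_{I_2}}\bigl(\{(I_2,a)\}\sqcup\{\sigma'_2:\sigma'_2\succ(I_2,a)\}\bigr)$. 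In the base case, where no $I_2$ with $\sigma(I_2)=\sigma_2$ and $\sigma_1 \rele I_2$ exists, the loop of \cref{algo:fill out row} is empty and $(\cS',\cD')=(\cS,\cD)$: (Post1) is (Pre3); (Post2) holds because (iii) forces $\{(\sigma_1,\sigma'_2):\sigma'_2\succ\sigma_2,\sigma_1\rele\sigma'_2\}=\emptyset$; and in (Post3) the first family indexes an empty set while the only surviving second-family instance is $\sigma'_2=\sigma_2$, which is (Pre6) (using $\sigma_2\rele I_1 \Leftrightarrow I_1\rele\sigma_2$).

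For the inductive step I would trace one loop iteration, for $I_2$ with $\sigma(I_2)=\sigma_2$ and $\sigma_1\rele I_2$. If $I_1\conn I_2$, the algorithm sets $v[\sigma_1,(I_2,a)]=\sum_{b\in A_{I_1}}v[(I_1,b),(I_2,a)]$ for each $a\in A_{I_2}$ (the summands exist by (Pre5) since $I_1\rele(I_2,a)$, and are nonnegative by the standing nonnegativity invariant of \textsc{Decompose}); summing over $a$, exchanging sums, and applying (Pre7) and then (Pre6) yields $\sum_a v[\sigma_1,(I_2,a)]=\sum_b v[(I_1,b),\sigma_2]=v[\sigma_1,\sigma_2]$. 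If $I_1\not\conn I_2$, the algorithm splits $v[\sigma_1,\sigma_2]$ via $\cD\gets\cD\ext^h\symp{|A_{I_2}|}$, and the simplex semantics give $\sum_a v[\sigma_1,(I_2,a)]=v[\sigma_1,\sigma_2]$ directly. Either way the ``row-to-parent'' identity for $I_2$ is in place. The crux is then to verify that, when each recursive call $\textsc{FillOutRow}((\sigma_1,(I_2,a)),I_1,\cS,\cD)$ is made, the preconditions hold with $\sigma_2$ replaced by $(I_2,a)$: (Pre1) from $\sigma_1\rele I_2$; (Pre2) unchanged; (Pre3) since scaled extensions append only relevant pairs; (Pre4) because nothing strictly below $(I_2,a)$ has been written (distinct top-level information sets with parent sequence $\sigma_2$, and distinct actions of one of them, span disjoint parts of the sequence tree); (Pre5) from the outer (Pre5), as $\{\sigma'_2\succeq(I_2,a)\}\subseteq\{\sigma'_2\succeq\sigma_2\}$ and nothing is deleted; (Pre6) is precisely the identity just established in the connected case and is vacuous otherwise since then $I_1\not\rele(I_2,a)$; and (Pre7) from the outer (Pre7) together with invariant (ii). Applying the induction hypothesis then furnishes (Post1)--(Post3) for each recursive call.

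To combine, (Post1) follows since scaled extensions and --- by the induction hypothesis --- the recursive calls all preserve well-formedness; (Post2) follows because the newly written index set equals $\bigcup_{I_2,a}\bigl(\{(\sigma_1,(I_2,a))\}\cup\{(\sigma_1,\sigma'_2):\sigma'_2\succ(I_2,a),\sigma_1\rele\sigma'_2\}\bigr)$ over iterated $(I_2,a)$, which by the decomposition and relevance-propagation facts is $\{(\sigma_1,\sigma'_2):\sigma'_2\succ\sigma_2,\sigma_1\rele\sigma'_2\}$, disjoint from $\cS$ by (Pre4); and (Post3) follows by classifying each target identity by where its Player-2 index sits --- a first-family identity with $\sigma(I_2')=\sigma_2$ is the computation above in the loop iteration for $I_2'$, one with $\sigma(I_2')\succ\sigma_2$ comes from the recursive call through which $I_2'$ descends, a second-family identity is (Pre6) at $\sigma'_2=\sigma_2$, is the loop identity at a child of $\sigma_2$ (forced into the connected branch because $\sigma'_2\rele I_1$), and comes from a recursive call for deeper $\sigma'_2$ --- with all these identities surviving later operations by invariant (i). The main obstacle is exactly this precondition maintenance: (Pre6) and (Pre7) are delicate because they are what link the row currently being filled to the deep entries already placed by \textsc{Decompose}, and checking them at every recursive call requires tracking at once the order in which the loop walks the sequence tree and the exact semantics of scaled extension (old coordinates frozen, new simplex coordinates summing to the scaling value); the ``relevance propagates upward'' facts, though elementary, must be stated carefully since the paper's $\rele$ and $\conn$ treat information sets and sequences asymmetrically.
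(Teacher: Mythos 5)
Your proposal is correct and follows essentially the same route as the paper's proof: induction over Player~2's sequence tree rooted at $\sigma_2$, with the base case handled by vacuity, the preconditions re-verified at each recursive call (in particular (Pre6) being exactly the identity installed by the connected branch), and the remaining first-family constraint at $\sigma_2$ established by summing the fill-in equations over $A_{I_2}$, exchanging the order of summation, and applying (Pre7) followed by (Pre6). The explicit invariants you record (scaled extension freezes existing coordinates; the routine never touches entries indexed by $(I_1,a)$) are left implicit in the paper but do not change the argument.
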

\begin{proof}
    By induction.
    \begin{itemize}[nolistsep,itemsep=1mm,leftmargin=5mm]
        \item \textbf{Base case.} The base case corresponds to $\sigma_2 \in \Sigma_2$ such that no information set $I_2 \in \infos{2}: \sigma(I_2) = \sigma_2 \land \sigma_1 \rele I_2$ exists. In that case, \cref{algo:fill out row} returns immediately, so (Post1) holds trivially from (Pre3).
            Since no $I_2$ such that $\sigma(I_2) = \sigma_2\land \sigma_1 \rele I_2$ exists, no $\sigma'_2 \succ \sigma_2$ such that $\sigma_1 \rele \sigma'_2$ exists, so (Post2) holds as well. The first set of constraints of (Post3) is empty, and the second set reduces to (Pre6).
        \item \textbf{Inductive step.} Suppose that the inductive hypothesis holds when $\sigma'_2 \succ \sigma_2$. We will show that it holds when $\sigma'_2 = \sigma_2$ as well. In order to use the inductive hypothesis, we first need to check that the preconditions are preserved at the time of the recursive call on \cref{line:filloutrow recursive}. (Pre1) holds since $\sigma_1 \rele I_2$. (Pre2) holds trivially since $\sigma$ does not chance. (Pre3) holds since we are updating $\cS$ and $\cD$ in tandem on lines 4, 5 and 7, 8. (Pre4) holds since by the time of the recursive call we have only filled in entries $(\sigma_1,\sigma'_2)$ where $\sigma'_2$ is an immediate successor of $\sigma_2$. (Pre5) at Line 10 holds trivially, since it refers to a subset of the entries for which the condition held at the beginning of the call. (Pre6) holds because $I_1 \rele \sigma'_2 \iff I_1\conn I_2$. Hence, if $I_1 \rele \sigma'_2$ then Lines 4 and 5 must have run. (Pre7) at Line 10 holds trivially, since it refers to a subset of the constraints for which the condition held at the beginning of the call. Using the inductive hypothesis, (Post1), (Post2), and the second set of constraints in (Post3) follow immediately. The only constraints that are left to be verified are
            \begin{align}\label{eq:to prove}
                v[\sigma_1, \sigma_2] = \sum_{a'\in A_{I_2}} v[\sigma_1, (I_2,a')]\quad\forall I_2 \in \infos{2}: \sigma \rele I_2, \sigma(I_2) = \sigma_2.
            \end{align}
            That constraint is guaranteed by Lines 7 and 8 for all $I_2 \not\conn I_1$. So, we need to verify that it holds for all those $I_2$ such that $\sigma(I_2)=\sigma_2, \sigma\rele I_2$ and $I_1\conn I_2$. Let $I_2$ be one such information set. Then, from Lines 4 and 5 we have that
            \[
                v[\sigma_1, (I_2, a)] = \sum_{a'\in A_{I_1}} v[(I,a'), (I_2,a)]\quad \forall a\in A_{I_2}.
            \]
            Summing the above equations across all $a \in A_{I_2}$ and using (Pre7) yields
            \begin{align*}
                \sum_{a\in A_{I_2}} v[\sigma_1, (I_2,a)] &= \sum_{a\in A_{I_2}}\sum_{a'\in A_{I_1}} v[(I,a'), (I_2,a)]\\
                    &= \sum_{a'\in A_{I_1}} \sum_{a\in A_{I_2}} v[(I,a'), (I_2,a)]\\
                    &= \sum_{a'\in A_{I_1}} v[(I,a'), \sigma(I_2)]\\
                    &= \sum_{a'\in A_{I_1}} v[(I,a'), \sigma_2],
            \end{align*}
            where we used the hypothesis that $\sigma(I_2) = \sigma_2$ in the last equality. Finally, since $I_1 \conn I_2$ and $\sigma(I_2) = \sigma_2$, it must be $I_1 \rele \sigma_2$ and so, using (Pre6), we obtain that
            \[
                \sum_{a\in A_{I_2}} v[\sigma_1, (I_2,a)] = v[\sigma_1, \sigma_2],
            \]
            completing the proof of~\cref{eq:to prove}. So, (Post3) holds as well and the proof of the inductive step is complete.
            \qedhere
    \end{itemize}
\end{proof}

The inductive contract for \textsc{FillOutColumn} is symmetric and we omit it.

\subsubsection{The Full Algorithm}

        \begin{algorithm}[H]\small
          \caption{\normalfont$\textsc{Decompose}((\sigma_1,\sigma_2),\cS, \cD)$\hspace*{-.8cm}}
            \label{algo:decompose}\DontPrintSemicolon
            \SetKwInOut{Pre}{Preconditions}
            \Pre{$(\sigma_1,\sigma_2)\in\Sigma_1\rele\Sigma_2, (\sigma_1,\sigma_2)\in\cS$}
            \BlankLine
            $B \gets \emptyset$\;
            \For{all $i\in \{1,2\}, I \in \infos{i}, \sigma(I) = \sigma_i, \sigma_{-i}\rele I$}{
                \uIf{the $\sigma_{-i}$-rank of $I$ is $0$}{
                    $B \gets B \sqcup I$\;
                }
            }
            \For{$(I_1,I_2) \in \infos{1}\times\infos{2}$ such that $\sigma(I_1) = \sigma_1, \sigma(I_2) = \sigma_2, I_1\conn I_2$}{
                \uIf{the $\sigma_2$-rank of $I_1$ is $\ge$ the $\sigma_1$-rank of $I_2$}{
                    $B \gets B \sqcup I_1$\;
                }\Else{
                    $B \gets B \sqcup I_2$\;
                }
            }
            \For{$I \in B$}{
                \uIf{$I \in \infos{1}$}{
                    \Comment{Fill all $\{v[(I,a),\sigma_2] : a\in A_{I}\}$ by splitting $v[\sigma_1,\sigma_2]$ accordance with the \vsfc{}s}
                    $\cS \gets \cS \sqcup \{((I,a), \sigma_2) : a \in A_{I}\}$\;
                    $\cD \gets \cD \ext^h \Delta^{|A_{I}|}$ where $h : \vec{v} \mapsto v[\sigma_1,\sigma_2]$\;
                    \BlankLine
                    \Comment{Recursive call}
                    \For{$\sigma'_1 \in \{(I,a) : a \in A_I\}$}{
                        $\textsc{Decompose}((\sigma'_1,\sigma_2), \cS, \cD)$\;
                    }
                    \BlankLine
                    \Comment{Fill a portion of the row for $\sigma_1$}
                    \For{$I_2 \in \infos{2} : \sigma_1 \rele I_2, \sigma(I_2) = \sigma_2$}{
                        \For{$\sigma'_2 \in \{(I_2, a'): a' \in A_{I_2}\}$}{
                            \Comment{Fill $(\sigma_1,\sigma'_2)$ by summing up all entries $\{v[(I, a'),\sigma'_2] : a' \in A_{I}\}$ in accordance with the \vsfc{}s}
                            $\cS \gets \cS \sqcup \{(\sigma_1,\sigma'_2)\}$\;
                            $\cD \gets \cD \ext^h \{1\}$ where $h : \vec{v} \mapsto \sum_{a' \in A_{I}} v[(I,a'),\sigma'_2]$\;
                            \BlankLine
                            $\textsc{FillOutRow}((\sigma_1, \sigma'_2), I)$\;
                        }
                    }
                }\Else{
                    \Comment{Fill all $\{v[\sigma_1, (I,a)] : a\in A_{I}\}$ by splitting $v[\sigma_1,\sigma_2]$ accordance with the \vsfc{}s}
                    $\cS \gets \cS \sqcup \{(\sigma_1, (I,a)) : a \in A_{I}\}$\;
                    $\cD \gets \cD \ext^h \Delta^{|A_{I}|}$ where $h : \vec{v} \mapsto v[\sigma_1,\sigma_2]$\;
                    \BlankLine
                    \Comment{Recursive call}
                    \For{$\sigma'_2 \in \{(I,a) : a \in A_I\}$}{
                        $\textsc{Decompose}((\sigma_1,\sigma'_2), \cS, \cD)$\;
                    }
                    \BlankLine
                    \Comment{Fill a portion of the column for $\sigma_2$}
                    \For{$I_1 \in \infos{1} : \sigma_2 \rele I_1, \sigma(I_1) = \sigma_1$}{
                        \For{$\sigma'_1 \in \{(I_1, a'): a' \in A_{I_1}\}$}{
                            \Comment{Fill $(\sigma'_1,\sigma_2)$ by summing up all entries $\{v[\sigma'_1, (I, a')] : a' \in A_{I}\}$ in accordance with the \vsfc{}s}
                            $\cS \gets \cS \sqcup \{(\sigma'_1,\sigma_2)\}$\;
                            $\cD \gets \cD \ext^h \{1\}$ where $h : \vec{v} \mapsto \sum_{a' \in A_{I}} v[\sigma'_1, (I,a')]$\;
                            \BlankLine
                            $\textsc{FillOutColumn}((\sigma'_1, \sigma_2), I)$\;
                        }
                    }
                }
            }
            \textbf{return} $(\cS,\cD)$\;
        \end{algorithm}

\begin{lemma}[Inductive contract for \textsc{Decompose}]\label{lem:contract}
  Assume that at the beginning of each call to $\textsc{Decompose}((\sigma_1,\sigma_2), \mathcal{S}, \mathcal{D})$ the following conditions hold
\begin{enumerate}[leftmargin=1.5cm,nolistsep,itemsep=1mm,label=(Pre\arabic*)]
  \item $\mathcal{S}$ contains only relevant sequence pairs and $\mathcal{D}$ consists of vectors indexed by exactly the indices in $\mathcal{S}$.
  \item $\mathcal{S}$ does not contain any relevant sequence pairs which are descendants of $(\sigma_1,\sigma_2)$, with the only exception of $(\sigma_1,\sigma_2)$ itself. In formulas, \[\mathcal{S} \cap \{(\sigma'_1,\sigma'_2) \in \Sigma_1\times\Sigma_2 : \sigma'_1 \succeq \sigma_1, \sigma'_2 \succeq \sigma_2\} = \{(\sigma_1, \sigma_2)\}.\]
\end{enumerate}

  Then, at the end of the call, the returned set $(\mathcal{S}', \mathcal{D}')$ are such that
\begin{enumerate}[leftmargin=1.5cm,nolistsep,itemsep=1mm,label=(Post\arabic*)]
\item $\mathcal{S}'$ contains only relevant sequence pairs and $\mathcal{D}'$ consists of vectors $\vec{v}$ indexed by exactly the indices in $\mathcal{S}'$.
\item The call has filled in exactly all relevant sequence pair indices that are descendants of $(\sigma_1,\sigma_2)$ (except for $(\sigma_1,\sigma_2)$ itself, which was already filled in). In formulas, \[\mathcal{S}' = \mathcal{S} \sqcup \{(\sigma'_1,\sigma'_2) \in \Sigma_1\times\Sigma_2 : \sigma'_1 \succeq \sigma_1, \sigma'_2 \succeq \sigma_2, (\sigma'_1,\sigma'_2) \neq (\sigma_1,\sigma_2), \sigma'_1 \rele \sigma'_2\}.\]
\item $\mathcal{D}'$ satisfies the subset of \vsfc{}s
\end{enumerate}
    \begin{equation*}
  \begin{array}{ll}
    \ \!\displaystyle\sum_{a \in A_I}~\!v[(I, a),\hspace{.1mm} \sigma'_2] = v[\sigma(I),\hspace{.4mm} \sigma'_2] & \forall \sigma'_2 \succeq \sigma_2, I \in \mathcal{I}_1 \ \hspace{.0mm} \text{ s.t. } \sigma'_2 \rele I, \sigma(I) \succeq \sigma_1\\[4mm]
    \ \!\displaystyle\sum_{a \in A_J} v[\sigma'_1, (J, a)] = v[\sigma'_1, \sigma(J)] & \forall \sigma'_1\succeq\sigma_1, J \in \mathcal{I}_2 \text{ s.t. } \sigma'_1 \rele J, \sigma(J) \succeq \sigma_2.
  \end{array}
  \end{equation*}
\end{lemma}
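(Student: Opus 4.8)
The plan is to prove this contract by well-founded induction on the number of relevant sequence pairs that are strict descendants of $(\sigma_1,\sigma_2)$; this number strictly decreases at every recursive call, since $\textsc{Decompose}(((I,a),\sigma_2),\cdot,\cdot)$ and $\textsc{Decompose}((\sigma_1,(I,a)),\cdot,\cdot)$ each move strictly deeper in exactly one coordinate and leave the other fixed. Along the way I would use the already-established inductive contracts for \textsc{FillOutRow} and \textsc{FillOutColumn} as black boxes --- those subroutines are called at sequence pairs with a strictly deeper second, respectively first, coordinate --- together with \cref{lem:ranks}, which is the single place triangle-freeness is invoked. The base case is $(\sigma_1,\sigma_2)$ with no relevant strict descendant: then both sets $\{I\in\infos{i}:\sigma(I)=\sigma_i,\ \sigma_{-i}\rele I\}$ are empty, so $B=\emptyset$, \cref{algo:decompose} returns immediately without touching $\cS$ or $\cD$, and (Post1)--(Post3) hold trivially --- (Post3) vacuously, since any index listed there would be a relevant strict descendant of $(\sigma_1,\sigma_2)$.

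For the inductive step, the first thing I would establish is a structural claim about the set $B$ built by the two initial loops of \cref{algo:decompose}: $B$ is unambiguously determined as a set, and the branching rule assigns to every information set at this level (i.e.\ every $I$ with $\sigma(I)=\sigma_i$ and $\sigma_{-i}\rele I$) a consistent role --- either $I$ is \emph{branched on} (placed in $B$ and later split), or $I$ is \emph{filled by summing} inside the branch of a connected partner --- and never both. This is exactly where \cref{lem:ranks} is needed: triangle-freeness ensures that for every connected pair $I_1\conn I_2$ with $\sigma(I_1)=\sigma_1$ and $\sigma(I_2)=\sigma_2$, at most one of the two ranks exceeds $1$, so the ``branch on the weakly-higher-rank side'' rule selects the same side across all pairs a fixed information set participates in, and the side it does not select has rank $1$ and can be recovered by summing already-filled entries --- a $\ext^h\{1\}$ step --- instead of by an independent simplex split. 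This is precisely the obstruction flagged in \cref{rem:critical}; rank-$0$ information sets have no connected partner at this level and are therefore caught only by the first loop and handled by a pure split.

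Next I would walk through the loop over $I\in B$, treating the case $I\in\infos{1}$ (the case $I\in\infos{2}$ being symmetric with rows/columns and \textsc{FillOutRow}/\textsc{FillOutColumn} interchanged). The split $\cD\gets\cD\ext^h\Delta^{|A_I|}$ adds the indices $\{((I,a),\sigma_2):a\in A_I\}$ to $\cS$ and makes every $\vec{v}\in\cD$ satisfy $\sum_{a\in A_I}v[(I,a),\sigma_2]=v[\sigma_1,\sigma_2]$. Before each recursive call $\textsc{Decompose}(((I,a),\sigma_2),\cS,\cD)$ I would check its preconditions --- (Pre1) holds because $\cS$ and $\cD$ are updated in lockstep, and (Pre2) holds because so far only $((I,a),\sigma_2)$ itself, not any of its descendants, has been inserted, using (Pre2) for $(\sigma_1,\sigma_2)$ --- and then apply the induction hypothesis to get the filled-in indices and the constraints for the whole subtree rooted at $((I,a),\sigma_2)$. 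Finally, for the ``fill a portion of the row for $\sigma_1$'' block and its \textsc{FillOutRow} calls, I would verify the preconditions of the \textsc{FillOutRow} contract --- in particular the ones requiring the column constraints for $((I,a),\cdot)$, which are exactly those just produced by the recursive \textsc{Decompose} calls --- and then read off the newly added row indices $(\sigma_1,\sigma'_2)$ and the constraints they satisfy from that contract's postconditions.

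The remaining work is bookkeeping, and I expect it to be the main obstacle. I would collect the index sets contributed by (i) the simplex splits creating $\{((I,a),\sigma_2)\}$ and $\{(\sigma_1,(I,a))\}$, (ii) the recursive \textsc{Decompose} calls, and (iii) the \textsc{FillOutRow}/\textsc{FillOutColumn} invocations, over all $I\in B$, and show that their union equals $\{(\sigma'_1,\sigma'_2):\sigma'_1\succeq\sigma_1,\ \sigma'_2\succeq\sigma_2,\ \sigma'_1\rele\sigma'_2\}\setminus\{(\sigma_1,\sigma_2)\}$ with no overlaps, giving (Post1) and (Post2); and I would show that the constraints accumulated along the way are exactly the two families in (Post3), paying particular attention to the constraint attached to an information set $J\in\infos{2}$ with $\sigma(J)=\sigma_2$ that is \emph{not} in $B$ --- this one is established by the summing ($\ext^h\{1\}$) step in the row block, via the rank-$1$ property of $J$. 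The delicate point is that the relevant descendants of $(\sigma_1,\sigma_2)$ must partition exactly among the ``branches'', with no double-counting and no omission, and the accumulated constraint list must match (Post3) on the nose; \cref{lem:ranks}, and hence triangle-freeness, is exactly what makes this partition well-defined.
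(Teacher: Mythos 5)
Your plan matches the paper's proof essentially step for step: the same well-founded induction, the same use of the \textsc{FillOutRow}/\textsc{FillOutColumn} contracts as black boxes after checking their preconditions, and the same crucial appeal to \cref{lem:ranks} to conclude that every information set at the current level that is \emph{not} placed in $B$ has $\sigma_{-i}$-rank exactly $1$, so its \vsfc{} is the only one not already delivered by a recursive call. The one point stated a bit loosely is the last one: that leftover constraint is not given directly by the $\ext^h\{1\}$ step but follows from it by summing the filled-in identities over the actions of the non-branched information set and exchanging the order of the double sum, exactly as the paper does for its constraints \circled{A} and \circled{B}.
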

\begin{proof}
    By induction.
    \begin{itemize}[nolistsep,itemsep=1mm,leftmargin=5mm]
      \item \textbf{Base case.} The base case is any $(\sigma_1,\sigma_2)$ such that there is no $\sigma'_1 \succeq \sigma_1, \sigma'_2 \succeq \sigma_2$, $\sigma'_1 \rele \sigma'_2$. In that case, the set $B$ is empty, so the algorithm terminates immediately without modifying the sets $\cS$ and $\cD$. Consequently, (Post1) and (Post2) hold trivially from (Pre1) and (Pre2). (Post3) reduces to an empty set of constraints, so (Post3) holds as well.
      \item \textbf{Inductive step.} In order to use the inductive hypothesis, we will need to prove that the preconditions for \textsc{Decompose} hold on Lines 15 and 25. We will focus on Line~15 ($I \in \infos{1}$), as the analysis for the other case ($I\in\infos{2}$) is symmetric. (Pre1) clearly holds, since we always update $\cS$ and $\cD$ in tandem.
          %
          % Gabriele: the argument below should be made more formal. The definition of $B$ plays an important role there as well.
          %
          Since all iterations of the \textbf{for} loop on Line~10 touch different information sets, at the time of the recursive call on Line~15, and given (Post2) for all previous recursive calls, the only relevant sequence pairs $(\sigma''_1, \sigma''_2)$ such that $\sigma''_1 \succeq \sigma'_1, \sigma''_2 \succeq \sigma_2$ that have been filled are the ones on Lines~12 and~13. So, (Pre2) holds.

          We now check that the preconditions for \textsc{FillOutRow} hold at Line~20. (Pre1), (Pre2), (Pre3), and (Pre4) are trivial. (Pre5) and (Pre7) are guaranteed by (Post2) and (Post3) of \textsc{Decompose} applied to Line~15. (Pre6) holds because of Lines 18 and 19.

          Using the inductive contracts of \textsc{FillOutRow}, \textsc{FillOutColumn} and \textsc{Decompose} for the recursive calls, we now show that all postconditions hold at the end of the call. (Post1) is trivial since we always update $\cS$ and $\cD$ together. (Post2) holds by keeping track of what entries are filled in Lines~12, 13, 18, 19, 22, 23, 28, 29, as well as those filled in the calls to \textsc{FillOutRow}, \textsc{FillOutColumn} and \textsc{Decompose}, as regulated by postcondition (Post2) in the inductive contracts of the functions. In order to verify (Post3), we need to verify that the constraints that are not already guaranteed by the recursive calls hold. In particular, we need to verify that
  \[\begin{array}{ll}
    \circled{A}\ \!\displaystyle\sum_{a \in A_I}~\!v[(I, a),\hspace{.1mm} \sigma_2] = v[\sigma_1,\hspace{.4mm} \sigma_2] & \forall I \in \mathcal{I}_1 \ \hspace{.0mm} \text{ s.t. } \sigma_2 \rele I, \sigma(I) = \sigma_1, I \notin B\\[4mm]
    \circled{B}\ \!\displaystyle\sum_{a \in A_J} v[\sigma_1, (J, a)] = v[\sigma_1, \sigma_2] & \forall J \in \mathcal{I}_2 \text{ s.t. } \sigma_1 \rele J, \sigma(J) = \sigma_2, J \notin B.
  \end{array}\]
  We will show that constraints \circled{A} hold; the proof for \circled{B} is symmetric. Using \cref{lem:ranks} together with the definition of $B$ (Lines 1-9), any information set $I \in \infos{i}: \sigma(I) = \sigma_i, \sigma_{-i}\rele I$ that is not in $B$ must have $\sigma_{-i}$-rank \emph{exactly} $1$. Let $I \in \mathcal{I}_1$ be such that $\sigma_2 \rele I, \sigma(I) = \sigma_1, I \notin B$, as required in \circled{A}. Since the $\sigma_2$-rank of $I$ is $1$, let $J$ be the only information set in $\infos{2}$ such that $I \conn J, \sigma(J) = \sigma_2$. Note that $J \in B$.
The entries $v[(I,a), \sigma_2] : a\in A_I$ were filled in Lines 28 and 29 when the \textbf{for} loop picked up $J \in B$. So, in particular,
    \[
        v[(I,a), \sigma_2] = \sum_{a'\in A_{J}} v[(I,a), (J,a')] \quad \forall a\in A_I.
    \]
    Summing the above equations across $a\in A_I$, we obtain
    \begin{align*}
        \sum_{a\in A_I} v[(I,a), \sigma_2] &= \sum_{a \in A_I} \sum_{a'\in A_{J}} v[(I,a), (J,a')] \\
            &= \sum_{a'\in A_{J}} \sum_{a \in A_I}  v[(I,a), (J,a')]\\
            &= \sum_{a'\in A_{J}} v[\sigma_1, (J,a')]\\
            &= v[\sigma_1, \sigma_2],
    \end{align*}
    where the last equation follows from the way the entries $v[\sigma_1, (J,a')] : a' \in A_{J}$ were filled in (Lines 22 and 23). This shows that the set of constraints \circled{A} hold.
\qedhere
    \end{itemize}
\end{proof}

\thmdecomposition*
\begin{proof}
    The correctness of the algorithm follow from (Post3) in the inductive contract. Every time the set of partially-filled-in vectors $\cD$ gets extended, it is extended with either the singleton set $\{1\}$ or a simplex. In either case the nonnegative affine functions $h$ used are linear. So, the decomposition structure is as in the statement. Finally, since the overhead of each call (on top of the recursive calls) is linear in the number of relevant sequence pairs $(\sigma,\tau) \in \Sigma_1\rele\Sigma_2$ that are filled, and each relevant sequence pair is filled only once, the complexity of the algorithm is linear in the number of relevant sequence pairs.
\end{proof} 
\section{Relationship Between $\cV$ and $\Xi$}\label{app:xi equal vonS}

\subsection{Preliminaries: Definition of the Polytope of Correlation Plans}

Let $\Pi_i(\sigma)$ denote the subset of reduced-normal-form plans $\Pi_i$ for Player~$i$ prescribe all actions of Player~$i$ on the path from the root of the game down to the information set-action pair $\sigma$ (if $\sigma=\empty$, assign $\Pi_i(\emptyseq) = \Pi_i$).
The transformation from a correlated distribution $\mu$ to its correlation plan representation is achieved using a linear function
\[
    f : \Delta^{|\Pi_1 \times \Pi_2|} \to \bbR_{\ge 0}^{|\Sigma_1\rele\Sigma_2|}.
\]
Specifically, $f$ takes a generic distribution $\mu$ over $\Pi_1\times\Pi_2$ and maps to the vector $\vec{\xi} = f(\mu)$, called a \emph{correlation plan}, whose components are
\begin{equation}\label{eq:def f}
    \xi[\sigma_1, \sigma_2] \defeq \sum_{\pi_1 \in \Pi_1(\sigma_1)} \sum_{\pi_2\in\Pi_2(\sigma_2)} \mu(\pi_1,\pi_2)\qquad \forall (\sigma_1,\sigma_2) \in \Sigma_1 \rele \Sigma_2.
\end{equation}

The set of all valid correlation plans, $\Xi$, is defined as the image $\img f$ of $f$ as the distribution $\mu$ takes any possible value in $\Delta^{|\Pi_1\times\Pi_2|}$.

\begin{remark}\label{rem:V in 01}
    Since $f$ sums up distinct entries from the distribution $\mu$, all entries in $\vec{\xi} = f(\mu)$ are in the range $[0,1]$.
\end{remark}

\subsection{Proofs}

\begin{lemma}\label{lem:xi co}
    Let $\vec{1}_{(\pi_1,\pi_2)} \in \Delta^{|\Pi_1\times\Pi_2|}$
    denote the distribution over $\Pi_1\times\Pi_2$ that assigns
    mass $1$ to the pair $(\pi_1, \pi_2)$, and mass $0$ to any
    other pair of reduced-normal-form plans. Then,
    \[
        \Xi = \co\{f(\vec{1}_{(\pi_1, \pi_2)}) : \pi_1 \in \Pi_1, \pi_2 \in \Pi_2\}.
    \]
\end{lemma}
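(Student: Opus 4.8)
The plan is to exploit two structural facts: that $f$ is a linear map and that the probability simplex $\Delta^{|\Pi_1\times\Pi_2|}$ is exactly the convex hull of its vertices, which are precisely the point masses $\vec{1}_{(\pi_1,\pi_2)}$. Concretely, the lemma is the standard assertion that the affine image of a simplex equals the convex hull of the images of its vertices, specialized to the map $f$ of \cref{eq:def f}.

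First I would record that every distribution $\mu \in \Delta^{|\Pi_1\times\Pi_2|}$ can be written as the convex combination $\mu = \sum_{(\pi_1,\pi_2)\in\Pi_1\times\Pi_2} \mu(\pi_1,\pi_2)\,\vec{1}_{(\pi_1,\pi_2)}$, the coefficients being nonnegative and summing to one precisely because $\mu$ is a probability distribution (and the index set $\Pi_1 \times \Pi_2$ is finite). Since $\Xi = \img f$ by definition, applying the linearity of $f$ gives
\[
    f(\mu) = \sum_{(\pi_1,\pi_2)\in\Pi_1\times\Pi_2} \mu(\pi_1,\pi_2)\, f(\vec{1}_{(\pi_1,\pi_2)}),
\]
which exhibits an arbitrary point $f(\mu) \in \Xi$ as a convex combination of the finitely many points $\{f(\vec{1}_{(\pi_1,\pi_2)}) : \pi_1 \in \Pi_1, \pi_2 \in \Pi_2\}$. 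This establishes the inclusion $\Xi \subseteq \co\{f(\vec{1}_{(\pi_1, \pi_2)}) : \pi_1 \in \Pi_1, \pi_2 \in \Pi_2\}$.

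For the reverse inclusion, each generator $f(\vec{1}_{(\pi_1,\pi_2)})$ lies in $\Xi$ because $\vec{1}_{(\pi_1,\pi_2)}$ is itself a point of the simplex $\Delta^{|\Pi_1\times\Pi_2|}$. Moreover $\Xi$, being the image of a convex set under a linear map, is convex, so it contains the convex hull of any of its subsets; in particular $\co\{f(\vec{1}_{(\pi_1,\pi_2)}) : \pi_1 \in \Pi_1, \pi_2 \in \Pi_2\} \subseteq \Xi$. Combining the two inclusions yields the claimed equality (and, as a byproduct, shows $\Xi$ is a polytope, being the convex hull of finitely many points). I do not anticipate a genuine obstacle: the only points requiring minimal care are the finiteness of $\Pi_1$ and $\Pi_2$ and the linearity of $f$, both of which are part of the setup surrounding \cref{eq:def f}.
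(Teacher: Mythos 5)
Your proof is correct and follows essentially the same route as the paper's: both observe that $\Delta^{|\Pi_1\times\Pi_2|}$ is the convex hull of the point masses $\vec{1}_{(\pi_1,\pi_2)}$ and that the linear map $f$ carries this convex hull onto the convex hull of the images of those vertices. You merely spell out the two inclusions that the paper's one-line argument leaves implicit.
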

\begin{proof}
    The ``deterministic'' distributions $\vec{1}_{(\pi_1, \pi_2)}$ are the vertices of $\Delta^{|\Pi_1\times\Pi_2|}$, so, in particular,
    \[
        \Delta^{|\Pi_1\times\Pi_2|} = \co\{\vec{1}_{(\pi_1,\pi_2)} : \pi_1 \in \Pi_1, \pi_2 \in \Pi_2\}.
    \]
    Since by definition $\Xi = \img f$, and $f$ is a linear function, the
    images (under $f$) of the $\vec{1}_{(\pi_1,\pi_2)}$ are a convex basis
    for $\Xi$, which is exactly the statement.
\end{proof}

\begin{lemma}\label{lem:v zero}
    Let $\vec{v} \in \cV$. For all $\sigma_1 \in \Sigma_1$ such that $v[\sigma_1, \emptyseq] = 0$, $v[\sigma_1, \sigma_2] = 0$ for all $\sigma_2 \rele \sigma_1$. Similarly, for all $\sigma_2 \in \Sigma_2$ such that $v[\emptyseq, \sigma_2] = 0$, $v[\sigma_1, \sigma_2] = 0$ for all $\sigma_1 \rele \sigma_2$.
\end{lemma}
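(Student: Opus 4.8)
The plan is to prove the first statement by induction on the depth of $\sigma_2$ in Player~2's sequence tree; the second statement then follows verbatim by exchanging the roles of the two players. So fix $\sigma_1 \in \Sigma_1$ with $v[\sigma_1, \emptyseq] = 0$. The base case is $\sigma_2 = \emptyseq$, for which the claim $v[\sigma_1, \emptyseq] = 0$ is exactly the hypothesis (and $(\sigma_1,\emptyseq)$ is always a relevant pair).

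For the inductive step, let $\sigma_2 = (I_2, a_2) \in \Sigma_2$ with $\sigma_2 \rele \sigma_1$, and write $\sigma'_2 \defeq \sigma(I_2)$ for its parent sequence. First I would record the auxiliary observation that relevance is preserved when passing to parent sequences: since $\sigma_2 \rele \sigma_1$, either $\sigma_1 = \emptyseq$ (in which case $\sigma_1 \rele \sigma'_2$ trivially) or $\sigma_1 = (I_1, a_1)$ with $I_1 \conn I_2$, and then a node of $I_1$ and a node of the information set of $\sigma'_2$ lie on a common root-to-leaf path, because the information set of $\sigma(I_2)$ sits above $I_2$ on every path reaching $I_2$; hence $I_1$ is connected to that information set and so $\sigma_1 \rele \sigma'_2$. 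In particular $(\sigma_1, \sigma'_2) \in \Sigma_1 \rele \Sigma_2$ and $\sigma'_2 \rele \sigma_1$, and since $\sigma'_2$ has strictly smaller depth than $\sigma_2$, the induction hypothesis yields $v[\sigma_1, \sigma'_2] = 0$.

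Now I would invoke the probability-mass-conserving constraint of $\cV$ associated with $I_2 \in \infos{2}$ and $\sigma_1 \in \Sigma_1$ — applicable precisely because $\sigma_1 \rele I_2$ (which holds, as $\sigma_1 \rele (I_2,a_2)$) — namely $\sum_{a \in A_{I_2}} v[\sigma_1, (I_2, a)] = v[\sigma_1, \sigma(I_2)] = v[\sigma_1, \sigma'_2] = 0$. Every summand is nonnegative since $\cV \subseteq \bbR_{\ge 0}^{|\Sigma_1 \rele \Sigma_2|}$, so each term vanishes; in particular $v[\sigma_1, \sigma_2] = 0$, which closes the induction. The only step that requires any care is the auxiliary claim that $\sigma_1 \rele (I_2,a_2)$ implies $\sigma_1 \rele \sigma(I_2)$ — a short lowest-common-ancestor argument on the game tree — and everything else is an immediate consequence of nonnegativity together with the mass-conservation equalities defining $\cV$.
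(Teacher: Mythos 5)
Your proof is correct and follows essentially the same route as the paper's: induction on the depth of $\sigma_2$, using the mass-conservation constraint at $I_2$ together with nonnegativity to push the zero down from $v[\sigma_1,\sigma(I_2)]$ to each $v[\sigma_1,(I_2,a)]$. The only difference is that you explicitly verify that relevance passes to the parent sequence (so the inductive hypothesis applies to $(\sigma_1,\sigma(I_2))$), a step the paper leaves implicit; your ancestor argument for it is sound.
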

\begin{proof}
  We prove the theorem by induction on the \emph{depth} of the sequences $\sigma_1$ and $\sigma_2$. The depth $\dep(\sigma)$ of a generic sequence $\sigma = (I,a) \in \Sigma_i$ of Player~$i$ is defined as the number of actions that Player $i$ plays on the path from the root of the tree down to action $a$ at information set $I$ included. Conventionally, we let the depth of the empty sequence be $0$.

    Take $\sigma_1 \in \Sigma_1$ such that $v[\sigma_1, \emptyseq] = 0$. For $\sigma_2$ of depth $0$ (that is, $\sigma_2 = \emptyseq$), clearly $v[\sigma_1, \sigma_2] = 0$. For the inductive step, suppose that $v[\sigma_1, \sigma_2] = 0$ for all $\sigma_2 \in \Sigma_2, \sigma_1 \rele \sigma_2$ such that $\dep(\sigma_2) \le d_2$. We will show that $v[\sigma_2, \sigma_2] = 0$ for $\dep(\sigma_2) \le d_2 + 1$. Indeed, let $(I, a') = \sigma_2 \rele \sigma_1$ of depth $d_2 + 1$. Since $\vec{v} \in \cV$, in particular the \vsfc{} $\sum_{a \in A_I} v[\sigma_1, (I, a)] = v[\sigma_1, \sigma(I)]$ must hold. The depth of $\sigma(I)$ is $d_2$, so by the inductive hypothesis, it must be $v[\sigma_1, \sigma(I)] = 0$, and therefore $\sum_{a \in A_I} v[\sigma_1, (I, a)] = 0$. But all entries of $\vec{v}$ are nonnegative, so it must be $v[\sigma_1, (I,a)] = 0$ for all $a \in A_I$, and in particular for $(I,a') = \sigma_2$. This completes the proof by induction.

    The proof for the second part is analogous.
\end{proof}

\begin{lemma}\label{lem:v product}
    Let $\vec{v} \in \cV$ have integer $\{0,1\}$ coordinates. Then, for all $(\sigma_1, \sigma_2) \in \Sigma_1 \rele \Sigma_2$, it holds that
    \[
        v[\sigma_1,\sigma_2] = v[\sigma_1, \emptyseq] \cdot v[\emptyseq, \sigma_2].
    \]
\end{lemma}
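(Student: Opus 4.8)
The plan is to prove the claimed identity by induction on the combined depth $d \defeq \dep(\sigma_1) + \dep(\sigma_2)$, where $\dep(\cdot)$ is the sequence depth already used in the proof of \cref{lem:v zero} (so $\dep(\emptyseq)=0$). Since every coordinate of $\vec{v}$ lies in $\{0,1\}$, the identity $v[\sigma_1,\sigma_2] = v[\sigma_1,\emptyseq]\cdot v[\emptyseq,\sigma_2]$ is equivalent to the statement that $v[\sigma_1,\sigma_2]=1$ holds precisely when both $v[\sigma_1,\emptyseq]=1$ and $v[\emptyseq,\sigma_2]=1$. The base case $d=0$ is immediate: then $\sigma_1=\sigma_2=\emptyseq$, and the identity is just the normalization constraint $v[\emptyseq,\emptyseq]=1$.

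For the inductive step I first reduce to a convenient shape. At least one of $\sigma_1,\sigma_2$ has positive depth, and since $\cV$, the relevance relation, and \cref{lem:v zero} are all symmetric in the two players, I may assume without loss of generality that $\sigma_1=(I_1,a_1)$. If $\sigma_2=\emptyseq$ the identity follows from $v[\emptyseq,\emptyseq]=1$, so I may also assume $\sigma_2=(I_2,a_2)$; relevance of $(\sigma_1,\sigma_2)$ then forces $I_1 \conn I_2$. The two tools I will use are: (a) the two probability-mass-conserving constraints of $\cV$ at the information set $I_1$, which are legal because $I_1 \rele \sigma_2$ and $I_1 \rele \emptyseq$,
\begin{align*}
\sum_{a\in A_{I_1}} v[(I_1,a),\sigma_2] &= v[\sigma(I_1),\sigma_2], \\
\sum_{a\in A_{I_1}} v[(I_1,a),\emptyseq] &= v[\sigma(I_1),\emptyseq],
\end{align*}
and (b) the induction hypothesis applied to the pair $(\sigma(I_1),\sigma_2)$, which is again relevant (the parent information set of $I_1$ stays connected to $I_2$ when $I_1 \conn I_2$) and has combined depth $d-1$, giving $v[\sigma(I_1),\sigma_2] = v[\sigma(I_1),\emptyseq]\cdot v[\emptyseq,\sigma_2]$.

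What remains is a short case analysis on the two $\{0,1\}$ values $v[\sigma(I_1),\emptyseq]$ and $v[\emptyseq,\sigma_2]$. If their product is $0$, then the first displayed constraint, whose right-hand side equals this vanishing product by (b), forces every $v[(I_1,a),\sigma_2]$ --- in particular $v[\sigma_1,\sigma_2]$ --- to be $0$, and a short use of the second constraint (or directly of the value $v[\emptyseq,\sigma_2]=0$) shows $v[\sigma_1,\emptyseq]\cdot v[\emptyseq,\sigma_2]=0$ as well, so both sides agree. If instead $v[\sigma(I_1),\emptyseq]=v[\emptyseq,\sigma_2]=1$, both displayed sums equal $1$, so each singles out exactly one action of $I_1$: say $a'$ is the action with $v[(I_1,a'),\sigma_2]=1$ and $a^*$ the one with $v[(I_1,a^*),\emptyseq]=1$, every other action contributing $0$ to the corresponding sum. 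I then apply \cref{lem:v zero} to the sequence $(I_1,a')$ and column $\sigma_2$ (valid because $I_2 \conn I_1$ gives $\sigma_2 \rele (I_1,a')$): since $v[(I_1,a'),\sigma_2]=1\neq 0$, it cannot be that $v[(I_1,a'),\emptyseq]=0$, hence $v[(I_1,a'),\emptyseq]=1$ and therefore $a'=a^*$. Consequently $v[(I_1,a),\sigma_2]=v[(I_1,a),\emptyseq]$ for every $a\in A_{I_1}$; taking $a=a_1$ and multiplying by $v[\emptyseq,\sigma_2]=1$ yields $v[\sigma_1,\sigma_2]=v[\sigma_1,\emptyseq]\cdot v[\emptyseq,\sigma_2]$, closing the induction.

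I expect the only genuinely delicate point to be this last step --- that the action of $I_1$ ``selected'' by the column $\sigma_2$ coincides with the one selected by the column $\emptyseq$. This is exactly where both hypotheses are essential: integrality of $\vec{v}$ guarantees that each mass-conservation constraint at $I_1$ has a single nonzero summand, and \cref{lem:v zero} rules out the two selected actions being different. Everything else is routine manipulation of the constraints defining $\cV$.
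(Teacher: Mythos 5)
Your proof is correct and follows essentially the same route as the paper's: induction on sequence depth, using \cref{lem:v zero} together with integrality and the probability-mass-conservation constraints to force each constraint at $I_1$ to have exactly one nonzero summand. The only (immaterial) difference is that you peel off one coordinate at a time---invoking the inductive hypothesis at $(\sigma(I_1),\sigma_2)$ and matching the action selected by the column $\sigma_2$ with the one selected by the column $\emptyseq$---whereas the paper descends both coordinates at once from $(\sigma(I),\sigma(J))$ and uses \cref{lem:v zero} to annihilate all but one term of a double sum; both arguments rest on the same two ingredients.
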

\begin{proof}
    We prove the theorem by induction on the depth of the sequences, similarly to \cref{lem:v zero}.

    The base case for the induction proof corresponds to the case where $\sigma_1$ and $\sigma_2$ both have depth $0$, that is, $\sigma_1 = \sigma_2 = \emptyseq$. In that case, the theorem is clearly true, because $v[\emptyseq,\emptyseq]= 1$ as part of the \vsfc{}s~\eqref{eq:vonS constraints}.

    Now, suppose that the statement holds as long as $\dep(\sigma_1),\dep(\sigma_2) \le d$. We will show that the statement will hold for any $(\sigma_1,\sigma_2) \in \Sigma_1 \rele \Sigma_2$ such that $\dep(\sigma_1), \dep(\sigma_2) \le d+1$. Indeed, consider $(\sigma_1,\sigma_2) \in \Sigma_1 \rele \Sigma_2$ such that $\dep(\sigma_1), \dep(\sigma_2) \le d+1$. If any of the sequences is the empty sequence, the statements holds trivially, so assume that neither is the empty sequence and in particular $\sigma_1 = (I,a)$, $\sigma_2 = (J,b)$. If $v[\sigma_1, \emptyseq] = 0$, then from \cref{lem:v zero} $v[\sigma_1, \sigma_2] = 0$ and the statement holds. Similarly, if $v[\emptyseq, \sigma_2] = 0$, then $v[\sigma_1, \sigma_2] = 0$, and the statement holds. Hence, the only remaining case given the integrality assumption on the coordinates of $\vec{v}$ is $v[\sigma_1,\emptyseq] = v[\emptyseq,\sigma_2] = 1$.

From the \vsfc{}s, $v[\sigma(I), \emptyseq] = \sum_{a'\in A_I} v[(I, a'), \emptyseq] = 1 + \sum_{a' \in A_I, a' \neq a} v[(I, a'), \emptyseq] \ge 1$. Hence, because all entries of $\vec{v}$ are in $\{0,1\}$, it must be $v[\sigma(I), \emptyseq] = 1$ and $v[(I,a'),\emptyseq] = 0$ for all $a' \in A_I, a' \neq a$. With a similar argument we conclude that $v[\emptyseq, \sigma(J)] = 1$ and $v[\emptyseq, (J,b')] = 0$ for all $b' \in A_J, b \neq b'$. Using the inductive hypothesis, $v[\sigma(I), \sigma(J)] = v[\sigma(I), \emptyseq]\cdot v[\emptyseq, \sigma(J)] = 1$.

Now, using the \vsfc{}s together with the equality $v[\sigma(I), \sigma(J)]  = 1$ we just proved, we conclude that
\begin{equation}\label{eq:lem5 step1}
    \sum_{a' \in A_I}\sum_{b' \in A_J} v[(I,a'),(J,b')] = 1.
\end{equation}
On the other hand, since $v[(I,a'), \emptyseq] = 0$ for all $a' \in A_I, a' \neq a$ and $v[\emptyseq, (J,b')] = 0$ for all $b' \in A_J, b' \neq b$, from \cref{lem:v zero} we have that
\begin{equation}\label{eq:lem5 step2}
    a' \neq a \lor b' \neq b \implies v[(I,a'), (J,b')] = 0.
\end{equation}
From \eqref{eq:lem5 step2} and \eqref{eq:lem5 step1}, we conclude that $v[(I,a), (J,b)] = v[\sigma_1, \sigma_2] = 1 = v[\sigma_1,\emptyseq] \cdot v[\emptyseq, \sigma_2]$, as we wanted to show.
\end{proof}

\thmvinxi*
\begin{proof} We prove the two implications separately.
    \begin{itemize}[nolistsep,itemsep=1mm,leftmargin=8mm]
        \item[($\Rightarrow$)] We start by proving that if $\Xi = \cV$, then all vertices of $\cV$ have integer $\{0,1\}$ coordinates.
            Since $\cV = \Xi$ by hypothesis, from \ref{lem:xi co} we
            can write
            \[
                \cV = \co\{f(\vec{1}_{(\pi_1, \pi_2)}) : \pi_1 \in \Pi_1, \pi_2 \in \Pi_2\}.
            \]
            So, to prove this direction it is enough to show that
            $f(\vec{1}_{(\pi_1,\pi_2)})$ has integer $\{0,1\}$ coordinates
            for all $(\pi_1,\pi_2)\in\Pi_1\times\Pi_2$. To see that, we use
            the definition~\eqref{eq:def f}: each entry in
            $f(\vec{1}_{(\pi_1,\pi_2)})$ is the sum of distinct entries of
            $\vec{1}_{(\pi_1,\pi_2)}$. Given that by definition
            $\vec{1}_{(\pi_1,\pi_2)}$ has exactly one entry with value $1$
            and $|\Pi_1\times\Pi_2|-1$ entries with value $0$, we conclude
            that all coordinates of $f(\vec{1}_{(\pi_1,\pi_2)})$ are in
            $\{0,1\}$.

        \item[($\Leftarrow$)] We now show that if all vertices of $\cV$
            have integer $\{0,1\}$ coordinates, then $\cV \subseteq \Xi$.
            This is enough, since the reverse inclusion,
            $\cV \supseteq \Xi$, is trivial and already
            known~\cite{Stengel08:Extensive}. Let
            $\{\vec{v}_1, \dots, \vec{v}_n\}$ be the vertices of $\cV$. To
            conclude that $\cV \subseteq \Xi$, we will prove that
            $\vec{v}_i \in \Xi$ for all $i=1,\dots, n$. This will be
            sufficient since both $\cV$ and $\Xi$ are convex.

            \vspace{1mm}
            Let $\vec{v} \in \{\vec{v}_1, \dots, \vec{v}_n\}$ be any vertex
            of $\cV$. By hypothesis, $v[\sigma_1, \sigma_2] \in \{0,1\}$
            for all $(\sigma_1,\sigma_2) \in \Sigma_1 \rele \Sigma_2$.
            Because $\vec{v}$ satisfies the \vsfc{}s and furthermore
            $\vec{v}$ has $\{0,1\}$ entries by hypothesis, the two vectors
            $\vec{q}_1,\vec{q}_2$ defined according to
            $\vec{q}_1[\sigma_1] = v[\sigma_1, \emptyseq]$
            ($\sigma_1 \in \Sigma_1)$ and
            $\vec{q}_2[\sigma_2] = v[\emptyseq, \sigma_2]$
            ($\sigma_2 \in \Sigma_2)$ are \emph{pure} sequence-form
            strategies. Now, let $\pi^*_1$ and $\pi^*_2$ be
            the reduced-normal form plans corresponding to $\vec{q}_1$ and
            $\vec{q}_2$, respectively. We will show that
            $\vec{v} = f(\vec{1}_{(\pi^*_1, \pi^*_2)})$, which will
            immediately imply that $\vec{v} \in \Xi$ using
            \cref{lem:xi co}.

            Since $\vec{1}_{(\pi^*_1,\pi^*_2)}$ has exactly one positive entry with value $1$ in the position corresponding to $(\pi_1^*, \pi_2^*)$, by definition of the linear map $f$, for any $(\sigma_1, \sigma_2) \in \Sigma_1 \rele\Sigma_2$,
            \begin{equation}
                f(\vec{1}_{(\pi^*_1,\pi^*_2)})[\sigma_1, \sigma_2] = \bbone[\sigma_1 \in \Pi_1(\sigma_1)]\cdot \bbone[\sigma_2 \in \Pi_2(\sigma_2)].
            \end{equation}
            So, using the known properties of pure sequence-form strategies, we obtain
            \begin{align*}
                f(\vec{1}_{(\pi^*_1,\pi^*_2)})[\sigma_1, \sigma_2] = q_1[\sigma_1] \cdot q_2[\sigma_2] = v[\sigma_1, \emptyseq]\cdot v[\emptyseq, \sigma_2] = v[\sigma_1, \sigma_2],
            \end{align*}
            where the last equality follows from \cref{lem:v product}. Since the equality holds for any $(\sigma_1, \sigma_2) \in \Sigma_1 \rele\Sigma_2$, we have that $\vec{v} = f(\vec{1}_{(\pi^*_1,\pi^*_2)})$.
            \qedhere
    \end{itemize}
\end{proof}

\lemvertscext*
\begin{proof}
    Take any point $\vec{z} \in \cX \ext^h \cY$. By definition of scaled, extension, there exist $\vec{x}\in\cX, \vec{y}\in\cY$ such that $\vec{z} = (\vec{x}, h(\vec{x})\vec{y})$.
    Since $\{\vec{x}_1,\dots,\vec{x}_n\}$ are the vertices of $\cX$, $\vec{x}$ can be written as a convex combination $\vec{x} = \sum_{i=1}^n \lambda_i \vec{x}_i$ where $(\lambda_1,\dots,\lambda_n)\in\Delta^n$. Similarly, $\vec{y} = \sum_{i=1}^m \mu_i \vec{y}_i$ for some $(\mu_1,\dots,\mu_m)\in\Delta^m$. Hence, using the hypothesis that $h$ is affine, we can write
\begin{align*}
    \vec{z} &= (\vec{x}, h(\vec{x})\vec{y})\\
            &= \mleft(\sum_{i=1}^n \lambda_i\vec{x}_i, h\mleft(\sum_{i=1}^n \lambda_i\vec{x}_i\mright)\sum_{j=1}^m\mu_j\vec{y}_j\mright)\\
            &= \mleft(\sum_{i=1}^n \lambda_i\vec{x}_i, \mleft(\sum_{i=1}^n \lambda_i h(\vec{x}_i)\mright)\sum_{j=1}^m\mu_j\vec{y}_j\mright)\\
            &= \sum_{i=1}^n\sum_{j=1}^m \lambda_i\mu_j (\vec{x}_i, h(\vec{x}_i)\vec{y}_j).
\end{align*}
Since $\lambda_i\mu_j \ge 0$ for all $i\in\{1,\dots,n\},j\in\{1,\dots,m\}$ and $\sum_{i=1}^n\sum_{j=1}^m \lambda_i\mu_j = (\sum_{i=1}^n \lambda_i)(\sum_{j=1}^m \mu_j) = 1$, we conclude that $\vec{z} \in \co\{(\vec{x}_i,h(\vec{x}_i)\vec{y}_j) : i\in\{1,\dots,n\},j\in\{1,\dots,m\}\}$. On the other hand, $(\vec{x}_i, h(\vec{x_i})\vec{y}_j) \in \cX\ext^h\cY$, so
\[
    \cX \ext^h \cY = \co\large\{(\vec{x}_i,h(\vec{x}_i)\vec{y}_j) : i\in\{1,\dots,n\}, j\in\{1,\dots,m\}\large\}.
\]
Since the vertices of a (nonempty) polytope are a (nonempty) subset of any convex basis for the polytope, the vertices of $\cX \ext^h \cY$ must be a nonempty subset of $\{(\vec{x}_i,h(\vec{x}_i)\vec{y}_j) : i\in\{1,\dots,n\}, j\in\{1,\dots,m\}\}$, which is the statement.
\end{proof}

\thmintegrality*
\begin{proof}
    We prove the statement by induction over the scaled-extension-based decomposition
    \[
        \cV = \{1\} \ext^{h_1} \cX_1 \ext^{h_2} \cdots \ext^{h_n} \cX_n.
    \]
    In particular, we will show that for all $k = 0, \dots, n$, the coordinates of the vertices of the polytope
    \[
        \cV_k = \{1\} \ext^{h_1} \cdots \ext^{h_k} \cX_k
    \]
    constructed by considering only the first $k$ scaled extensions in the decomposition are all integer. Since $\cV \subseteq [0,1]^{|\Sigma_1\rele\Sigma_2|}$ (\cref{rem:V in 01}), this immediately implies that each coordinate is in $\{0,1\}$.
    \begin{itemize}[nolistsep,itemsep=1mm,leftmargin=5mm]
      \item \textbf{Base case:} $k=0$. In this case, $\cV_0 = \{1\}$. The only vertex is $\{1\}$, which is integer. So, base case trivially holds.
      \item \textbf{Inductive step.} Suppose that the polytope $\cV_k$ ($k < n$) has integer vertices. We will show that the same holds for $\cV_{k+1}$. Clearly, $\cV_{k+1} = \cV_k \ext^{h_{k+1}} \cX_{k+1}$. From the properties of the structural decomposition, we know that $\cK_{k+1}$ is either the singleton $\{1\}$, or a probability simplex $\Delta^{s_{k+1}}$ for some appropriate dimension $s_{k+1}$. We break the analysis accordingly.
          \begin{itemize}
            \item If $\cX_{k+1} = \{1\}$, the scaled extension represents filling in a linearly-dependent entry in $\vec{v} \in \cV$ by summing already-filled-in entries. So, $h_{k+1}$ takes a partially-filled-in vector from $\cV_{k}$ and sums up some of its coordinates. Let $\vec{v}_1,\dots,\vec{v}_n$ be the vertices of $\cV_k$. Using \cref{lem:vertices of scext}, the vertices of $\cV_{k+1}$ are a subset of
                \begin{equation}\label{eq:sup 1}
                    \{(\vec{v}_i, h(\vec{v}_i)\cdot 1) : i = 1,\dots, n\}.
                \end{equation}
                Since by inductive hypothesis $\vec{v}_i$ have integer coordinates, and $h$ sums up some of them, $h(\vec{v})_i$ is integer for all $i = 1,\dots,n$. So, all of the vectors in~\eqref{eq:sup 1} have integer coordinates, and in particular this must be true of the vertices of $\cV_{k+1}$.
            \item If $\cX_{k+1} = \Delta^{s_{k+1}}$, the scaled extension represents the operation of partitioning an already-filled-in entry $v[\sigma,\tau]$ of $\cV_k$ into $s_i$ non-negative real values. The affine function $h_{k+1}$ extracts the entry $v[\sigma,\tau]$ from each vector $\vec{v} \in \cV_k$. Let $\vec{v}_1, \dots,\vec{v}_n$ be the vertices of $\cV_k$. The vertices of $\Delta^{s_{k+1}}$ are the canonical basis vectors $\{\vec{e}_1, \dots, \vec{e}_{s_{k+1}}\}$. From \cref{lem:vertices of scext}, the vertices of $\cV_{k+1}$ are a subset of
                \begin{align}
                    &\{(\vec{v}_i, h(\vec{v}_i) \vec{e}_j) : i = 1,\dots, n, j = 1, \dots, s_{k+1}\} \nonumber\\
                        &\hspace{1cm}= \{(\vec{v}_i, {v}_i[\sigma,\tau] \vec{e}_j) : i = 1,\dots, n, j = 1, \dots, s_{k+1}\}.\label{eq:sup delta}
                \end{align}
                Since by inductive hypothesis the vertices $\vec{v}_i$ have integer coordinates, $v_i[\sigma,\tau]$ is an integer. Since the canonical basis vector only have entries in $\{0,1\}$, all of the vectors in~\eqref{eq:sup delta} have integer coordinates. So, in particular, this must be true of the vertices of $\cV_{k+1}$.\qedhere
          \end{itemize}
    \end{itemize}
\end{proof}

    \section{Additional Experimental Results}\label{app:experiments}

In this section we present additional computational results. 
Specifically, we present results on how well algorithms can solve for EFCE (and thus also EFCCE and NFCCE since they are supsets of EFCE) after our new scaled-extension-based structural decomposition has been computed for the polytope of correlation plans using the algorithm that we presented in the body. 
The speed of that algorithm for computing the decomposition is extremely fast, as shown in the body both theoretically and experimentally. Here we report the performance of two leading algorithms for finding an approximate optimal EFCE after the decomposition algorithm has completed.
Specifically, we compare the performance of the regret-minimization method of~\citet{Farina19:Efficient} to that of the barrier algorithm for linear programming implemented by the Gurobi commercial linear programming solver, as described in the body of the paper. 
(On these problems, any linear programming solver could be used in principle, but simplex and dual simplex methods---even the ones in Gurobi---are prohibitively slow. Similarly, the subgradient descent method of~\citet{Farina19:Correlation} is known to be dominated by the regret-minimization method of~\citet{Farina19:Efficient}.)

Both algorithms are used to converge to a feasible EFCE---that is, no objective function was set---in the largest Goofspiel instance ($k=5$). Our implementation of the regret minimization method is single-threaded, while we allow Gurobi to use 30 threads. All experiments were conducted on a machine with 64 cores and 500GB of memory. Gurobi required roughly 200GB of memory, while the memory footprint of the regret-minimization algorithm was less than 2GB.

At all times, the regret-minimization algorithm produces \emph{feasible} correlation plans, that is, points that belong to $\Xi = \cV$. So, that algorithm's iterates' infeasibility is defined as how incentive-incompatible the computed correlation plan is, measured as the difference in value that each player would gain by optimally deviating from any recommendation at any information set in the game. In contrast, the barrier method does not guarantee that the correlation plan is primal feasible, that is, the correlation plans produced by the barrier algorithm might not be in $\Xi = \cV$. Therefore, for Gurobi, we measure infeasibility as the maximum between (i) the (maximum) violation of the constraints that define $\cV$, and (ii) the incentive-incompatibility of the iterate.

\cref{fig:goof5} shows the results. The regret minimization algorithm works better as an anytime algorithm and leads to lower infeasibility for most of the run. The barrier method needs significant time to preprocess before even the first iterates are found. After that it converges rapidly.

\begin{figure}[H]
  \centering
  \includegraphics[scale=.85]{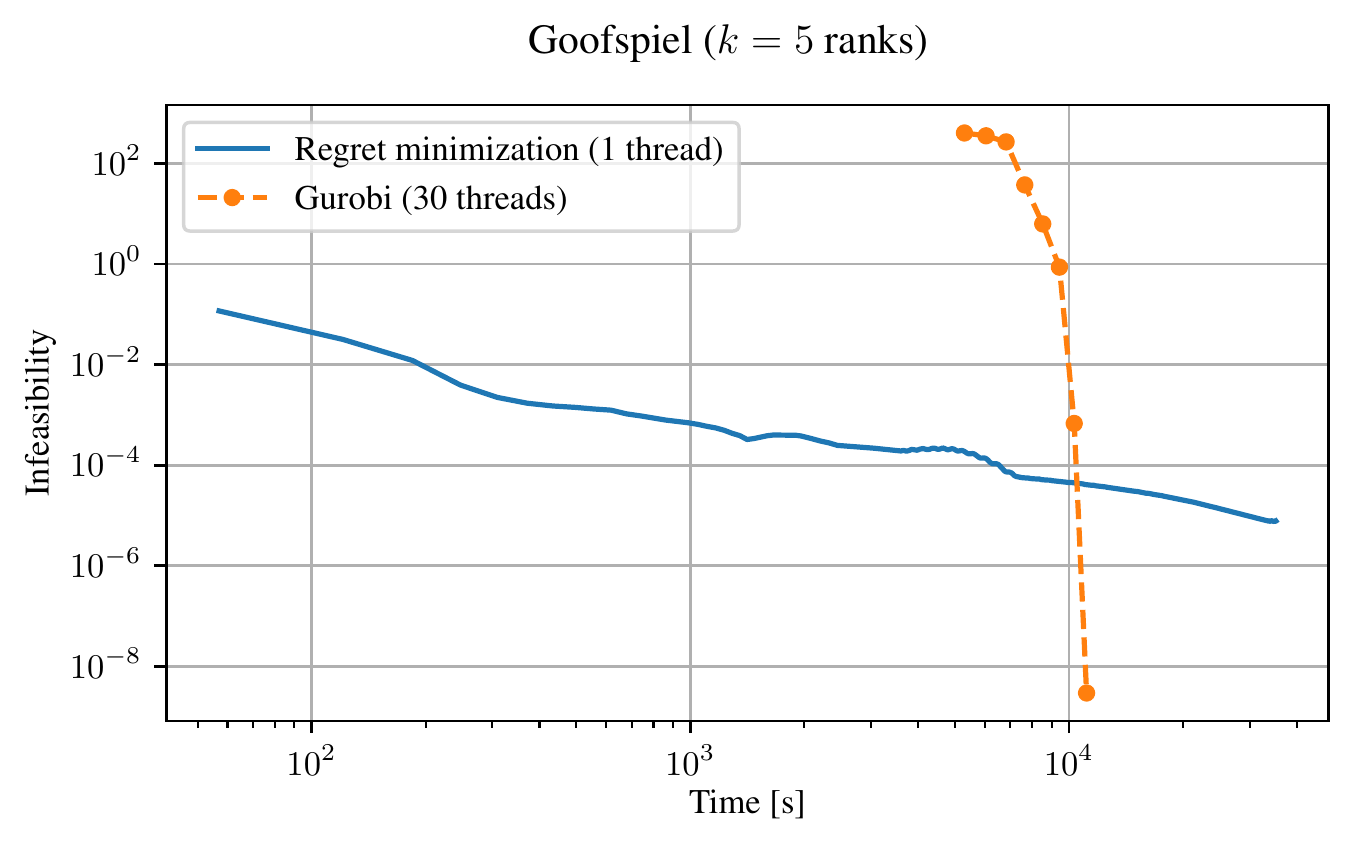}
  \caption{Performance of the regret minimization method of~\citet{Farina19:Efficient} compared to Gurobi's barrier method in the largest Goofspiel game ($k=5$).}\label{fig:goof5}
\end{figure} 
\fi
\end{document}